\documentclass{article}
\usepackage[normalem]{ulem}
\usepackage{amssymb,amsmath,amsfonts,bbm,eurosym,geometry,graphicx,caption,color,setspace,comment,footmisc,natbib,pdflscape,array,enumerate,nomencl,etoolbox,mathtools,booktabs,makecell,multirow,multicol,float}
\usepackage{subfigure}
\usepackage{booktabs}
\usepackage{authblk}
\usepackage{tikz}
\usetikzlibrary{decorations.pathreplacing}
\usepackage[ruled]{algorithm2e}
\usepackage[toc,page]{appendix}
\usepackage{accents}
\usepackage[thmmarks,amsmath,thref]{ntheorem}
\newtheorem{theorem}{Theorem}[section]
\newtheorem{definition}{Definition}[section]
\newtheorem{lemma}{Lemma}[section]

\newtheorem{assumption}{Assumption}[section]

\newtheorem{proposition}{Proposition}[section]

\newtheorem{remark}{Remark}[section]
\newtheorem{open problem}{Open Problem}
\theorembodyfont{\normalfont}

\theoremsymbol{{$\square$}}
\newtheorem*{proof}{Proof}

\newcommand{\bp}{b^{\prime}}

\numberwithin{equation}{section}

\usepackage[colorlinks=true,linkcolor=blue]{hyperref}
\hypersetup{%
  citecolor=blue
}

\begin{document}

\title{Optimal Random Taxation\thanks{
We thank Xiaodong Zhu for motivating us to examine the complete markets problem.
Shen thanks the support by the National Key R\&D Program of China with project number 2021YFA1001200, and the NSFC with grant number 12171013. Kubler thanks the SNF for financial support.}}
\author{Chengfeng Shen\footnote{Peking University. Email: \href{mailto:shencf1999@pku.edu.cn}{shencf1999@pku.edu.cn}} \and Felix K\"{u}bler\footnote{University of Zurich. Email: \href{mailto:fkubler@gmail.com}{fkubler@gmail.com}}}

		\newcommand\tbbint{{-\mkern -16mu\int}}
	\newcommand\tbint{{\mathchar '26\mkern -14mu\int}}
	\newcommand\dbbint{{-\mkern -19mu\int}}
	\newcommand\dbint{{\mathchar '26\mkern -18mu\int}}
	\newcommand\bint{
		{\mathchoice{\dbint}{\tbint}{\tbint}{\tbint}}
	}
	\newcommand\bbint{
		{\mathchoice{\dbbint}{\tbbint}{\tbbint}{\tbbint}}
	}
	\maketitle
\begin{abstract}
We consider the Ramsey problem of optimal government policy and give examples where lotteries are essential for the optimal solution. 
When markets are complete and the representative agent's coefficient of relative prudence is bounded, lotteries are typically not optimal.
For the incomplete markets case,
we prove analytically that lotteries improve welfare for some calibrations with CRRA utility, and we give a simple numerical example for this claim. 
When the agent's coefficient of relative risk aversion is unbounded for small consumption, lotteries typically improve welfare and can be used to approximate the first-best solution arbitrarily well in both the complete and the incomplete markets cases. However, in the complete markets case,
when debt constraints are not too tight and the representative agent is sufficiently patient, lotteries can be purified in the sense that there is a deterministic solution that gives the same value.
\end{abstract}

\noindent \textit{JEL classification}: \\
%
\noindent \textit{Keywords}: Recursive contracts, optimal fiscal policy, lotteries.\\

\clearpage

\section{Introduction}
 Following \cite{barro1979determination},  we consider the  classical question of how a government should optimally finance an exogenous stream of spending via distortionary taxes and debt. We consider the models from \cite{lucas1983optimal} and from \cite{aiyagari2002optimal}\footnote{
\cite{lucas1983optimal} were the first to consider the problem in  a full general-equilibrium ``Ramsey framework'': a benevolent government chooses labor tax paths to finance spending while respecting households' optimizing behavior. Crucially, they assume the government can issue a complete set of state-contingent claims.
\cite{aiyagari2002optimal} relax the complete-markets assumption, restricting the government to a single non-contingent bond.} but allow for randomized tax policies. In these models, the constraints that capture the optimal intertemporal behavior of households need not be concave, implying that the government faces a non-convex optimization problem.
 Non-convexities imply that deterministic optima may be dominated by randomized solutions\footnote{Throughout the paper, we use the terms ``lottery solution'' and ``randomized solution'' interchangeably to refer to the optimal, possibly random, solution of a nonlinear optimization problem.} (see, e.g., \cite{myerson1982optimal,prescott1984pareto,stiglitz1982self}).
In general, in the presence of non-convexities, the scope for improvement through randomization depends on the economic model, in particular on which constraints can be assumed to hold only in expectation.
\cite{shen2025recursive} show that
in an intertemporal setting with forward-looking constraints, randomization always relaxes those constraints even if within a given period all agents can make choices contingent on the outcome of the lottery. In this paper, we apply the theoretical results in \cite{shen2025recursive} to the Ramsey problem.

\cite{citanna2024taxspots} were  the first to establish that for many preference specifications, lotteries improve the optimal solution to the planning problems in \cite{lucas1983optimal} and in \cite{aiyagari2002optimal}. 
 They argue that if households are sufficiently prudent, then
adding uncertainty to disposable income increases savings, and risk-averse households
are more than compensated for the additional consumption uncertainty by an increased expected consumption. Although they raise the theoretical possibility of optimal lotteries in these models, it remains a quantitative question whether lotteries improve welfare in standard calibrations of the models and whether this improvement is economically significant.
In this paper, we tackle this question and show that optimal lotteries emerge naturally when markets are incomplete (i.e., in the version of the model from \cite{aiyagari2002optimal}), whereas in the complete markets version of the model they exist only for ``unusual'' preferences.

We first examine the model with complete markets (\citealp{lucas1983optimal}) and derive a simple condition on preferences that ensures that lotteries are never optimal. Constant relative risk aversion (CRRA) utility over consumption satisfies this condition.
 When the coefficient of relative risk aversion is unbounded for small consumption (the simplest example being shifted log utility, $u(c)=\log(c-\zeta)$, $\zeta>0$),  the government can, in the limit,
 finance its expenditures without ever taxing labor by introducing lotteries where with arbitrarily low probability consumption is pushed very close to its lower bound. Agents will be willing to buy government debt at any price to insure against this low-probability outcome. Although lotteries are optimal for this utility specification, there is another, deterministic solution that gives the same welfare when patience $ \beta $ is sufficiently large and the government does not face explicit debt constraints. The lottery solution can be ``purified'' (see, e.g., \citealp{fudenberg1991dispensability}, for a similar logic in the context of infinitely repeated games).

We then move to the model with incomplete markets and present an example with CRRA utility where lotteries improve welfare if initial debt is sufficiently high. We first illustrate the intuition in a two-period model: when entering the period with high debt, the government optimally decreases today's borrowing rate by introducing a low-probability \emph{bad state} with high labor taxes in the next period. This increases precautionary savings of the household and hence allows for lower taxes  in the first period. The simple two-period example is extended to a full-blown infinite-horizon example where we can demonstrate the importance of lotteries with CRRA utility.
As in the complete markets case, when utility over consumption exhibits unbounded relative risk aversion for small consumption and the lower bound on consumption can be arbitrarily small, lotteries are optimal, and the optimal lottery solution converges to the first-best.

Finally, we conduct a more quantitative analysis of lotteries in the incomplete markets model.
Building on the theoretical foundations developed in \cite{shen2025recursive}, we develop a simple algorithm to solve the functional equation that characterizes optimal lotteries and to construct an optimal lottery solution.
We then consider a large range of examples and show that in many of them optimal lotteries occur, while in many others, lotteries never improve welfare because they only occur in regions of the endogenous state space that are never visited in the optimal solution.
Although most of the analysis is conducted for CRRA utility over consumption, 
we point out that shifted log utility can lead to very large lotteries even if consumption is bounded away from its lower bound (i.e., on its domain the utility function exhibits bounded relative risk aversion).

The remainder of the paper is organized as follows. In Section~\ref{sec:model} we introduce the formal model and define lottery solutions. In Section~\ref{sec:cm} we analyze the model with complete markets, give sufficient conditions that rule out lotteries, show that with unbounded relative risk aversion lotteries approximate the first best, and discuss when optimal lotteries can be purified. In Section~\ref{sec:im} we provide a two-period and an analytical infinite-horizon example of optimal lotteries in a model with incomplete markets and extend the first-best result to this setting. 
In Section~\ref{sec:num} we consider a variety of numerical examples and show that the existence of optimal lotteries in this model depends on a variety of factors.
Section~\ref{sec:concl} concludes. In the Appendix, we collect the proofs, additional results and describe the computational method in detail.

\section{The Ramsey problem}\label{sec:model}
We consider a discrete-time, infinite-horizon economy with a government that maximizes the welfare of a representative agent who derives utility from consumption and leisure. The government faces Markovian expenditure shocks $ (g_t) $, which it has to finance by issuing debt or taxing labor. The government can announce a randomized tax policy where in each period taxes depend on the outcome of a public randomization device (lottery). After the lottery is conducted, the agents decide how much to work and save in the current period.\footnote{In the first period, lotteries can never improve welfare, as the agent's decisions are made after the lottery is realized.} Crucially, the savings decision will be determined by expectations about future lotteries.

We assume that government expenditures \(g_t\) lie in a finite set $\mathcal{G}\subset[\underline{g},\bar{g}]$, where \(0\le\underline{g}<\bar{g}<1\). The feasible consumption of an agent is \(c\ge\zeta\) for some \(\zeta\ge0\), and the utility of consuming \(c\) is \(u(c)\). The agent's feasible labor supply is \(0\le\ell\le1\), and the utility from supplying labor \(\ell\) is \(v(\ell)\). 
 In addition, the government is subject to constraints\footnote{These bounds are imposed for mathematical convenience, in particular to ensure an interior solution for the agent, a natural interpretation being that the government's tax choices are constrained away from both $-\infty$ and 1, which implicitly induces these bounds.} that ensure the agent's consumption and labor remain within certain ranges, namely \(c\ge \underline{c}\) and \(\ell\le \bar{\ell}\), where \(\underline{c}>\zeta\) and \(\bar{g}+\underline{c}\le\bar{\ell}<1\).
Initial expenditure and initial debt are given and denoted by \(g_0\) and \(b_{-1}\), respectively.

We consider a version of the model where the government has access to a complete set of Arrow securities (\citealp{lucas1983optimal}) and a version with a single one-period bond (\cite{aiyagari2002optimal}). In both models, a non-negative lump-sum transfer is available in each period.

\subsubsection*{Complete Markets}
When markets are complete and the government and the agent do not face explicit borrowing constraints\footnote{For most of the paper, we assume that implicit borrowing constraints are of the form $ \sup_t |b_t| < \infty $. For the incomplete markets case we will also examine explicit borrowing constraints.}, the sequential budget constraints can be combined into a single constraint, and the deterministic planner's problem can be written as follows  (for a detailed derivation see Appendix \ref{app:cm_derivation}; for technical details on the lottery solution see \cite{shen2025recursive})
  \begin{equation}  \label{eq:Lucas-S-d}
    \begin{aligned}    &\sup_{(\underline{c}+g_t\le \ell_t\le \bar{\ell})_{t=0}^\infty} \mathbb{E} \sum_{t=0}^{\infty} \beta^t \left(u(\ell_t-g_t)+v(\ell_t)\right) \\\mbox{ s.t. } &
     b_{-1} u'(\ell_0-g_0) \le \mathbb{E}_0 \sum_{t=0}^\infty \beta ^t (u'(\ell_t-g_t) (\ell_t-g_t) + v'(\ell_t) \ell_t).
\end{aligned}
\end{equation}
Consumption is pinned down by \(c_t=\ell_t-g_t\) and hence does not appear in \eqref{eq:Lucas-S-d}. 
To derive the lottery problem,
we first define the set \(L\) of feasible labor supply sequences \((\ell_t)_{t\ge 1}\) as
\[
L=\{(\ell_t)_{t\ge1}:\underline{c}+g_t\le \ell_t\le \bar{\ell},\,\forall t\ge 1\},
\]
and let
$\mathcal{P}(L)$ denote the space of probability measures on $L$.
In this formulation, $\ell_t$ is assumed to be a function of the history of exogenous shocks,
$g^t\in\mathcal{G}^t$; whenever possible, we simplify notation by writing $ \ell_t $.

If we allow for lotteries, the problem becomes
  \begin{equation}  \label{eq:Lucas-S-l}  
    \begin{aligned}    \sup_{\ell_0,P\in\mathcal{P}(L)} &u(\ell_0-g_0)+v(\ell_0)+\mathbb{E}_{g_0}^{(\ell_t) \sim P}  \sum_{t=1}^{\infty} \beta^t \left(u(\ell_t-g_t)+v(\ell_t)\right) \\\mbox{ s.t. } &
     b_{-1} u'(\ell_0-g_0) \le u'(\ell_0-g_0)(\ell_0-g_0)+v'(\ell_0)\ell_0+\mathbb{E}_{g_0}^{(\ell_t)\sim P}   \sum_{t=1}^\infty \beta ^t (u'(\ell_t-g_t) (\ell_t-g_t) + v'(\ell_t)\ell_t). 
\end{aligned}
\end{equation}
For any labor $\ell$ and consumption $c$, we define
$$
F(c,\ell)=u(c)+v(\ell),\quad G(c,\ell)=u'(c)c+v'(\ell)\ell.
$$
We introduce the Lagrangian of problem \eqref{eq:Lucas-S-d} as
\begin{equation}\label{eq:reduced_Ramsey_Lag}
    \mathcal{L}((\ell_t)_{t=0}^{\infty};\lambda)=\mathbb{E}_{g_0}\sum_{t=0}^{\infty}\beta^t\left[F(\ell_t-g_t,\ell_t)+\lambda G(\ell_t-g_t,\ell_t)\right]-\lambda b_{-1}u'(c_0).
\end{equation}
The following sup-inf formula of the Lagrangian \(\mathcal{L}\), which we refer to as the primal problem, represents the optimal value of \eqref{eq:Lucas-S-d}:
\begin{equation}\label{eq:primal_CM}
V^{p}=\sup_{(\ell_t)_{t=0}^\infty}\inf_{\lambda\ge 0}\mathbb{E}_{g_0}\sum_{t=0}^{\infty}\beta^t\left[F(\ell_t-g_t,\ell_t)+\lambda G(\ell_t-g_t,\ell_t)\right]-\lambda b_{-1}u'(c_0).
\end{equation}
By interchanging \(\sup_{(\ell_t)_{t=1}^{\infty}}\) and \(\inf_{\lambda\ge 0}\), we obtain the sup-inf-sup formula of the Lagrangian \(\mathcal{L}\), which we refer to as the dual problem:
\begin{equation}\label{eq:expost_CM}
V^{d}=\sup_{\ell_0}\inf_{\lambda\ge 0}\sup_{(\ell_t)_{t=1}^\infty}\mathbb{E}_{g_0}\sum_{t=0}^{\infty}\beta^t\left[F(\ell_t-g_t,\ell_t)+\lambda G(\ell_t-g_t,\ell_t)\right]-\lambda b_{-1}u'(c_0).
\end{equation}
By the minimax theorem, $V^d\ge V^p$. In Appendix \ref{app:regu}, we impose some regularity assumptions to ensure that the suprema and infima in \eqref{eq:expost_CM} are attained, i.e., the \(\sup\) and \(\inf\) can be replaced by \(\max\) and \(\min\).  It follows from \cite{shen2025recursive} that
the optimal value of the dual problem \eqref{eq:expost_CM} is equal to the optimal value of the lottery problem \eqref{eq:Lucas-S-l}.

\subsubsection*{Bond-only economy}
 When markets are incomplete, we can write the deterministic problem as
  \begin{equation}  \label{eq:Ramsey-d}
    \begin{aligned}    &\sup_{(\underline{c}+g_t\le \ell_t\le \bar{\ell})_{t=0}^{\infty}} \mathbb{E}_{0} \sum_{t=0}^{\infty} \beta^t \left(u(\ell_t-g_t)+v(\ell_t)\right) \\\mbox{ s.t. } &\beta b_t \mathbb{E}_tu'(\ell_{t+1}-g_{t+1}) + u'(\ell_t-g_t)(\ell_t-g_t)+v'(\ell_t)\ell_t\ge b_{t-1}u'(\ell_t-g_t) \ \forall t\\
     & \sup_t |b_t| < \infty.
\end{aligned}
\end{equation}
To be consistent with the complete markets formulation, we assume that neither the government nor the agent faces explicit borrowing constraints. For the case of incomplete markets, this assumption is not necessarily realistic.
\cite{feng2026ramsey} analyze some of its implications.
As in \cite{aiyagari2002optimal}, we assume in our numerical examples that the admissible bond holdings of the government lie in a closed interval, $ b_t \in [\underline{b},\overline{b}]$, and satisfy the agent's natural borrowing constraint.

Similarly to above,
we first define the set \(L\) of feasible labor supply and bond sequences \((\ell_t,b_t)_{t\ge 0}\) as
\[
L=\{(\ell_t,b_t)_{t\ge0}:\underline{c}+g_t\le \ell_t \le \bar{\ell},
b_t \in  [\underline{b},\overline{b}], \  
\forall t\ge 0\},
\]
and let
$\mathcal{P}(L)$ denote the space of probability measures on $L$.\footnote{For notational convenience, we allow a lottery in the first period as well. However, since the constraint described later must be satisfied after the realization of the first-period lottery, the optimum will not involve a lottery in the first period.} The lottery problem is
  \begin{equation}  \label{eq:Ramsey-l}
    \begin{aligned}  \sup_{P\in\mathcal{P}(L)} & \mathbb{E}_{0}^{(\ell_t,b_t) \sim P}  \sum_{t=0}^{\infty} \beta^t \left(u(\ell_t-g_t)+v(\ell_t)\right) \\\mbox{ s.t. } &\beta b_t \mathbb{E}_{t}^{(\ell_t,b_t)\sim P}u'(\ell_{t+1}-g_{t+1}) + u'(\ell_t-g_t)(\ell_t-g_t)+v'(\ell_t)\ell_t\ge b_{t-1}u'(\ell_t-g_t)  \ \forall t.
\end{aligned}
\end{equation}
As above, $ \ell_t $ and $ b_t $ should be read as $ \ell(g^t) $ and $ b(g^t)$. In the lottery problem, it is assumed that 
each constraint is satisfied for any \(g^t\in\mathcal{G}^t\), and also for the realized labor supplies and bonds, \(\ell_0(g_0),\,b_0(g_0),\,\dots,\ell_t(g^t),\,b_t(g^t)\), with the expectation taken with respect to the conditional distribution \(P(\cdot\mid \ell_0(g_0),b_0(g_0),\dots,\ell_t(g^t),b_t(g^t))\).
The optimal value is given by the solution to the following functional equation
(see \cite{shen2025recursive}). 
The state is $(b,\mu,g)$. Define
\begin{equation}\label{eq:mafu}
F(\gamma,\ell;b,\mu,g)=
u(\ell-g)+v(\ell)+\mu\,b\,u'(\ell-g)+\gamma\big[u'(\ell-g)(\ell-g-b)+\ell\,v'(\ell)\big],
\end{equation}
and let
\begin{equation} \label{eq:fe}  W(b,\mu,g) = \sup_{l,b'} \inf_{\gamma \ge 0} F(\gamma,l;b,\mu,g)+\beta\,\mathbb{E}\big[W(b',\gamma,g')\mid g\big].
\end{equation}
$W(b,\mu,g)$ then represents the optimal value of 
  \begin{equation}  \label{eq:Ramsey-l-mu}
    \begin{aligned}  \sup_{P\in\mathcal{P}(L)} & \mathbb{E}_{0}^{(\ell_t,b_t) \sim P} \mu b_{-1}u'(\ell_0-g_0)+  \sum_{t=0}^{\infty} \beta^t \left(u(\ell_t-g_t)+v(\ell_t)\right) \\\mbox{ s.t. } &\beta b_t \mathbb{E}_{t}^{(\ell_t,b_t)\sim P}u'(\ell_{t+1}-g_{t+1}) + u'(\ell_t-g_t)(\ell_t-g_t)+v'(\ell_t)\ell_t\ge b_{t-1}u'(\ell_t-g_t)  \ \forall t.
\end{aligned}
\end{equation}

\section{Lotteries: Complete Markets}\label{sec:cm}
We first examine the model with complete financial markets and show that while there are examples in which the optimal solution employs randomization, this is not the case for typical parameter choices.
The basic intuition for why random taxation may improve welfare is that the government can artificially create a small positive probability of some high-tax state; precautionary saving motives then lower the interest rate and allow it to roll over its debt at lower cost (see also \cite{citanna2024taxspots}). However, we show that two distinct mechanisms prevent lotteries from improving welfare. First, when the relative prudence of the agent's utility function is not too high, the precautionary saving motive is not strong enough to generate a welfare improvement. In particular, with isoelastic utility, lotteries are never optimal, independently of the coefficient of relative risk aversion. 

If relative risk aversion is unbounded as $ c \rightarrow \zeta $, the opposite picture emerges: not only is it optimal for the government to randomize over future taxes, it can also approximate the first-best solution arbitrarily well if the lower bound on consumption becomes small, $ \underline{c} \rightarrow \zeta $.

However, when the discount factor is sufficiently large, an optimal lottery can always be purified in the sense that there is a deterministic sequence of taxes that achieves the same welfare. The intuition is that the government can commit to a high-tax state deterministically at some future date-event and thereby achieve the same effect as creating a positive probability of a high-tax outcome next period.

\subsection{A class of economies without optimal lotteries}
In this subsection, we introduce conditions under which a lottery does not improve welfare.
We make the following assumptions on utility.
\begin{assumption}\label{ass:v}
The utility functions for labor and consumption satisfy
\begin{enumerate}
\item\label{enum_assv_1}
    The utility function for labor, $v(\ell)$, satisfies $v'(\ell)<0,\,v''(\ell)<0,$ and $v'''(\ell)<0$. The utility function for consumption, $u(c)$, satisfies $u'(c)>0$ and $u''(c)<0$.
\item\label{enum_assv_2}     There exists $M>0$ such that the relative prudence of $u$ is bounded above by $M$, that is,
$$
-\frac{u'''(c)c}{u''(c)}\le M \quad \text{for all } c\in(0,1].
$$
    \end{enumerate}
    \end{assumption}
    
    Many utility functions commonly used to model labor supply satisfy part one of the condition (for instance, the CRRA type
$v(\ell)=\psi \frac{(1-\ell)^{1-\sigma}}{1-\sigma}$ $(\psi>0,\ \sigma>0)$
, or the constant Frisch elasticity type
$v(\ell)=-\frac{\ell^{1+1/\eta}}{1+1/\eta}$ with $0<\eta<\frac{1}{2}$).
Boundedness of relative prudence is satisfied for almost all utility functions used in practice.
When the bound $M$ is sufficiently small, \eqref{eq:Lucas-S-d} is a convex programming problem in $(\ell_t)_{t\ge 1}$ and has a unique solution that rules out lotteries. One can verify that its Lagrangian is concave for all $ \lambda \ge 0 $.
\begin{proposition}\label{prop:smallM}
    Under Assumption \ref{ass:v}  with $M\le 2$, there is no lottery improvement.
\end{proposition}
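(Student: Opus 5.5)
The plan is to show that for every $\lambda\ge 0$ the period Lagrangian integrand
\[
h_\lambda(\ell;g)=F(\ell-g,\ell)+\lambda G(\ell-g,\ell)=u(\ell-g)+v(\ell)+\lambda\big[u'(\ell-g)(\ell-g)+v'(\ell)\ell\big]
\]
is concave in $\ell$ on $[\underline{c}+g,\bar{\ell}]$. Once this holds, the problem \eqref{eq:Lucas-S-d} with $\ell_0$ fixed is a concave maximization over $(\ell_t)_{t\ge1}$. Its objective is concave, and the constraint function $\mathbb{E}_0\sum_{t\ge1}\beta^t G(\ell_t-g_t,\ell_t)$ is concave as well. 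A standard convex-duality (Slater) argument then shows that the inner $\inf_\lambda\sup_{(\ell_t)_{t\ge1}}$ in \eqref{eq:expost_CM} equals the $\sup_{(\ell_t)_{t\ge1}}\inf_\lambda$. Hence $V^d=V^p$. Since \cite{shen2025recursive} identifies $V^d$ with the lottery value \eqref{eq:Lucas-S-l}, the lottery problem and the deterministic problem share the same value, which is the claim. An alternative route avoids duality: replace a lottery $P$ by its mean path $\bar\ell_t=\mathbb{E}^P\ell_t$. By Jensen, concavity of $F$ and of $G$ along $\ell\mapsto(\ell-g,\ell)$ means the mean path weakly raises the objective and keeps the implementability constraint satisfied. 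I would present this Jensen argument as the main proof because it is cleaner.

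The core computation is the concavity of $F$ and $G$ as functions of $\ell$, with $c=\ell-g$. For $F$ we have $u''(c)+v''(\ell)<0$ directly from Assumption \ref{ass:v}. For $G$ the second derivative is
\[
\frac{d^2}{d\ell^2}G=\big(u'(c)c\big)''+\big(v'(\ell)\ell\big)''=2u''(c)+u'''(c)c+2v''(\ell)+v'''(\ell)\ell .
\]
The labor terms are negative because $v''<0$, $v'''<0$ and $\ell\ge0$. The consumption terms equal $u''(c)\big[2+u'''(c)c/u''(c)\big]$. Since $u''<0$, this is nonpositive exactly when $-u'''(c)c/u''(c)\le 2$, and that is guaranteed by $M\le 2$ for $c\in(0,1]$. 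Feasible consumption satisfies $c\le\bar\ell<1$, so this range is covered. It follows that $G$ is concave, strictly so because of the labor part. Then $h_\lambda$ is strictly concave for every $\lambda\ge0$.

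The step I expect to need care is making the Jensen and duality argument rigorous in the infinite-horizon, history-dependent setting. The conditional mean $\bar\ell_t(g^t)=\mathbb{E}^P[\ell_t(g^t)]$ must stay in the feasible box $[\underline{c}+g_t,\bar\ell]$, which holds by convexity of the interval. Interchanging expectation with the infinite discounted sum must be justified, which works by boundedness of $F$ and $G$ on the compact box together with dominated convergence. Finally, the time-$0$ term $b_{-1}u'(\ell_0-g_0)$ must be handled. This is unproblematic because $\ell_0$ is chosen before any lottery is realized (as noted in the footnote), so it is held fixed when averaging. Strict concavity additionally yields uniqueness of the optimal path, so any nondegenerate lottery is strictly dominated, and there is no lottery improvement.
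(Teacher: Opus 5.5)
Your proof is correct, and your secondary (duality) route is essentially the paper's. The paper computes $G''<0$ from $M\le 2$ and Assumption \ref{ass:v}. It then applies Sion's minimax theorem on the compact convex set $L$ to swap $\inf_{\lambda\ge 0}$ and $\max_{(\ell_t)_{t\ge1}}$ for each fixed $\ell_0$, which gives $V^d=V^p$. Finally it uses the identification of $V^d$ with the lottery value from \cite{shen2025recursive}. No Slater condition is needed for that swap, since $L$ is compact and $\mathcal{L}$ is continuous, concave in the path and affine in $\lambda$.

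Your main route is genuinely different. You replace a lottery $P$ by its mean path $\bar\ell_t(g^t)=\mathbb{E}^P\ell_t(g^t)$ and apply Jensen node by node to the concave maps $\ell\mapsto F(\ell-g,\ell)$ and $\ell\mapsto G(\ell-g,\ell)$. This has several advantages:
\begin{itemize}
\item It works directly on the lottery formulation \eqref{eq:Lucas-S-l}.
\item It needs only dominated convergence for bounded integrands on the box.
\item It avoids both the minimax theorem and the external equivalence result.
\item It gives an explicit dominating deterministic policy, with strict improvement whenever the lottery is nondegenerate at some positive-probability history.
\end{itemize}

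What the paper's Lagrangian route buys is uniformity with Proposition \ref{prop:largeM}. When $M>2$, $G$ is no longer concave, so averaging can violate the implementability constraint and your Jensen argument breaks. One only has concavity of $F+\lambda G$ for $0\le\lambda\le 1/(M-2)$, and that is exactly what the sup-inf-sup framework exploits once the optimal multiplier is shown to be small.
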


When $M>2$, the following lemma shows that, when $\lambda$ is small, the Lagrangian is still concave.
\begin{lemma}\label{lem:M>2}
    Under Assumption \ref{ass:v} with $M>2$, $F(c_t,\ell_t)+\lambda G(c_t,\ell_t)$ is concave when $0\le \lambda\le \frac{1}{M-2}$.
\end{lemma}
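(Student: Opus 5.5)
Since $c_t=\ell_t-g_t$, the expression $F(c_t,\ell_t)+\lambda G(c_t,\ell_t)$ is, for fixed $g_t$, a function of the single variable $\ell_t$ (the same reduction is used in Proposition \ref{prop:smallM}). I would therefore prove concavity by showing that it is the sum of two concave functions: one in $c$, and one in $\ell$. Write
\[
F+\lambda G=\big[u(c)+\lambda u'(c)c\big]+\big[v(\ell)+\lambda v'(\ell)\ell\big]=:h_\lambda(c)+k_\lambda(\ell).
\]
Composing with the affine map $\ell\mapsto \ell-g$ preserves concavity, so it suffices to show that $h_\lambda''\le 0$ on $(0,1]$ and $k_\lambda''\le0$ on the feasible labor range.

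The labor part is immediate. We have $k_\lambda'=v'+\lambda(v'+v''\ell)$, hence
\[
k_\lambda''=(1+2\lambda)v''(\ell)+\lambda v'''(\ell)\ell .
\]
By Assumption \ref{ass:v}(\ref{enum_assv_1}), $v''<0$ and $v'''<0$, and $\ell\ge0$, so $k_\lambda''<0$ for every $\lambda\ge0$.

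The consumption part is where $M$ enters. Differentiating $u(c)+\lambda u'(c)c$ twice gives
\[
h_\lambda''(c)=u''(c)+\lambda\big(2u''(c)+u'''(c)c\big)=u''(c)\Big[1+2\lambda-\lambda\Big(-\tfrac{u'''(c)c}{u''(c)}\Big)\Big].
\]
Since $u''<0$, we need the bracket to be nonnegative. By Assumption \ref{ass:v}(\ref{enum_assv_2}), relative prudence is at most $M$, so the bracket is at least $1+2\lambda-\lambda M=1-\lambda(M-2)$. This is nonnegative exactly when $\lambda\le 1/(M-2)$, which is the hypothesis.

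The step needing care is the sign manipulation in $h_\lambda''$. The bound in Assumption \ref{ass:v}(\ref{enum_assv_2}) is one-sided (an upper bound only), and it must be used correctly with $u''<0$ when factoring. Note that if $u'''\le0$, relative prudence is nonpositive and the bracket is even larger, so no lower bound is needed. One should also check that the relevant consumptions lie in $(0,1]$ where the bound applies. This holds because $c=\ell-g\le\bar\ell<1$ and $c\ge\underline c>\zeta\ge0$. Combining the two parts, $F+\lambda G$ is concave in $\ell_t$ for $0\le\lambda\le 1/(M-2)$.
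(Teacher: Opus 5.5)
Your proposal is correct and is essentially the paper's argument. Both split the second derivative into a consumption part, bounded using the relative-prudence bound $M$, and a labor part, which is negative because $v''<0$ and $v'''<0$. Your factoring out of $u''$ (rather than the paper's $\lambda(-u'')$) is a harmless variant that also avoids the paper's division by $\lambda$ when $\lambda=0$.
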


The following assumption\footnote{The intuition for this assumption is that, for a sufficiently small Lagrange multiplier such that the Lagrangian is concave, the constraint is already nonbinding.} can be easily checked and turns out to be sufficient for the optimal $ \lambda $ to be sufficiently small.
\begin{assumption}\label{ass:M>2}
Let $\ell_0^*$ be the optimal labor supply in the first period, i.e.,
$$
\ell_0^*\in\arg\max_{\ell_0\in L_0}\left[\min_{\lambda\ge 0}\max_{(\ell_t)_{t=1}^{\infty}\in L}\mathcal{L}(\ell_0,(\ell_t)_{t=1}^{\infty};\lambda)\right].
$$
Let $0\le \lambda^*\le \frac{1}{M-2}$, and $(\ell^{\lambda^*}_t)_{t=1}^{\infty}$ be the unique  maximizer\footnote{The maximizer is unique according to Lemma \ref{lem:M>2}.} of 
$
\mathbb{E}_{g_0}\sum_{t=1}^{\infty}\beta^t\left[F(c_t,\ell_t)+\lambda^* G(c_t,\ell_t)\right].
$
The inequality
\begin{equation}\label{eq:M>2cond}
    G(c_0^*,\ell_0^*)+\mathbb{E}_{g_0}\sum_{t=1}^{\infty}\beta^tG(c_t^{\lambda^*},\ell_t^{\lambda^*})> b_0u'(c_0^*)
\end{equation}
is satisfied.
\end{assumption}

Under this assumption, we can prove that the optimal $ \lambda $ is sufficiently small for the problem to be concave. The following result formalizes this.
\begin{proposition}\label{prop:largeM}
     Under Assumption \ref{ass:regu}, Assumption \ref{ass:v} with $M> 2$, and Assumption \ref{ass:M>2}, there is no lottery improvement.
\end{proposition}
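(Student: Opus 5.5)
Since the lottery value equals the dual value $V^d$ in \eqref{eq:expost_CM} and always $V^d\ge V^p$, it suffices to show $V^d\le V^p$. Fix the optimal first-period labor $\ell_0^*$ from Assumption \ref{ass:M>2}; Assumption \ref{ass:regu} guarantees that the inner minimum over $\lambda$ and the maximum over $(\ell_t)_{t\ge1}$ are attained. Define the dual function
\[
D(\lambda)=\max_{(\ell_t)_{t\ge1}\in L}\mathcal{L}(\ell_0^*,(\ell_t)_{t\ge1};\lambda),
\]
which is convex in $\lambda$ as a supremum of affine functions. Let $\hat\lambda$ be a minimizer. The plan is to show $\hat\lambda\le \frac{1}{M-2}$, so that the saddle point lies in the region where Lemma \ref{lem:M>2} gives concavity.

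\emph{Step 1: localize the dual minimizer.} The derivative (or any subgradient) of $D$ at $\lambda^*$ is given by the envelope theorem as
\[
D'(\lambda^*)=G(c_0^*,\ell_0^*)+\mathbb{E}_{g_0}\sum_{t\ge1}\beta^tG(c_t^{\lambda^*},\ell_t^{\lambda^*})-b_{-1}u'(c_0^*),
\]
using that the maximizer at $\lambda^*$ is unique by Lemma \ref{lem:M>2}, so $D$ is differentiable there (Danskin). By \eqref{eq:M>2cond} this is strictly positive, so by convexity $D(\lambda)>D(\lambda^*)$ for all $\lambda>\lambda^*$. Hence every minimizer satisfies $\hat\lambda\le\lambda^*\le\frac{1}{M-2}$. (If the paper's $b_0$ in \eqref{eq:M>2cond} is meant as $b_{-1}$, I read it that way.)

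\emph{Step 2: a saddle point in the concave region.} At $\hat\lambda\in[0,\frac{1}{M-2}]$, Lemma \ref{lem:M>2} makes each period's summand $F+\hat\lambda G$ strictly concave, so $\mathcal{L}(\ell_0^*,\cdot;\hat\lambda)$ has a unique maximizer $(\hat\ell_t)$. First-order optimality of $\hat\lambda$ for the convex $D$, with $D$ differentiable at $\hat\lambda$ because the maximizer is unique, gives complementary slackness: either $\hat\lambda=0$ and the budget slack $G(c_0^*,\ell_0^*)+\mathbb{E}\sum\beta^tG(\hat c_t,\hat\ell_t)-b_{-1}u'(c_0^*)\ge0$, or $\hat\lambda>0$ and the slack is zero. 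In both cases $(\ell_0^*,(\hat\ell_t))$ is feasible for \eqref{eq:Lucas-S-d} and attains
\[
\sum\beta^tF(\hat c_t,\hat\ell_t)=\mathcal{L}(\ell_0^*,(\hat\ell_t);\hat\lambda)=D(\hat\lambda)=V^d .
\]
Thus $V^p\ge V^d$, so the deterministic sequence achieves the lottery value and there is no lottery improvement.

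The main obstacle is the differentiability and envelope step. It requires continuity of $\mathcal{L}$ in $(\ell_t)$ under a topology (product topology with discounting) in which $L$ is compact, together with uniform boundedness of $G$ on $L$; this holds because $c\ge\underline c>0$ and $\ell\le\bar\ell<1$. I would invoke Assumption \ref{ass:regu} for these facts. A secondary point is the case where $D$ is minimized at the boundary $\hat\lambda=0$, which is handled above through the nonnegativity of the slack.
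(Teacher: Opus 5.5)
Your proof is correct and follows the paper's route: you locate the dual minimizer in $[0,\lambda^*]$ from the positive slack at $\ell^{\lambda^*}$, then use the uniqueness supplied by Lemma \ref{lem:M>2}; the paper obtains the first step simply by plugging $\ell^{\lambda^*}$ into the Lagrangian, which is affine in $\lambda$, so Danskin is not actually needed there. The only real difference is that your Danskin and complementary-slackness argument in Step 2 proves directly what the paper delegates to Corollary 2.1 of \cite{shen2025lagrangian}, which makes your version more self-contained.
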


\subsubsection{Example: CRRA utility}
To illustrate Assumption \ref{ass:M>2} we consider CRRA utility functions of the form
\[
u(c)=\frac{c^{1-\gamma}}{1-\gamma}\quad(\gamma>0),
\]
and assume that the labor utility function satisfies the first part of Assumption \ref{ass:v}. 

Lotteries are never optimal when $\underline{c}$ is sufficiently small.
For CRRA utility functions, the relative prudence and relative risk aversion are constant, i.e.,
$$
\frac{-u'''(c_t)c_t}{u''(c_t)}=\gamma+1,\quad\frac{-u''(c_t)c_t}{u'(c_t)}=\gamma.
$$

When $\gamma\le 1$, Assumption \ref{ass:v} with $M\le 2$ holds and Proposition \ref{prop:smallM} applies. 

When $\gamma>1$, let $(\ell^{\text{ref}}_t)_{t=0}^{\infty}$ be some feasible deterministic solution. Since $u(c)\rightarrow-\infty$ when $c\rightarrow0$ and since the optimal lottery should attain at least the utility of $(\ell^{\text{ref}}_t)_{t=0}^{\infty}$, we know that when $\underline{c}$ is sufficiently small, there exists some $\underline{\underline{c}}>0$ that is independent of $\underline{c}$, s.t.
$$
\arg\max_{\ell_0\in [\underline{c}+g_0,\bar{\ell}]}\left[\min_{\lambda\ge 0}\max_{(\ell_t)_{t=1}^\infty}\mathcal{L}(\ell_0,(\ell_t)_{t=1}^{\infty};\lambda)\right]\subset\left[\underline{\underline{c}}+g_0,\bar{\ell}\right], 
$$
which ensures that Assumption~\ref{ass:regu} holds when $\underline{c}$ is sufficiently small.
When $\lambda^*=\frac{1}{\gamma-1}$, the maximization problem for $(\ell_t)_{t=1}^{\infty}$ becomes
\begin{equation}\label{eq:max_l1_lambdastar}
\begin{aligned}
&\max_{(\ell_t)_{t=1}^{\infty}}\mathbb{E}_{g_0}\sum_{t=1}^{\infty}\beta^t\left[F(c_t,\ell_t)+\lambda^*G(c_t,\ell_t)\right]\\
=&\max_{(\ell_t)_{t=1}^{\infty}}\mathbb{E}_{g_0}\sum_{t=1}^{\infty}\beta^t\left[\underbrace{\left(u(c_t)+\lambda^*u'(c_t)c_t\right)}_{U(c_t)}+\underbrace{\left(v(\ell_t)+\lambda^*v'(\ell_t)\ell_t\right)}_{V(\ell_t)}\right]
\end{aligned}
\end{equation}
By straightforward computation, we have
$$
U'(c_t)=\lambda^*u''(c_t)c_t+(1+\lambda^*)u'(c_t)=\lambda^*u'(c_t)\left(\gamma-\frac{-u''(c_t)c_t}{u'(c_t)}\right)=0,
$$
and
$$
V'(\ell_t)=\lambda^*v''(\ell_t)\ell_t+(1+\lambda^*)v'(\ell_t)<0.
$$
Therefore, $U(c_t)+V(\ell_t)$ is decreasing in $\ell_t$, and the maximizer of problem \eqref{eq:max_l1_lambdastar} $(\ell^{\lambda^*}_t)_{t=1}^{\infty}$ satisfies
$$
\ell_t^{\lambda^*}(g^t)=\underline{c}+g_t,\quad\forall t\ge 1,\,g^t\in\mathcal{G}^t.
$$
When $\underline{c}\rightarrow0+$, we have
$$
G(c_0^*,\ell_0^*)+\mathbb{E}_{g_0}\sum_{t=1}^{\infty}\beta^tG(c_t^{\lambda^*},\ell_t^{\lambda^*})- b_0u'(c_0^*)\rightarrow\infty
$$
uniformly for $\ell_0^*$ in $[\underline{\underline{c}}+g_0,\bar{\ell}]$ and hence Assumption \ref{ass:M>2} is satisfied. Therefore, Proposition \ref{prop:largeM} applies for sufficiently small $\underline{c}>0$.

\subsubsection{Relation to \cite{citanna2024taxspots}}
According to the analysis in the previous section, one possible requirement for lottery improvement is that the function $U(c_t)$ in \eqref{eq:max_l1_lambdastar} is not concave even for very small $\lambda^*$. This is related to the fact that $u$ has large relative prudence at some consumption level. Indeed, \cite{citanna2024taxspots} perturb the CRRA utility function near the optimal deterministic consumption to
    $$
    u(c)=\frac{c^{1-\gamma}}{1-\gamma}+\chi(n(c-c^*))\frac{\zeta_c}{6}(c-c^*)^3,
    $$
    where $\chi\in C^{\infty}$ and 
    $$\chi(x)=
    \begin{cases}
    1,&-\frac{1}{2}\le x \le \frac{1}{2};\\
    0,&x\not\in[-1,1].
    \end{cases}
    $$
    This creates an extra term
    $$
    c\left(n\chi'(n(c-c^*))\frac{\zeta_c}{6}(c-c^*)^3+\underbrace{\chi(n(c-c^*))\frac{\zeta_c}{2}(c-c^*)^2}_{\text{convex  term}}\right)
    $$
    to $G(c_t,\ell_t)$. The convex term can be made arbitrarily convex by taking $\zeta_c$ sufficiently large, preventing $F(c_t,\ell_t)+\lambda G(c_t,\ell_t)$ from being concave for small $\lambda$. We consider a more natural class of utility functions where optimal lotteries arise.

\subsection{First best lotteries with unbounded relative risk aversion}

 In this subsection, we consider utility functions that exhibit unbounded relative risk aversion as consumption approaches its lower bound. We make the following assumption.
    \begin{assumption}\label{ass:lot_improve}
    The utility function for consumption $u:(\zeta,\infty) \rightarrow {\mathbb R}$ is increasing, concave, and 
    \begin{equation}\label{cond:lot_improve}
        \lim_{c\rightarrow\zeta+}u'(c)=\infty,\quad \lim_{c\rightarrow\zeta+}R(c):=\frac{-u''(c)c}{u'(c)}=\infty.
    \end{equation}
    \end{assumption}

    \begin{remark}
        We note that the Inada condition $\lim_{c\rightarrow\zeta+}u'(c)=\infty$ is implied by the unbounded relative risk aversion $\lim_{c\rightarrow\zeta+}R(c)=\infty$ when $\zeta=0$. Examples that satisfy Assumption \ref{ass:lot_improve} include shifted CRRA functions
        $$
        u(c)=\frac{(c-\zeta)^{1-\gamma}}{1-\gamma}\quad(\gamma>0,\ \zeta>0),
        $$
        and a negative exponential transformation of CRRA
        $$ u(c)=-e^{1/c^k} \quad (k > 0).$$
    \end{remark}
    
Under Assumption \ref{ass:lot_improve} the government can set the price of debt arbitrarily high by announcing a lottery next period that sets labor taxes close to 100 percent with a probability close to zero. It is easy to see that with shifted CRRA utility, the agent is forced to save against this outcome, independently of the interest rate. The probability can be chosen low enough to ensure that the effect on utility is second-order. In the limit, the government can then finance all its expenditure without ever taxing labor. 
The following result formalizes this idea.
    
    \begin{proposition}\label{prop:lot_improve}
        Under Assumption \ref{ass:regu}, Assumption \ref{ass:v}(\ref{enum_assv_1}) and Assumption \ref{ass:lot_improve}, the lottery problem achieves the first best when $\underline{c}\rightarrow \zeta+$, i.e.
        \begin{equation}\label{eq:goal_lot_improvement_primal}
        \lim_{\underline{c}\rightarrow\zeta} \max_{\underline{c}+g_0\le \ell_0\le \bar{\ell}}\min_{\lambda\ge 0}\max_{(\ell_t)_{t=1}^{\infty}\in L}\mathcal{L}(\ell_0,(\ell_t)_{t=1}^{\infty};\lambda)=\max_{(g_t+\zeta\le \ell_t\le \bar{\ell})_{t=0}^{\infty}}\mathbb{E}_{g_0}\sum_{t=0}^{\infty}\beta^tF(c_t,\ell_t).
        \end{equation}
    \end{proposition}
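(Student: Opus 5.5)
The plan is to sandwich the left-hand side of \eqref{eq:goal_lot_improvement_primal} between the first-best value and a quantity that converges to it. The upper bound is immediate. For every $\lambda\ge 0$, $\ell_0$ and $\underline{c}$, the inner value $\min_{\lambda}\max_{(\ell_t)}\mathcal{L}$ is at most the value at $\lambda=0$. That value equals $\max\mathbb{E}\sum\beta^tF$ over $\ell_t\ge \underline{c}+g_t$, which is bounded by the first-best value over the larger set $\ell_t\ge\zeta+g_t$. All the work is therefore in the lower bound.

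\textbf{Step 1: reduce to a single $\ell_0$.} Fix $\varepsilon>0$ and a first-best plan $(\ell_t^{FB})$. Truncate it so that $c_t\ge c^\circ>\zeta$ for some fixed $c^\circ$; by continuity and the $\beta$-discounting this loses at most $\varepsilon$. Take $\ell_0=\ell_0^{FB}$. It then suffices to show that, for $\underline{c}$ close to $\zeta$ and every $\lambda\ge0$,
\[
\max_{(\ell_t)_{t\ge1}}\mathcal{L}(\ell_0,(\ell_t);\lambda)\ \ge\ V^{FB}-2\varepsilon .
\]

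\textbf{Step 2: the test sequence.} For each $\lambda$ we exhibit a test sequence $(\ell_t)_{t\ge1}$ (possibly depending on $\lambda$) with
\[
\mathbb{E}\sum_{t\ge0}\beta^tF\ \ge V^{FB}-2\varepsilon
\qquad\text{and}\qquad
\mathbb{E}\sum_{t\ge0}\beta^t G\;-\;b_{-1}u'(c_0)\ \ge 0 .
\]
Since $\lambda\ge0$, the $\lambda$-term is then nonnegative and the bound follows. The natural construction mirrors the lottery intuition, but purified across states. The dual already uses $\max$ over deterministic sequences for each $\lambda$, and the equivalence with lotteries is in \cite{shen2025recursive}. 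So it suffices to use one deterministic sequence that is independent of $\lambda$:
\[
\ell_t=\ell_t^{FB}\ \text{ except at one date-event of period } 1,\ \text{ say } g_1=g\ \text{ with } \pi(g\mid g_0)>0,\ \text{ where } \ell_1=\underline{c}+g .
\]
If one prefers to remove the dependence on a particular shock's probability, one can instead take a single date $T$.

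\textbf{Step 3: the key estimate (main obstacle).} The affected term contributes $\beta\pi\,[u'(\underline{c})\underline{c}+v'(\underline{c}+g)(\underline{c}+g)]$ to $\sum G$. Its $F$-cost is $\beta\pi\,[u(\underline{c})-u(c_1^{FB})]$ plus a bounded labor term. The difficulty is this: if $\pi$ is fixed, the $F$-cost may diverge when $u(\zeta+)=-\infty$. Then $\pi$ must be a free parameter, which a deterministic date-event does not provide. Hence I would use the lottery form directly, which is equivalent to the dual by \cite{shen2025recursive}: with probability $p$ the continuation sets $\ell_1=\underline{c}+g_1$, and otherwise it follows the first best. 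The requirement is then
\[
p\,\beta\,u'(\underline{c})\underline{c}\ \ge\ K
\qquad\text{and}\qquad
p\,\bigl|u(\underline{c})-u(c^\circ)\bigr|\ \le\ \varepsilon ,
\]
where $K$ is a bound on the first-best budget deficit plus the bounded labor terms. Both hold for some $p$ if and only if
\[
\frac{u'(\underline{c})\underline{c}}{u(c^\circ)-u(\underline{c})}\longrightarrow\infty\qquad\text{as }\underline{c}\to\zeta .
\]
This ratio limit is the crux, and I would derive it from Assumption \ref{ass:lot_improve}. Write $u(c^\circ)-u(\underline{c})=\int_{\underline{c}}^{c^\circ}u'(s)\,ds$. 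Unbounded $R$ means that $\log u'$ decreases faster than $N\log c$ near $\zeta$ for every $N$. Hence $u'(s)\le u'(\underline{c})(\underline{c}/s)^N$ on a neighborhood of $\zeta$, and the integral is $o(u'(\underline{c})\underline{c})$ near $\zeta$. The part of the integral over the rest of $[\underline{c},c^\circ]$ is bounded, while $u'(\underline{c})\to\infty$ by the Inada part. Together these give the ratio limit. When $\zeta>0$ the same argument works with $\underline{c}\ge\zeta$, using $R\to\infty$.

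\textbf{Step 4: conclude.} Choosing $p$ accordingly makes the constraint slack for every $\lambda$ at welfare cost at most $\varepsilon$. Assumption \ref{ass:regu} ensures the max and min are attained, so the limit in \eqref{eq:goal_lot_improvement_primal} holds.
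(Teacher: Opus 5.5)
Your argument is correct, but it works on the primal side, whereas the paper works on the dual side.

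\textbf{Your route.} You construct an explicit lottery: first best everywhere, except that with probability $p\approx K/(\beta u'(\underline{c})\underline{c})\to 0$ consumption is pushed to the floor at $t=1$. This lottery satisfies the implementability constraint in expectation and loses at most $\varepsilon$ of welfare. You then pass to the sup-inf-sup value. For that last step you do not need the equivalence in \cite{shen2025recursive}. For any $\lambda\ge 0$, $\max_{(\ell_t)}\mathcal{L}(\ell_0^{\text{FB}},\cdot;\lambda)$ is at least the $p$-weighted average of $\mathcal{L}$ at your two sequences. That average is at least the lottery's welfare, because the averaged constraint slack is nonnegative.

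\textbf{The paper's route.} The paper never exhibits a feasible plan. For each $\lambda$ it picks a deterministic test sequence that depends on $\lambda$. For large $\lambda$ it uses $c_t=\underline{c}$ for all $t\ge1$, which gives $D(\lambda;\underline{c})\ge\lambda I+J$ with $I\to\infty$ and $J/I\to 0$. For small $\lambda$ it uses the first-best continuation. Hence the minimizing multiplier satisfies $\lambda(\underline{c})\to 0$.

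\textbf{How they relate.} The two arguments are LP-dual to each other. Let $I^{\text{FB}}$ and $J^{\text{FB}}$ be the constraint slack and welfare of the first-best continuation. Then $\min_{\lambda\ge0}\max\{\lambda I+J,\ \lambda I^{\text{FB}}+J^{\text{FB}}\}$ equals the value of the best constraint-satisfying mixture of those two sequences. Your vanishing $p$ plays the role of the paper's vanishing $\lambda(\underline{c})$.
\begin{itemize}
\item Your version is constructive and matches the economic story in the text and the lottery used in the debt-constraint example.
\item The paper's version stays inside the expression being evaluated and never needs a single plan that is both feasible and near first best.
\item Both rest on the same asymptotics: $u'(c)c\to\infty$ and $u(c)/(u'(c)c)\to 0$ as $c\to\zeta$. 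You obtain the second by integrating the bound $u'(s)\le u'(\underline{c})(\underline{c}/s)^N$; the paper uses L'H\^opital's rule.
\end{itemize}

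\textbf{A small slip.} In Step 3 you say the remainder of the integral is negligible because $u'(\underline{c})\to\infty$ by the Inada condition. What you actually need is $u'(\underline{c})\underline{c}\to\infty$, both there and to ensure $p\le 1$. When $\zeta=0$ the Inada condition alone does not give this: log utility has $u'(c)c\equiv 1$. It does follow from your own bound with $N\ge 2$ and fixed $s$, since $u'(\underline{c})\underline{c}\ge u'(s)s^N\underline{c}^{1-N}\to\infty$. This is the first part of the paper's auxiliary lemma.

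\textbf{An inaccurate side remark.} You say a deterministic date-event cannot supply a small enough weight. Placing it far out in the tree makes $\beta^t\pi(g^t\mid g_0)$ as small as needed. Since your proof switches to the lottery anyway, this does not affect its validity.
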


\subsection{Purification of lottery solutions}
The fact that lotteries are optimal does not necessarily imply that a deterministic solution cannot attain the same value. If $ \beta $ is sufficiently close to one, lotteries can be replaced with deterministic solutions that ``imitate'' the optimal lottery over time (as is well known in game theory; see, e.g., \citealp{fudenberg1991dispensability}).
To formalize this, we need the following assumptions, which guarantee that optimal lotteries have finite support.
\begin{assumption}\label{ass:utility}
 For any $g\in\mathcal{G}$, the utility functions $u(\ell-g)$ and $v(\ell)$ are both analytic functions on $[\underline{c}+g,\bar{\ell}]$.
\end{assumption}

    \begin{assumption}\label{ass:utility_regu}
       For any $g\in \mathcal{G}$, there are no $\lambda\in \mathbb{R}$ and $C\in\mathbb{R}$, such that
       $$
       F(\ell-g,\ell)+\lambda G(\ell-g,\ell)=C \ \forall\ell\in[\underline{c}+g,\bar{\ell}].
       $$
    \end{assumption}

The following lemma then establishes finite support. 
\begin{lemma}\label{lem:finite_support_strong}
     Under Assumption \ref{ass:utility} and \ref{ass:utility_regu}, for any $g\in\mathcal{G}$, there exists $M_g>0$, such that for any $\lambda\ge0$, $H_{\lambda}(\ell-g,\ell):=F(c,\ell)+\lambda G(c,\ell)$ has at most $M_g$ maximum points on $[\underline{c}+g,\bar{\ell}]$.   
    \end{lemma}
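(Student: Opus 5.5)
The plan is to reduce the count of maximum points of $H_\lambda$ to a count of zeros of its derivative, and then bound that count uniformly in $\lambda$ using analyticity and a compactness argument in the multiplier. Fix $g\in\mathcal{G}$ and write $I=[\underline{c}+g,\bar{\ell}]$. Set $f(\ell)=F(\ell-g,\ell)$ and $k(\ell)=G(\ell-g,\ell)$, so that $H_\lambda=f+\lambda k$. A maximum point of $H_\lambda$ on $I$ is either one of the two endpoints or an interior critical point, i.e.\ a zero of $f'+\lambda k'$. It therefore suffices to bound the number of zeros of $f'+\lambda k'$ in $I$ uniformly in $\lambda\ge 0$ and to add $2$.

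To handle the unbounded range of $\lambda$, I will compactify it. Put $s=\lambda/(1+\lambda)\in[0,1)$. Then $H_\lambda/(1+\lambda)=(1-s)f+sk$ has the same maximizers as $H_\lambda$, and its derivative is $\varphi_s=(1-s)f'+sk'$. This family is continuous, indeed affine, in $s\in[0,1]$.

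By Assumption \ref{ass:utility}, $f$ and $k$ are real analytic on $I$; note that $k$ involves $u'$ and $v'$, which are analytic as well. Hence $f,k$, and so every $\varphi_s$, extend holomorphically to a common complex neighbourhood $U$ of $I$, which we may take to be a thin open rectangle around $I$.

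First dispose of the degenerate case $k'\equiv 0$ on $I$. Then $H_\lambda=f+\lambda\cdot\mathrm{const}$, so the maximizers of $H_\lambda$ are exactly those of $f$ for every $\lambda$. Moreover $f'\not\equiv0$, since otherwise $f$ would be constant, contradicting Assumption \ref{ass:utility_regu} with $\lambda=0$. So $f'$ has finitely many zeros in the compact set $I$, which gives a uniform bound.

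In the remaining case we claim that $\varphi_s\not\equiv 0$ on $I$ for every $s\in[0,1]$. For $s=1$ this is the assumption $k'\not\equiv0$. For $s<1$, the identity $\varphi_s\equiv0$ would give $f+\frac{s}{1-s}k\equiv C$ on $I$, which Assumption \ref{ass:utility_regu} rules out.

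The main step is the uniform bound, which I would prove by contradiction. Suppose there are $s_n\in[0,1)$ such that the number of zeros of $\varphi_{s_n}$ in $I$ tends to infinity. Pass to a subsequence with $s_n\to s^*\in[0,1]$. Since $\varphi_{s^*}\not\equiv0$ and is holomorphic on $U$, it has finitely many zeros on the compact set $I$. We can then choose a slightly smaller open neighbourhood $U'\supset I$ with $\overline{U'}\subset U$, such that $\varphi_{s^*}$ has no zeros on $\partial U'$ and only finitely many zeros in $U'$, say $N$ counted with multiplicity. This choice is possible because the zeros of $\varphi_{s^*}$ near $I$ are isolated, so the boundary of the rectangle can be placed to avoid them.

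Because $\varphi_{s_n}\to\varphi_{s^*}$ uniformly on $\overline{U'}$ (the family is affine in $s$ with bounded coefficients there), Rouché's theorem (equivalently Hurwitz's theorem) shows that for large $n$, $\varphi_{s_n}$ has exactly $N$ zeros in $U'$. In particular it has at most $N$ zeros in $I$, which contradicts the assumption that the number of zeros of $\varphi_{s_n}$ in $I$ tends to infinity. Hence $\sup_{s\in[0,1)}\#\{\ell\in I:\varphi_s(\ell)=0\}<\infty$, and $M_g$ can be taken to be this supremum plus $2$.

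The main technical obstacle is the case $s^*=1$, which corresponds to $\lambda\to\infty$. That case is exactly why the degenerate possibility $k'\equiv0$ has to be separated out: Assumption \ref{ass:utility_regu} only forbids $f+\lambda k$ from being constant for finite $\lambda$, so it says nothing directly about $k'$ alone. A secondary point is to ensure that one common complex neighbourhood works for both $f$ and $k$, which follows from the analyticity of $u,u',v,v'$ on a neighbourhood of the compact interval.
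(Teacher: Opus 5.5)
Your proof is correct, but it reaches the uniform bound by a different route from the paper's.

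\textbf{The paper's route.} It stays on the real line. After the same case split on whether $G'\equiv 0$, it cuts $[\underline{c}+g,\bar{\ell}]$ at the finitely many zeros of $G'$. On each piece it rewrites $H_\lambda'=0$ as $F'/G'=-\lambda$ and applies Rolle's theorem to this ratio, whose derivative has numerator $I_g=F''G'-G''F'$. This gives an explicit bound $2N_g^G+3+N_g^I$ on the candidate maximizers, where $N_g^G$ and $N_g^I$ are the numbers of zeros of $G'$ and $I_g$.

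\textbf{Your route.} You compactify the multiplier via $s=\lambda/(1+\lambda)$, extend holomorphically to a complex neighbourhood, and obtain the bound from Rouch\'e/Hurwitz combined with sequential compactness of $s\in[0,1]$. The degenerate case $k'\equiv 0$ plays the role of the paper's $G'\equiv 0$ case.

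\textbf{What the paper's route buys.} It is quantitative: $M_g$ is expressed through zero counts of two fixed functions. Remark~\ref{rmk:ass_property} relies on exactly this to compute $M_g$ and to argue that it does not depend on $\underline{c}$. That matters because the threshold on $\beta$ in Theorem~\ref{thm:purification} is stated in terms of $M_g$. Your compactness argument is non-constructive and says nothing about how $M_g$ varies with $\underline{c}$.

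\textbf{What your route buys.} It is more robust. It only needs $\{\varphi_s\}_{s\in[0,1]}$ to be a continuous family of analytic functions, none identically zero. So it extends verbatim to several constraints, with multipliers ranging over a compactified simplex, where the ratio-plus-Rolle trick has no direct analogue. It also invokes Assumption~\ref{ass:utility_regu} only for $\lambda\ge 0$. The paper's footnote, as written, uses it for a possibly negative $\lambda$ to guarantee $I_g\not\equiv 0$.
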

  
    \begin{remark}\label{rmk:ass_property}
      $M_g$ is a number that depends only on the utility functions $u(\ell-g)$ and $v(\ell)$ over the interval $[\underline{c}+g,\bar{\ell}]$. It can be bounded in terms of the numbers of zeros in $[\underline{c}+g,\bar{\ell}]$ of
\[
G'(\ell-g,\ell)
\quad\text{and}\quad
F''(\ell-g,\ell)G'(\ell-g,\ell)-G''(\ell-g,\ell)F'(\ell-g,\ell).
\]
In most applications, these two expressions can either be analytically extended to $\zeta$ or have a nonzero limit at $\zeta$; in either case, $M_g$ is also independent of $\underline{c}$.
    \end{remark}
 Finally, we make a simple regularity assumption on the Markov process of government expenditure.
    \begin{assumption}\label{ass:trans}
        For any $g,g'\in\mathcal{G}$, $\pi(g'|g)>0$. We define $\pi_{\min}=\min_{g,g'\in\mathcal{G}}\pi(g'|g)$.
    \end{assumption}
We now formally define purification.
\begin{definition}\label{def:purification}
    Suppose that $L=\{(\ell_t(g^t):\underline{c}+g_t\le \ell_t(g^t)\le \bar{\ell},\,\forall t\ge 1,\,g^t\in\mathcal{G}^t\}$. We say a lottery $P\in\mathcal{P}(L)$ can be purified, if there exists $(\ell_t(g^t))\in L$, s.t.
     $$
\sum_{t=1}^{\infty}\sum_{g^t\in\mathcal{G}^t}\beta^t\pi(g^t|g_0)F(\ell_t(g^t)-g_t,\ell_t(g^t))=\mathbb{E}_{g_0}^{(\ell_t)_{t \ge 1}\sim P}\sum_{t=1}^{\infty}\beta^tF(\ell_t-g_t,\ell_t),
    $$
    and
    $$
\sum_{t=1}^{\infty}\sum_{g^t\in\mathcal{G}^t}\beta^t\pi(g^t|g_0)G(\ell_t(g^t)-g_t,\ell_t(g^t))=\mathbb{E}_{g_0}^{(\ell_t)_{t \ge 1}\sim P}\sum_{t=1}^{\infty}\beta^tG(\ell_t-g_t,\ell_t).
    $$
\end{definition}
Our main theorem is as follows.
\begin{theorem}\label{thm:purification}
    Under Assumption \ref{ass:utility}, Assumption \ref{ass:utility_regu}, Assumption \ref{ass:trans}, and Assumption \ref{ass:regu}, when 
    $$
    1>\beta\ge \max_{g\in\mathcal{G}}\frac{M_g-1}{M_g-1+\pi_{\min}},
    $$
    where $M_g$ is defined in Lemma \ref{lem:finite_support_strong}, any lottery solution to \eqref{eq:Lucas-S-l} can be purified.
\end{theorem}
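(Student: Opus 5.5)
The plan is to use the dual characterization to reduce purification to an interval-filling (Kakeya-type) lemma for discounted sums over a finite set of labor choices.

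\textbf{Step 1: the support of an optimal lottery.} Under Assumption \ref{ass:regu}, the lottery value equals the dual value \eqref{eq:expost_CM}, and the suprema and infima there are attained, with optimal $\ell_0^*$ and $\lambda^*\ge0$. The inner problem $\max_{(\ell_t)_{t\ge1}}\mathbb{E}_{g_0}\sum_{t\ge1}\beta^t H_{\lambda^*}(\ell_t-g_t,\ell_t)$ separates across date-events $g^t$. By standard Lagrangian duality (complementary slackness), any optimal lottery $P$ therefore puts probability one on paths with $\ell_t(g^t)\in A_{g_t}:=\arg\max_{\ell} H_{\lambda^*}(\ell-g_t,\ell)$ at every node. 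By Lemma \ref{lem:finite_support_strong}, each $A_g$ has at most $M_g$ points. Write $h_g$ for the common maximal value of $H_{\lambda^*}$ on $A_g$, and $\Gamma_g=\{G(\ell-g,\ell):\ell\in A_g\}$, a finite set with at most $M_g$ elements.

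\textbf{Step 2: only the $G$-sum has to be matched.} Suppose a deterministic path $(\ell_t(g^t))$ uses only points of $A_{g_t}$ at each node. Then its $F+\lambda^* G$ sum equals $\sum_t\sum_{g^t}\beta^t\pi(g^t|g_0)h_{g_t}$, which is also the $F+\lambda^*G$ sum of $P$. Hence if the $G$-sums coincide, so do the $F$-sums.

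The case $\lambda^*=0$ is identical: $F$ is constant on $A_g$, so again only $G$ needs matching.

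It is convenient to group nodes by their state $g$. Let $W_g=\sum_{t\ge1}\sum_{g^t:\,g_t=g}\beta^t\pi(g^t|g_0)$. Under $P$, the conditional expectation of $G$ at each node is a convex combination of the points of $\Gamma_g$. So the target $G$-sum of $P$ can be written as $\sum_g T_g$ with $T_g\in W_g[\min\Gamma_g,\max\Gamma_g]$. It then suffices to show, separately for each $g$, that every value in $W_g[\min\Gamma_g,\max\Gamma_g]$ is attained as $\sum_{n}w_n a_n$ with $a_n\in\Gamma_g$. Here $(w_n)$ enumerates the weights $\beta^t\pi(g^t|g_0)$ of the nodes with $g_t=g$, ordered by date.

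\textbf{Step 3: tail condition and interval filling (the main obstacle).} First I will verify a tail condition: for any node of state $g$ with weight $w$, the total weight of later nodes in its subtree that also have state $g$ is at least $(M_g-1)w$.

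At horizon $k\ge1$, the conditional probability of being in state $g$ is at least $\pi_{\min}$, since the last transition alone contributes at least $\pi_{\min}$ by Assumption \ref{ass:trans}. So this later weight is at least $w\sum_{k\ge1}\beta^k\pi_{\min}=w\,\beta\pi_{\min}/(1-\beta)$. The hypothesis $\beta\ge (M_g-1)/(M_g-1+\pi_{\min})$ is exactly $\beta\pi_{\min}/(1-\beta)\ge M_g-1$.

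Given this condition, I will run a greedy construction. At step $n$, let $R_n$ be the remaining target. Choose the largest $a\in\Gamma_g$ such that $R_n-w_na$ still lies in $(\text{remaining weight})\cdot[\min\Gamma_g,\max\Gamma_g]$. Then I show by induction that the residual always stays in this interval and tends to zero.

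The delicate point is that consecutive gaps of $\Gamma_g$ can be as large as the full range. The argument must therefore be set up like an $M_g$-ary expansion: the ``overshoot'' from a single choice is absorbed by the descendants. Because descendants are shared across ancestors, each node's tail must be charged only once. I expect this bookkeeping, namely ordering the nodes and defining the residual targets per subtree rather than globally, to be the hardest part. The first attempt will be to run the greedy procedure within each subtree, starting from the root level downward, using the tail bound $(M_g-1)w$ established above.

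Finally, the limit of the greedy sums exists by absolute summability of the discounted weights. This gives the required deterministic $(\ell_t(g^t))\in L$, completing the purification.
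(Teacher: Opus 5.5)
Your route is correct and genuinely different from the paper's. The step you flag as the main obstacle is actually routine.

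\textbf{How the two routes differ.} The paper never uses the fact that $H_{\lambda^*}$ is constant on its argmax set. It aggregates $P^*$ into a discounted occupation distribution $P^*_g$ on the at most $M_g$ points of $L^{\text{lot}}(g)$. It then reproduces that entire frequency vector with a deterministic assignment, via an $M_g$-bin greedy in the style of Fudenberg--Levine (Lemma~\ref{lem:purify}); this is where the $M_g$-dependent bound on $\beta$ comes from. Your Step~2 instead notes that once every node uses a point of $A_{g_t}$, the $F$-sum is pinned down by the $G$-sum, so only one scalar per state has to be matched.

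\textbf{What each route buys.} Yours gives a sharper result. Only the two extreme points of $\Gamma_g$ are needed, so the argument works under $\beta\ge 1/(1+\pi_{\min})$. This is implied by the stated hypothesis for every $g$ with $|A_g|\ge 2$ (since then $M_g\ge 2$), and is strictly weaker when $M_g>2$. Moreover, $A_g$ is compact and $G$ is continuous, so these extremes are attained even if $A_g$ were infinite. Lemma~\ref{lem:finite_support_strong}, and with it Assumptions~\ref{ass:utility} and \ref{ass:utility_regu}, then becomes unnecessary. The paper's argument in turn gives a stronger notion of purification. Its deterministic path matches the whole discounted distribution over $L^{\text{lot}}(g)$, hence the discounted sum of any per-period function, not only $F$ and $G$.

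\textbf{Closing Step~3.} No $M_g$-ary expansion and no per-subtree bookkeeping are needed; the greedy you wrote down already works. Enumerate the nodes with $g_t=g$ by date, and let $T_n=\sum_{j\ge n}w_j$, $m=\min\Gamma_g$, $M=\max\Gamma_g$. Keep the invariant $R_n\in T_n[m,M]$.
\begin{itemize}
\item The admissible values of $a$ form the interval $[(R_n-T_{n+1}M)/w_n,\,(R_n-T_{n+1}m)/w_n]$.
\item By the invariant, its left end is at most $M$ and its right end is at least $m$.
\item Its length is $(T_{n+1}/w_n)(M-m)$. So as soon as $T_{n+1}\ge w_n$, it contains $m$ or $M$.
\end{itemize}
Hence your ``largest admissible $a$'' rule never gets stuck, whatever the interior gaps of $\Gamma_g$. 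Also $|R_{n+1}|\le T_{n+1}\max(|m|,|M|)\to 0$, so the chosen sums converge to the target.

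The condition $T_{n+1}\ge w_n$ follows directly from your tail bound. All same-state descendants of node $n$ come later in date order, so $T_{n+1}\ge w_n\beta\pi_{\min}/(1-\beta)\ge (M_g-1)w_n\ge w_n$ whenever $|A_g|\ge 2$; if $|A_g|=1$ there is nothing to do. The condition is checked one step at a time against the global remaining weight. Therefore the fact that subtrees of different ancestors overlap is irrelevant, and nothing needs to be ``charged only once.''
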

To prove the theorem,
    assume that $\ell_0^*$, $P^*\in\mathcal{P}(L)$ is a lottery solution to \eqref{eq:Lucas-S-l}. 
    According to Corollary 2.1 in \cite{shen2025lagrangian}, every $(\ell^*_t)_{t\ge 1}$ on the support of $P^*$ maximizes
    $$
    \mathbb{E}_{g_0}\sum_{t=1}^{\infty}\beta^tH_{\lambda^*}(\ell_t-g_t,\ell_t)
    $$
    for some $\lambda^*\ge 0$. According to Lemma \ref{lem:finite_support_strong}, there exists $ L^{\text{lot}}(g):=\{\ell^{*,1}(g),\cdots, \ell^{*,M_g}(g)\}=\arg\max_{\ell\in [\underline{c}+g,\bar{\ell}]}H_{\lambda^*}(\ell-g,\ell)$ for each $g\in\mathcal{G}$, such that $P^*(\ell_t(g^t)\in L^{\text{lot}}(g_t))=1,\,\,\forall t\ge 1,\,g^t\in\mathcal{G}^t.$ Therefore, for
    each $t\ge 1,\,g^t\in\mathcal{G}^t$, $P^*$ has a marginal distribution $P^*_{t,g^t}\in\mathcal{P}(L^{\text{lot}}(g))$, and we have
    \begin{equation}\label{eq:purification_mid1}
    \begin{aligned}
    &\mathbb{E}_{g_0}^{(\ell_t)_{t \ge 1}\sim P^*}\sum_{t=1}^{\infty}\beta^tF(\ell_t-g_t,\ell_t)\\
    =&\sum_{t=1}^{\infty}\sum_{g^t\in\mathcal{G}^t}\mathbb{E}^{(\ell_t)_{t \ge 1}\sim P^*}\beta^t\pi(g^t|g_0)F(\ell_t(g^t)-g_t,\ell_t(g^t))\\
    =&\sum_{t=1}^{\infty}\sum_{g^t\in\mathcal{G}^t}\mathbb{E}^{\ell_t(g^t)\sim P^*_{t,g^t}}\beta^t\pi(g^t|g_0)F(\ell_t(g^t)-g_t,\ell_t(g^t))\\
    =&\sum_{t=1}^{\infty}\sum_{g^t\in\mathcal{G}^t}\sum_{k=1}^{M_{g_t}}P^*_{t,g^t}(\ell^{*,k}(g_t))\beta^t\pi(g^t|g_0)F(\ell^{*,k}(g_t)-g_t,\ell^{*,k}(g_t)).
    \end{aligned}
    \end{equation}
    For each $g\in\mathcal{G}$, we define $P_g^*\in\mathcal{P}(L^{\text{lot}}(g))$, s.t.
    $$
    P_g^*(\ell^{*,k}(g))=\left[\sum_{t=1}^{\infty}\sum_{g^t\in\mathcal{G}^t\text{ s.t. }g_t=g}\beta^t\pi(g^t|g_0)\right] ^{-1}\sum_{t=1}^{\infty}\sum_{g^t\in\mathcal{G}^t\text{ s.t. }g_t=g}\beta^t\pi(g^t|g_0)P^*_{t,g^t}(\ell^{*,k}(g_t)),
    $$
    and we have
\begin{equation}\label{eq:purification_mid2}
\begin{aligned}
    &\sum_{t=1}^{\infty}\sum_{g^t\in\mathcal{G}^t}\sum_{k=1}^{M_{g_t}}P^*_{t,g^t}(\ell^{*,k}(g_t))\beta^t\pi(g^t|g_0)F(\ell^{*,k}(g_t)-g_t,\ell^{*,k}(g_t))\\
    =&\sum_{g\in\mathcal{G}}\sum_{k=1}^{M_{g}}\sum_{t=1}^{\infty}\sum_{g^t\in\mathcal{G}^t\text{ s.t. }g_t=g}P^*_{t,g^t}(\ell^{*,k}(g))\beta^t\pi(g^t|g_0)F(\ell^{*,k}(g)-g,\ell^{*,k}(g))\\
    =&  \sum_{g\in\mathcal G}\left[\sum_{t=1}^{\infty}\sum_{g^t\in\mathcal{G}^t\text{ s.t. }g_t=g}\beta^t\pi(g^t|g_0)\right] \cdot\left[\sum_{k=1}^{M_g}P^*_g(k)F(\ell^{*,k}(g)-g,\ell^{*,k}(g))\right].\end{aligned}
    \end{equation}
    We combine \eqref{eq:purification_mid1} and \eqref{eq:purification_mid2} and have
\begin{equation}\label{eq:pindownF}
    \sum_{g\in\mathcal G}\left[\sum_{t=1}^{\infty}\sum_{g^t\in\mathcal{G}^t\text{ s.t. }g_t=g}\beta^t\pi(g^t|g_0)\right] \cdot\left[\sum_{k=1}^{M_g}P^*_g(k)F(\ell^{*,k}(g)-g,\ell^{*,k}(g))\right]=\mathbb{E}_{g_0}^{(\ell_t)_{t \ge 1}\sim P^*}\sum_{t=1}^{\infty}\beta^tF(\ell_t-g_t,\ell_t).
    \end{equation}
    Similarly, 
    \begin{equation}\label{eq:pindownG}
    \sum_{g\in\mathcal G}\left[\sum_{t=1}^{\infty}\sum_{g^t\in\mathcal{G}^t\text{ s.t. }g_t=g}\beta^t\pi(g^t|g_0)\right] \cdot\left[\sum_{k=1}^{M_g}P^*_g(k)G(\ell^{*,k}(g)-g,\ell^{*,k}(g))\right]=\mathbb{E}_{g_0}^{(\ell_t)_{t \ge 1}\sim P^*}\sum_{t=1}^{\infty}\beta^tG(\ell_t-g_t,\ell_t).
    \end{equation}
    The lottery solution $P^*$ is then pinned down to $|\mathcal{G}|$ finite-support probability distributions $P^*_g\in\mathcal{P}(L^{\text{lot}}(g))$.
Figure~\ref{fig:interval_decomposition} illustrates the intuition for the case where $\mathcal{G}=\{g\}$ is a singleton and $L^{\text{lot}}(g)=\{\ell^{*,1}(g),\ell^{*,2}(g)\}$.
\begin{figure}[htp]
\centering
\footnotesize
\begin{tikzpicture}[xscale=13,yscale=1.5]
\def\h{0.6}

\filldraw[fill=blue!40,draw=black] (0,0) rectangle (0.2,\h);
\filldraw[fill=red!50,draw=black] (0.2,0) rectangle (0.5,\h);

\filldraw[fill=blue!40,draw=black] (0.5,0) rectangle (0.65,\h);
\filldraw[fill=red!50,draw=black] (0.65,0) rectangle (0.75,\h);

\filldraw[fill=blue!40,draw=black] (0.75,0) rectangle (0.8,\h);
\filldraw[fill=red!50,draw=black] (0.8,0) rectangle (0.875,\h);

\filldraw[fill=blue!40,draw=black] (0.875,0) rectangle (0.92,\h);
\filldraw[fill=red!50,draw=black] (0.92,0) rectangle (0.9375,\h);

\filldraw[fill=blue!40,draw=black] (0.9375,0) rectangle (0.95,\h);
\filldraw[fill=red!50,draw=black] (0.95,0) rectangle (0.96875,\h);

\node at (0.985,0.3) {$\cdots$};

\foreach \x in {0.5, 0.75, 0.875, 0.9375} {
  \draw[dashed,gray] (\x,0) -- (\x,\h);
}

\foreach \x in {0, 0.5, 0.75, 0.875, 0.9375, 1} {
  \draw (\x,\h) -- (\x,\h+0.08);
}
\node[above] at (0,\h+0.08) {\footnotesize $0$};
\node[above] at (0.5,\h+0.08) {\footnotesize $1-\beta$};
\node[above] at (0.75,\h+0.08) {\footnotesize $1-\beta^2$};
\node[above] at (0.875,\h+0.08) {\footnotesize $1-\beta^3$};
\node[above] at (1,\h+0.08) {\footnotesize $1$};

\node[above,font=\scriptsize] (lab11) at (0.1,\h+0.45)
  {$(1-\beta)P_1^*(\ell^{*,1}(g))$};
\draw[gray,thin] (0.1,\h) -- (lab11.south);

\node[above,font=\scriptsize] (lab12) at (0.35,\h+0.45)
  {$(1-\beta)P_1^*(\ell^{*,2}(g))$};
\draw[gray,thin] (0.35,\h) -- (lab12.south);

\node[above,font=\scriptsize] (lab21) at (0.48,\h+1.0)
  {$(1-\beta)\beta P_2^*(\ell^{*,1}(g))$};
\draw[gray,thin] (0.575,\h) -- (lab21.south);

\node[above,font=\scriptsize] (lab22) at (0.74,\h+1.0)
  {$(1-\beta)\beta P_2^*(\ell^{*,2}(g))$};
\draw[gray,thin] (0.7,\h) -- (lab22.south);

\draw[decorate,decoration={brace,amplitude=4pt,mirror},thick] (0,-0.05) -- (0.5,-0.05)
  node[midway,below=5pt] {\small $I_1$};
\draw[decorate,decoration={brace,amplitude=4pt,mirror},thick] (0.5,-0.05) -- (0.75,-0.05)
  node[midway,below=5pt] {\small $I_2$};
\draw[decorate,decoration={brace,amplitude=4pt,mirror},thick] (0.75,-0.05) -- (0.875,-0.05)
  node[midway,below=5pt] {\small $I_3$};
\draw[decorate,decoration={brace,amplitude=4pt,mirror},thick] (0.875,-0.05) -- (0.9375,-0.05)
  node[midway,below=5pt] {\small $I_4$};
\draw[decorate,decoration={brace,amplitude=4pt,mirror},thick] (0.9375,-0.05) -- (0.96875,-0.05)
  node[midway,below=5pt] {\small $I_5$};

\begin{scope}[shift={(0.55,\h+0.45)}]
  \filldraw[fill=blue!40,draw=black] (0,0) rectangle (0.06,0.12);
  \node[right] at (0.07,0.06) {\small $\ell^{*,1}(g)$};
  \filldraw[fill=red!50,draw=black] (0.28,0) rectangle (0.34,0.12);
  \node[right] at (0.35,0.06) {\small $\ell^{*,2}(g)$};
\end{scope}
\end{tikzpicture}
\caption{The interval $[0,1]$ is decomposed into countably many subintervals $I_k$ of length $|I_k|=(1-\beta)\beta^{k-1}$, $k\ge1$, so that $\sum_{k\ge1}|I_k|=1$. On each $I_k$, the left part (blue) is assigned to $\ell^{*,1}(g)$ and the right part (red) to $\ell^{*,2}(g)$. The distribution $P^*_g$, which represents the total mass assigned to the blue and red parts across all periods, is given by
\(
P^*_g=\sum_{t=1}^{\infty}(1-\beta)\beta^{t-1}P_t^*.
\)}
\label{fig:interval_decomposition}
\end{figure}

We now turn to purification. Consider again the simple case where $\mathcal{G}=\{g\}$ is a singleton and $L^{\text{lot}}(g)=\{\ell^{*,1}(g),\ell^{*,2}(g)\}$. The purification problem asks whether, instead of splitting each interval $I_k$ between the two outcomes, one can assign each $I_k$ entirely to one of $\ell^{*,1}(g)$ and $\ell^{*,2}(g)$, while still generating the same distribution $P_g^*$. 
Lemma 1 in \cite{fudenberg1991dispensability} gives an affirmative answer. The following lemma states this in our stochastic setting.

    \begin{lemma}\label{lem:purify}
        Assume that $M_g\in\mathbb{N}_+$,
        $$
        1>\beta\ge\frac{M_g-1}{M_g-1+\pi_{\min}}.
        $$
        For any $\vec{p}\in\Delta^{M_g}_+:=\{\vec{p}\in\mathbb{R}_+^{M_g}:\sum_{k=1}^{M_g}p_k=1\}$, there exists $(k_t(g^t)\in\{1,\cdots, M_g\})_{t\ge 1,\,g^t\in\mathcal{G}^t\text{ s.t. }g_t=g}$, s.t.
        $$
        \frac{1}
{\sum_{s=1}^{\infty}\sum_{h^s\in\mathcal{G}^s:\,h_s=g}\beta^s\pi(h^s\mid g_0)} \sum_{t=1}^{\infty}\sum_{g^t\in\mathcal{G}^t:\,g_t=g}\left[\beta^t\pi(g^t|g_0)\right]\vec{e}_{k_t(g^t)}=\vec{p},
        $$
        where 
\[
\vec{e}_{k}=(0,\ldots,0,\underset{\substack{\uparrow\\ k}}{1},0,\ldots,0)
\]
denotes the unit vector whose $k$-th component is $1$.
    \end{lemma}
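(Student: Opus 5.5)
The plan is to adapt the greedy construction from Lemma 1 of \cite{fudenberg1991dispensability}. We process the relevant date-events one at a time and assign each one to an outcome $k$ whose remaining target mass is large enough to absorb it. Fix $g$ and write $M=M_g$. Let $\mathcal{N}=\{g^t\in\mathcal{G}^t: t\ge1,\ g_t=g\}$ be the set of nodes to be assigned. Give each node the normalized weight $w(g^t)=\beta^t\pi(g^t\mid g_0)/S$, where $S=\sum_{s\ge1}\sum_{h^s:h_s=g}\beta^s\pi(h^s\mid g_0)$, so that $\sum_{\mathcal{N}}w=1$. The claim is then that $\mathcal{N}$ can be labeled by $k_t(g^t)\in\{1,\dots,M\}$ so that, for every $k$, the total weight of nodes labeled $k$ equals $p_k$.

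\emph{Greedy construction.} Enumerate $\mathcal{N}$ as $n_1,n_2,\dots$, ordered by date $t$ and arbitrarily within a date; this is possible since each date has finitely many nodes. Set the residual vector $r^{(0)}=\vec p$. At step $j$, choose any $k$ with $r^{(j-1)}_k\ge w(n_j)$, label $n_j$ with that $k$, and set $r^{(j)}=r^{(j-1)}-w(n_j)\vec e_k$. By construction the residuals stay nonnegative. Moreover $\sum_k r^{(j-1)}_k = R_j:=\sum_{i\ge j}w(n_i)$, the total weight of the nodes not yet processed.

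\emph{Key step (the main obstacle): showing a valid label always exists.} Because $\max_k r^{(j-1)}_k\ge R_j/M$, it suffices to prove $w(n_j)\le R_j/M$, i.e.\ that the nodes after $n_j$ carry weight at least $(M-1)w(n_j)$. This is where Assumption \ref{ass:trans} and the bound on $\beta$ enter. Let $n_j=g^t$. Every descendant $g^{t+s}$ of $g^t$ with $g_{t+s}=g$ lies at a later date, so it comes after $n_j$ in the enumeration. Conditional on $g^t$, the probability that $g_{t+s}=g$ is at least $\pi_{\min}$ for every $s\ge1$. Hence the descendants alone have weight at least
\[
w(g^t)\sum_{s\ge1}\beta^s\pi_{\min}=w(g^t)\,\frac{\pi_{\min}\beta}{1-\beta}.
\]
The assumption $\beta\ge (M-1)/(M-1+\pi_{\min})$ is equivalent to $\pi_{\min}\beta/(1-\beta)\ge M-1$. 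Therefore $R_j\ge w(n_j)+(M-1)w(n_j)=M\,w(n_j)$, and the greedy step is always feasible.

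\emph{Conclusion.} Since $\sum_{\mathcal{N}}w=1<\infty$ and every node is eventually processed, $R_j\to0$. The nonnegative residuals $r^{(j)}_k$ are bounded by $R_{j+1}$, so they also converge to $0$. Thus the total weight assigned to each $k$ equals $p_k$, which is exactly the displayed identity with $k_t(g^t)$ defined as the label of $g^t$. Components with $p_k=0$ cause no difficulty, since they are simply never chosen. Finally, the construction is carried out separately for each $g$, which is legitimate because the node sets $\mathcal{N}$ for different $g$ are disjoint.
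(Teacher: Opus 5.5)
Your proof is correct and is essentially the paper's argument. Both use a greedy assignment whose feasibility follows from the same-state descendants of the current node carrying at least $\frac{\beta\pi_{\min}}{1-\beta}\ge M_g-1$ times its weight, and both then observe that the nonnegative residuals are dominated by the vanishing tail mass. The only difference is the ordering: the paper processes nodes in non-increasing order of weight and relies on descendants having strictly smaller weight, whereas your chronological enumeration makes ``descendants come later'' immediate.
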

  
The proof of the theorem then follows directly from  Lemma \ref{lem:purify} with $\vec{p}=P^*_g$ for each $g\in\mathcal{G}$. Indeed, for any $g\in\mathcal{G}$, $t\ge 1,\,g^t\in\mathcal{G}^t$ s.t. $g_t=g$, one can take $\ell_t(g^t)=\ell^{*,k_t(g^t)}(g)$. Therefore,
$$
\begin{aligned}
&\sum_{t=1}^{\infty}\sum_{g^t\in\mathcal{G}^t\text{ s.t. }g_t=g}\beta^t\pi(g^t|g_0)F(\ell_t(g^t)-g_t,\ell_t(g^t))\\
=&\left[\sum_{t=1}^{\infty}\sum_{g^t\in\mathcal{G}^t\text{ s.t. }g_t=g}\beta^t\pi(g^t|g_0)\right]\cdot\left[\sum_{k=1}^{M_g}P^*_g(k)F(\ell^{*,k}(g)-g,\ell^{*,k}(g))\right],
\end{aligned}
$$
and
$$
\begin{aligned}
&\sum_{t=1}^{\infty}\sum_{g^t\in\mathcal{G}^t\text{ s.t. }g_t=g}\beta^t\pi(g^t|g_0)G(\ell_t(g^t)-g_t,\ell_t(g^t))\\
=&\left[\sum_{t=1}^{\infty}\sum_{g^t\in\mathcal{G}^t\text{ s.t. }g_t=g}\beta^t\pi(g^t|g_0)\right]\cdot\left[\sum_{k=1}^{M_g}P^*_g(k)G(\ell^{*,k}(g)-g,\ell^{*,k}(g))\right],
\end{aligned}
$$
for each $g\in\mathcal{G}$.
Hence we have
$$
\begin{aligned}
&\sum_{t=1}^{\infty}\sum_{g^t\in\mathcal{G}^t}\beta^t\pi(g^t|g_0)F(\ell_t(g^t)-g_t,\ell_t(g^t))\\
=&\sum_{g\in\mathcal{G}}\sum_{t=1}^{\infty}\sum_{g^t\in\mathcal{G}^t\text{ s.t. }g_t=g}\beta^t\pi(g^t|g_0)F(\ell_t(g^t)-g_t,\ell_t(g^t))\\
=&\sum_{g\in\mathcal G}\left[\sum_{t=1}^{\infty}\sum_{g^t\in\mathcal{G}^t\text{ s.t. }g_t=g}\beta^t\pi(g^t|g_0)\right] \cdot\left[\sum_{k=1}^{M_g}P^*_g(k)F(\ell^{*,k}(g)-g,\ell^{*,k}(g))\right]\\=&\mathbb{E}_{g_0}^{(\ell_t)_{t \ge 1}\sim P^*}\sum_{t=1}^{\infty}\beta^tF(\ell_t-g_t,\ell_t),
\end{aligned}
$$
where the last equality follows from \eqref{eq:pindownF}. Similarly
$$
\begin{aligned}
   &\sum_{t=1}^{\infty}\sum_{g^t\in\mathcal{G}^t}\beta^t\pi(g^t|g_0)G(\ell_t(g^t)-g_t,\ell_t(g^t))\\
=&\mathbb{E}_{g_0}^{(\ell_t)_{t \ge 1}\sim P^*}\sum_{t=1}^{\infty}\beta^tG(\ell_t-g_t,\ell_t). 
\end{aligned}
$$

\subsection{Possible failures of purification}
For small values of $ \beta $, or with explicit debt constraints, it might not be possible to purify the optimal lottery, and the lottery solution strictly dominates the deterministic one. We provide two examples to illustrate these cases.

\subsubsection{Essential lotteries for small $ \beta$}
\label{app:cm_with_lotteries}
We consider a problem with $\mathcal{G}=\{0\}$, i.e.
\begin{equation}\label{eq:reduced_Ramsey_lot_imp}
\begin{aligned}
    &\max_{\ell_0,P\in\mathcal{P}(L)} \mathbb{E}^{P}\sum_{t=0}^{\infty}\beta^t[u(\ell_t)+v(\ell_t)]\\
    \text{s.t. }&\mathbb{E}^P\sum_{t=0}^{\infty}\beta^t[u'(\ell_t)\ell_t+v'(\ell_t)\ell_t]\ge b_0u'(\ell_0),
\end{aligned}
\end{equation}
where $u(\ell_t)=\log (\ell_t-0.1),\,v(\ell_t)=\sqrt{1-\ell_t}$. Since government expenditure is $g=0$, consumption always equals labor supply. The first-best labor supply (and consumption) is 
$$
\ell^{\text{FB}}\text{ s.t. }u'(\ell^{\text{FB}})+v'(\ell^{\text{FB}})=0\Rightarrow \ell^{\text{FB}}\approx 0.8568,\,u(\ell^{\text{FB}})+v(\ell^{\text{FB}})\approx0.0998,\,u'(\ell^{\text{FB}})\approx1.3213.
$$
Consumption is bounded below by some $\underline{c}>0.1$. Assume that $\beta=0.2$ and that the initial debt $b_0>0$ is such that 
$$
\footnotesize
\begin{aligned}
b_0=&\frac{1}{u'(\ell^{\text{FB}})}\left\{\sum_{t\ge 0,\,t\ne1}\beta^t\left[u'(\ell^{\text{FB}})\ell^{\text{FB}}+v'(\ell^{\text{FB}})\ell^{\text{FB}}\right]+\beta \left[\frac{1}{2}(u'(\underline{c})\underline{c}+v'(\underline{c})\underline{c})+\frac{1}{2}\left(u'(\ell^{\text{FB}})\ell^{\text{FB}}+v'(\ell^{\text{FB}})\ell^{\text{FB}}\right)\right]\right\}\\
=&\frac{1}{u'(\ell^{\text{FB}})}\left\{\frac{\beta}{2}(u'(\underline{c})\underline{c}+v'(\underline{c})\underline{c})\right\}.
\end{aligned}
$$
We will show that, when $\underline{c}\rightarrow 0.1+$, the lottery solution attains strictly higher welfare than the deterministic solution.
\paragraph{The lottery solution}
A feasible lottery solution is 
$$
\ell_t=\ell^{\text{FB}},\,\forall t\ne1;\, \ell_1^H=\ell^{\text{FB}}(p=0.5),\,\ell_1^L=\underline{c}(p=0.5).
$$
We define $V^{\text{FB}}$ as the first best welfare. The total welfare of this lottery solution is
$$
V^{\text{lot}}= V^{\text{FB}}-\frac{\beta}{2} \left\{\underbrace{\left[u(\ell^{\text{FB}})+v(\ell^{\text{FB}})\right]}_{\approx0.0998<1}-[\underbrace{u(\underline{c})}_{=\log(\underline{c}-0.1)}+\underbrace{v(\underline{c})}_{>0}]\right\}>\underbrace{V^{\text{FB}}-0.1+0.1\log(\underline{c}-0.1)}_{\underline{V}^{\text{lot}}}.
$$
\paragraph{The deterministic solution}
All feasible deterministic solutions must satisfy
$$
b_0\le \frac{1}{u'(\ell_0)}\left\{\sum_{t\ge 0}\beta^t\left[u'(\ell_t)\ell_t+v'(\ell_t)\ell_t\right]\right\},
$$
implying that
$$
\begin{aligned}
&\sum_{t\ge 0}\beta^t\left[u'(\ell_t)\ell_t+v'(\ell_t)\ell_t\right]\ge \frac{u'(\ell_0)}{u'(\ell^{\text{FB}})}\left[\frac{\beta}{2}\left(u'(\underline{c})\underline{c}+v'(\underline{c}\right)\underline{c})\right]\\\ge& \underbrace{\frac{u'(\bar{\ell})}{u'(\ell^{\text{FB}})}}_{\approx \frac{1}{1.3213(\bar{\ell}-0.1)}>\frac{2}{3}}\left[\frac{\beta}{2}\left(u'(\underline{c})\underline{c}+v'(\underline{c}\right)\underline{c})\right]\ge \frac{\beta}3\left(u'(\underline{c})\underline{c}+v'(\underline{c})\underline{c}\right).
\end{aligned}
$$
Since
$$
\sum_{t\ge 2}\beta^t\left[u'(\ell_t)\ell_t+v'(\ell_t)\ell_t\right]\le \frac{\beta^2}{1-\beta}\left(u'(\underline{c})\underline{c}+v'(\underline{c})\underline{c}\right)=\frac{\beta}{4}\left(u'(\underline{c})\underline{c}+v'(\underline{c})\underline{c}\right),
$$
we have
\begin{equation}\label{tech_cond}
\sum_{t=0}^{1}\beta^t\left[u'(\ell_t)\ell_t+v'(\ell_t)\ell_t\right]\ge \frac{1}{12}\beta\left(u'(\underline{c})\underline{c}+v'(\underline{c})\underline{c}\right)=\underbrace{\frac{1}{12}\beta\left(\frac{\underline{c}}{\underline{c}-0.1}-\frac{\underline{c}}{2\sqrt{1-\underline{c}}}\right)}_{I}.
\end{equation}
We now show that there exists some $\underline{\underline{c}}>0.1$, s.t. when $\underline{c}<\underline{\underline{c}}$, we have
\begin{equation}\label{eq:tech_example_cond}
\ell_0\le 73(\underline{c}-0.1)+0.1,\,\textbf{or }\ell_1\le 73(\underline{c}-0.1)+0.1. 
\end{equation}
We prove this by contradiction. Suppose the contrary. Then
$$
\sum_{t=0}^{1}\beta^t\left[u'(\ell_t)\ell_t+v'(\ell_t)\ell_t\right]\le \underbrace{6\beta\left[\frac{\underline{c}+72(\underline{c}-0.1)}{73(\underline{c}-0.1)}-\frac{73(\underline{c}-0.1)+0.1}{2\sqrt{0.9-73(\underline{c}-0.1)}}\right]}_{II}
$$
We know that
$$
\lim_{\underline{c}\rightarrow0.1+}\frac{II}{I}=\frac{72}{73}<1,
$$
so \eqref{tech_cond} fails as $\underline{c}\to 0.1^+$, a contradiction. Hence, \eqref{eq:tech_example_cond} holds. Therefore, the welfare of any feasible deterministic solution satisfies
$$
\begin{aligned}
V^{\det}\le& V^{\text{FB}}-\beta \left\{\underbrace{\left[u(\ell^{\text{FB}})+v(\ell^{\text{FB}})\right]}_{\approx 0.0998>0}-[\underbrace{u(73(\underline{c}-0.1)+0.1)}_{=\log(73(\underline{c}-0.1))}+\underbrace{v(73(\underline{c}-0.1)+0.1)}_{<1}]\right\}\\
\le &V^{\text{FB}}+0.2\log(73(\underline{c}-0.1))+0.2=\underbrace{V^{\text{FB}}+0.2\times (1+\log 73)+0.2\log (\underline{c}-0.1)}_{\bar{V}^{\det}}.
\end{aligned}
$$
It is straightforward to see that $\bar{V}^{\det}<\underline{V}^{\text{lot}}$ when $\underline{c}\rightarrow 0.1+$.

  \subsubsection{Debt constraints}
Purification relies crucially on debt constraints not being too tight. We show this using an example that satisfies Assumption~\ref{ass:lot_improve}. Assume that $b_0>0$ and the debt constraint is $\bar{b}=\max\{1,b_0\}$.
\paragraph{The lottery solution}
For any $g\in\mathcal{G}$, define
\[
\small
V_F^{\mathrm{FB}}(g)
=\mathbb{E}\left[\sum_{t=0}^{\infty}\beta^t
F\bigl(\ell^{\mathrm{FB}}(g_t)-g_t,\ell^{\mathrm{FB}}(g_t)\bigr)
\,\middle|\,g_0=g\right],\,V_G^{\mathrm{FB}}(g)
=\mathbb{E}\left[\sum_{t=0}^{\infty}\beta^t
G\bigl(\ell^{\mathrm{FB}}(g_t)-g_t,\ell^{\mathrm{FB}}(g_t)\bigr)
\,\middle|\,g_0=g\right],
\]
where $\ell^{\mathrm{FB}}(g)$ is determined by
\[
u'(\ell^{\mathrm{FB}}(g)-g)+v'(\ell^{\mathrm{FB}}(g))=0.
\]
Consequently,
\[
G\bigl(\ell^{\mathrm{FB}}(g)-g,\ell^{\mathrm{FB}}(g)\bigr)
=-g\,u'(\ell^{\mathrm{FB}}(g)-g)\le 0.
\]
Let $g_{\min}$ be the smallest element of $\mathcal{G}$. We consider the following lottery:
$$
\begin{aligned}
&\ell^L_1(g_0,g_{\min})=\underline{c}+g_{\min}(\text{with probability } p),\,\ell_1^H(g_0,g_{\min})=\ell^{\text{FB}}(g_{\min})(\text{with probability } 1-p);\\
&\ell_t(g^t)=\ell^{\text{FB}}(g_t)\text{ otherwise},
\end{aligned}
$$
where
$$
p=\frac{u'(\ell^{\text{FB}}(g_0)-g_0)b_0-V_G^{\text{FB}}}{\beta \pi(g_{\min}|g_0)\left[G\bigl(\underline{c},\underline{c}+g_{\min}\bigr)-G\bigl(\ell^{\mathrm{FB}}(g_{\min})-g,\ell^{\mathrm{FB}}(g_{\min})\bigr)\right]}\rightarrow0(\underline{c}\rightarrow \zeta).
$$
It is straightforward to check that
\begin{itemize}
    \item The constraint is satisfied;
    \item The total welfare is  $V_F^{\text{FB}}(g_0)+\beta\pi(g_{\min}|g_0)p\left[F\bigl(\underline{c},\underline{c}+g_{\min}\bigr)-F\bigl(\ell^{\mathrm{FB}}(g_{\min})-g_{\min},\ell^{\mathrm{FB}}(g_{\min})\bigr)\right]\rightarrow V_{F}^{\text{FB}}(g_0)$ as $\underline{c}\rightarrow \zeta$\footnote{This relies on the fact that $F(\underline{c},\underline{c}+g_{\min})/G(\underline{c},\underline{c}+g_{\min})\rightarrow0$, which is established in the proof of Proposition \ref{prop:lot_improve}.}.
    \item The debts $(b_t)_{t\ge 1}$ satisfy
    $$
    b^L_t(g_0,g_{\min})=\underline{c}+\frac{1}{u'(\underline{c})}\left[v'(\underline{c}+g_{\min})(\underline{c}+g_{\min})+\sum_{g\in\mathcal{G}}\beta\pi(g|g_{\min})V^{FB}_G(g)\right]\le \underline{c}<\bar{b}; \,b_t\le 0\text{ otherwise.}
    $$
\end{itemize}

\paragraph{The deterministic solution}
For each $g\in\mathcal{G}$, $u'(\ell-g)$ and $u'(\ell-g)(\ell-g)+v'(\ell)\ell$ are continuous at $\ell^{\text{FB}}(g)$. Hence for any $\epsilon>0$, there exist $T_1(\epsilon)>0$ and $T_2(\epsilon)>0$, such that for each $g\in\mathcal{G}$, whenever $|\ell-\ell^{\text{FB}}(g)|\le \epsilon$,
\[
\left|\left[u'(\ell-g)(\ell-g)+v'(\ell)\ell\right]-\underbrace{\left[u'(\ell^{\text{FB}}(g)-g)(\ell^{\text{FB}}(g)-g)+v'(\ell^{\text{FB}}(g))\ell^{\text{FB}}(g)\right]}_{\le 0}\right|\le T_1(\epsilon)
\]
and
\[
|u'(\ell-g)-u'(\ell^{\text{FB}}(g)-g)|\le T_2(\epsilon),
\]
and that $T_1(\epsilon),T_2(\epsilon)\to 0$ as $\epsilon\to 0$. At each state we have
\[
b_t(g^t)u'(c_t(g^t))\le u'(c_t(g^t))c_t(g^t)+v'(\ell_t(g^t))\ell_t(g^t)+\beta\mathbb{E}_{g_t}b_{t+1}(g^{t+1})u'(c_{t+1}(g^{t+1})).
\]
We define $c^{\text{FB}}(g)=\ell^{\text{FB}}(g)-g$, and take $\epsilon$ small enough s.t.
$$
b_0u'(c^{\text{FB}}(g_0))>
\frac{(1+\beta)\bar{b}T_2(\epsilon)+ T_1(\epsilon)}{1-\beta}.
$$
We now show that there exists $\bar{T}>0$ such that, for some $t\le\bar{T}+2$ and some $g^t\in\mathcal{G}^t$,
\[
|\ell_t(g^t)-\ell^{\text{FB}}(g_t)|>\epsilon.
\]
Assume that $|\ell_t(g^t)-\ell^{\text{FB}}(g_t)|\le \epsilon$ for all $t\le T+1$. We now prove by induction that we can find a sequence $g^0\prec g^1\prec\cdots\prec g^{T+1}$, s.t. 
\begin{equation}\label{eq:CMup_ind1}
b_t(g^t)u'(c^{\text{FB}}(g_t))>\frac{(1+\beta)\bar{b}T_2(\epsilon)+ T_1(\epsilon)}{1-\beta},
\end{equation}
and
\begin{equation}\label{eq:CMup_ind2}
\begin{aligned}
b_{t+1}(g^{t+1})u'(c^{\text{FB}}(g_t))\ge& \frac{1}{\beta}b_t(g^t)u'(c^{\text{FB}}(g_t))-\frac{1+\beta}{\beta}\bar{b}T_2(\epsilon)-\frac{1}{\beta}T_1(\epsilon)\\
=&\frac{1}{\beta}\left[b_t(g^t)u'(c^{\text{FB}}(g^t))-\frac{(1+\beta)\bar{b}T_2(\epsilon)+ T_1(\epsilon)}{1-\beta}\right]+\frac{(1+\beta)\bar{b}T_2(\epsilon)+ T_1(\epsilon)}{1-\beta}.
\end{aligned}.
\end{equation}
Indeed, if for some $t\ge 0,\,g^t\in\mathcal{G}^t$, \eqref{eq:CMup_ind1} is satisfied, then we have
$$
\begin{aligned}
&\mathbb{E}_{g_t}b_{t+1}(g^{t+1})u'(c_{t+1}(g^{t+1}))\\\ge& \frac{1}{\beta}\left[b_t(g^t)u'(c_t(g^t))-T_1(\epsilon)\right]\\\ge& \frac{1}{\beta}\left[b_t(g^t)[u'(c^{\text{FB}}(g_t))-T_2(\epsilon)]-T_1(\epsilon)\right]\\\ge& \frac{1}{\beta}b_t(g^t)u'(c^{\text{FB}}(g_t))-\frac{1}{\beta}\bar{b}T_2(\epsilon)-\frac{1}{\beta}T_1(\epsilon).
\end{aligned}
$$
Therefore, there exists some $g^t\prec g^{t+1}$, s.t.
$$
b_{t+1}(g^{t+1})u'(c_{t+1}(g^{t+1}))\ge \frac{1}{\beta}b_t(g^t)u'(c^{\text{FB}}(g_t))-\frac{1}{\beta}\bar{b}T_2(\epsilon)-\frac{1}{\beta}T_1(\epsilon)>0.
$$
It follows that $b_{t+1}(g^{t+1})>0$, and
\[
\begin{aligned}
    &b_{t+1}(g^{t+1})u'(c^{\mathrm{FB}}(g_t))\\
    \ge{}& b_{t+1}(g^{t+1})u'(c_{t+1}(g^{t+1}))-\bar{b}T_2(\epsilon)\\
    \ge{}& \frac{1}{\beta}b_t(g^t)u'(c^{\mathrm{FB}}(g_t))
          -\frac{1+\beta}{\beta}\bar{b}T_2(\epsilon)
          -\frac{1}{\beta}T_1(\epsilon),
\end{aligned}
\]
which is precisely \eqref{eq:CMup_ind2}. It is then straightforward to verify that
$$
b_{t+1}(g^{t+1})u'(c^{\text{FB}}(g_t))>\frac{(1+\beta)\bar{b}T_2(\epsilon)+ T_1(\epsilon)}{1-\beta}
$$
according to \eqref{eq:CMup_ind1} and \eqref{eq:CMup_ind2}.

Combining the inequalities \eqref{eq:CMup_ind2} for $g^0, g^1, \dots, g^T$, we obtain
$$
b_{T+1}(g^{T+1})u'(c^{\text{FB}}(g_{T+1}))\ge (\frac{1}{\beta})^{T}\left[b_0u'(c^{\text{FB}}(g_0))-\frac{(1+\beta)\bar{b}T_2(\epsilon)+ T_1(\epsilon)}{1-\beta}\right]+\frac{(1+\beta)\bar{b}T_2(\epsilon)+ T_1(\epsilon)}{1-\beta},
$$
or equivalently
\begin{equation}\label{eq:incomplete_upper}
b_{T+1}(g^{T+1})\ge (u'(c^{\text{FB}}(g_{T+1})))^{-1}\left\{(\frac{1}{\beta})^{T}\left[b_0u'(c^{\text{FB}}(g_0))-\frac{(1+\beta)\bar{b}T_2(\epsilon)+\beta T_1(\epsilon)}{1-\beta}\right]+\frac{(1+\beta)\bar{b}T_2(\epsilon)+\beta T_1(\epsilon)}{1-\beta}\right\}.
\end{equation}
The LHS of \eqref{eq:incomplete_upper} is bounded above by $\bar{b}$, whereas the RHS diverges to infinity as $T\to\infty$. Hence there exists some $\bar{T}$ depending on $\epsilon, b_0, \beta, g_0$ but not on $\underline{c}$, s.t. $T\le \bar{T}$. For some $t\le \bar{T}+2$ and some $g^t\in\mathcal{G}^t$, we have $|\ell_t(g^t)-\ell^{\text{FB}}(g_t)|>\epsilon$, which implies a welfare loss of at least $T_3(\epsilon)\beta^{\bar{T}+2}\pi_{\min}^{\bar{T}+2}$, where
\[
T_3(\epsilon)=\min_{g\in\mathcal{G}}\inf_{|\ell-\ell^{\text{FB}}(g)|>\epsilon}\left[u(\ell^{\text{FB}}(g)-g)+v(\ell^{\text{FB}}(g))-u(\ell-g)-v(\ell)\right]>0,
\]
and this welfare loss does not depend on $\underline{c}$. Therefore, when $\underline{c}\rightarrow\zeta$, the total welfare of the deterministic solution does not tend to $V_{F}^{\text{FB}}(g_0)$.

\section{Optimal lotteries when markets are incomplete}\label{sec:im}
When the government only has access to a one-period bond, it can be optimal to use lotteries even with CRRA utility, and purification might be impossible even if debt constraints are implicit and $ \beta $ is close to one. To understand the intuition of why CRRA utility does not preclude lotteries, it is useful to first examine a two-period example with lotteries.

\subsection{Two-period intuition}\label{sec:2pim}
We assume $u(c)=\log c$, $v(\ell)=\sqrt{1-\ell}$, and $\beta=1$.
The government can randomize over tax rates in each state $s\in\{1,\cdots, S\}$ in the second period, i.e., the government can randomize between various $\tau_s$ (corresponding to various $\ell_s$) so that it can repay its debt for any $s\in\{1,\cdots, S\}$. The lottery problem can be written as
 \begin{equation}\label{eq:ramsey_reduce_lot}
\begin{aligned}
    &\max_{(\pi_s)_{s=0}^{S}, b\le \bar{b}}u(\ell_0-g_0)+v(\ell_0)+\beta\sum_{s=1}^{S}  p_s \int_{\ell_s:b\le r(\ell_s;g_s)} (u(\ell_s-g_s)+v(\ell_s))\pi_s(d\ell_s),\\
    \text{s.t. } 
    &b_{-1}\le r(\ell_0;g_0)+(\ell_0-g_0)(\underbrace{\beta \sum_{s=1}^{S}p_s \int_{\ell_s:b\le r(\ell_s;g_s)} \frac{b}{\ell_s-g_s}\pi_s(d\ell_s)}_{w'}),
\end{aligned}   
\end{equation}
 where
 \begin{equation}\label{eq:def_h_and_f}
    r(\ell;g):=(\ell-g) \left( 1-\frac{\ell}{2\sqrt{1-\ell}} \right).
\end{equation}
    The function $r(\ell,g)$ can be interpreted as the net surplus of a government that sets the tax rate so as to induce a labor supply of $ \ell $ and spends $g$. 
 The key to understanding why randomized taxes might improve welfare in this example is to note that in the second period there are generally two possible tax rates that yield the same government surplus.
\begin{figure}[tbh]
\centering
\includegraphics[width=0.4\textwidth]{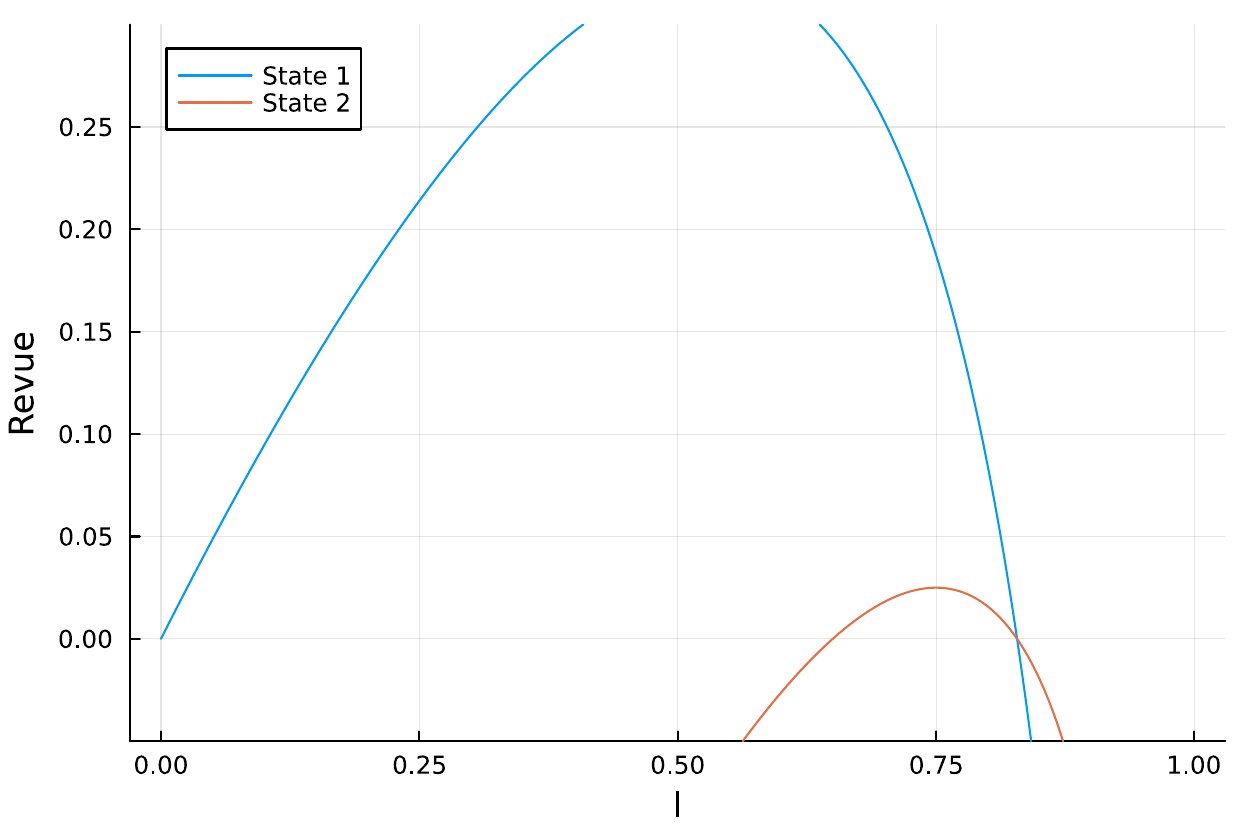}
\includegraphics[width=0.4\textwidth]{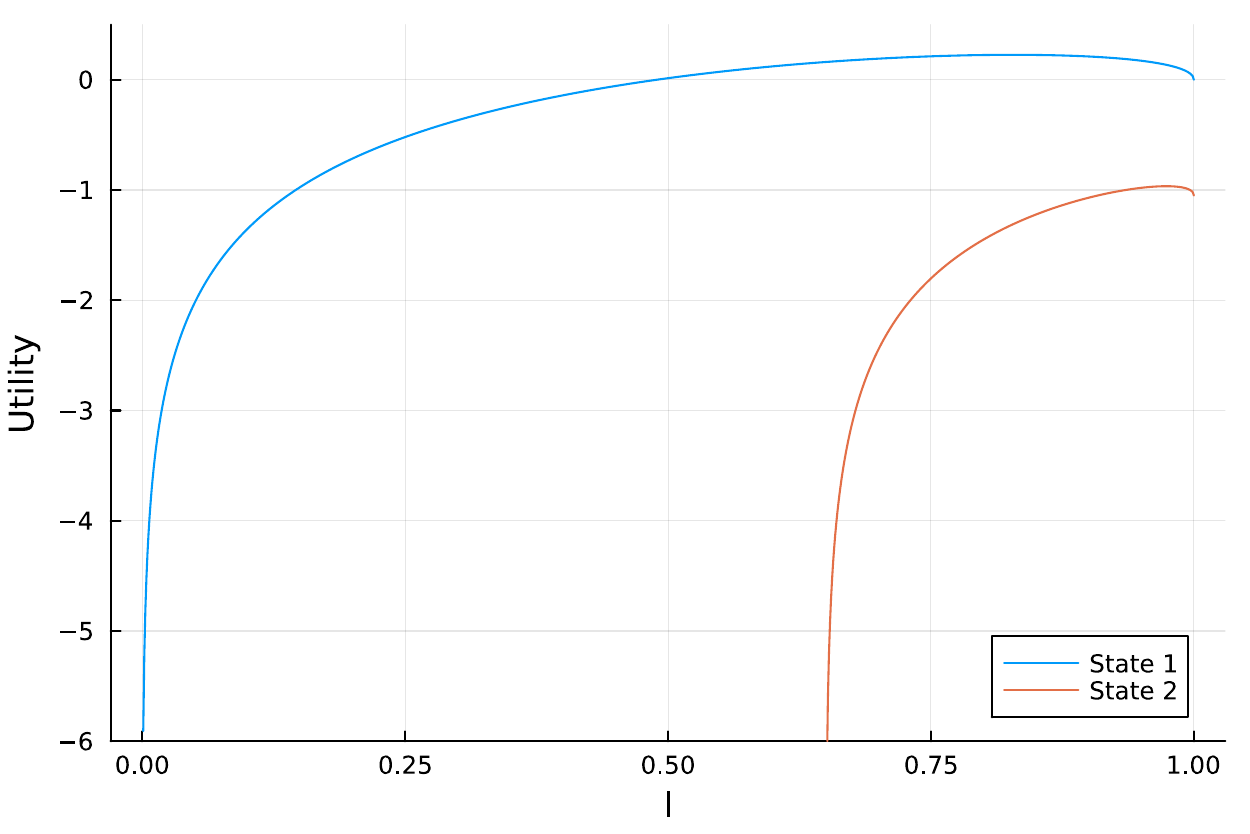}
\caption{\label{fig:fig1} The left panel shows the functions $ r(\ell;g(s)) $ for $ s=1,2 $. The right panel shows utility as a function of $\ell$ .}
\end{figure}
 
To illustrate this, we consider a concrete example and assume $p_L=0.9$, as well as
 $ 0 = g_0=g_L<g_H = 0.65,\,b_{-1}=0.45$. Figure \ref{fig:fig1} shows how much revenue the government can generate in each state $s$ and the associated utility in that state.
Note first that the amount of debt is limited by the government's ability to pay back in state 2. Debt levels above approximately 0.025 are not feasible. Second, since $b_{-1}>\max_{\ell_0} r(\ell_0;g_0)$, the government cannot repay its entire debt in the first period, i.e. only $ b>0 $ is feasible. Third, for a given level of feasible debt, the government can collect the necessary revenue either with a very high tax, which implies low $ \ell $ and low welfare, or with a relatively low tax, which implies high $ \ell $ and high welfare.
The key to why the government might find it optimal to sometimes set very high taxes lies in the fact that it can only generate a large revenue from selling bonds at $ t=0 $ if the interest rate implied by the household's Euler equation \eqref{eq:ramsey_agenteuler} is sufficiently low. Since borrowing is constrained by the large government expenditure in state 2, $ g_2=0.65$, the government might find it optimal to announce a lottery for state 1 that imposes a very high tax with some moderate probability, leading to very low consumption, a high demand for savings, and very low interest rates in period zero. 

The lottery problem \eqref{eq:ramsey_reduce_lot} is solved numerically using the Lagrangian iteration method proposed in \cite{shen2025lagrangian}. For this simple example, the optimal deterministic solution is obtained via grid search, and the resulting solutions are presented in Table \ref{table:comparison}.\footnote{Values reported as 0 indicate absolute values less than $1\times10^{-3}$.}
\begin{table}[htbp]
\centering
\begin{tabular}{lcccccccc}
\toprule
 & $b$ & $\ell_0$ & $\ell_1$ & $\ell_2$ & $T_0$ & $T_1$ & $T_2$ & Welfare \\
\midrule
\multirow{2}{*}{Lottery} 
   & \multirow{2}{*}{0.024} 
   & \multirow{2}{*}{0.625} 
   & 0.024, 21.36\% & \multirow{2}{*}{0.767} 
   & \multirow{2}{*}{0} 
   & 0 & \multirow{2}{*}{0} 
   & \multirow{2}{*}{-0.391} \\
   & & & 0.821, 78.64\% & & & 0 & & \\
\midrule
Deterministic & 0.024 & 0.605 & 0.108 & 0.743 & 0 & 0.078 & 0 & -1.213 \\
\bottomrule
\end{tabular}
\caption{Lottery solution vs. Deterministic solution}\label{table:comparison}
\end{table}
In this example, the government's ability to raise revenue from debt in period zero turns out to improve significantly if lotteries are possible. In the deterministic case, the government must impose a high tax in state \(s=1\) to ensure that the marginal value of debt repayment, captured by \(bu'(c_1)=\frac{b}{\ell_1-g_1}\), is sufficiently large to roll over the initial debt \(b_{-1}\). This results in excess tax revenue \(T_1=0.078\) in that state, which is then transferred back to the agents. Because agents are prudent, \(u'(c_1)\) is convex in consumption \(c_1\). As a result, randomizing the labor tax in state \(s=1\) (and hence the consumption $c_1$ in state \(s=1\)) between a lower rate and a higher rate raises the expected marginal value \(\mathbb{E}bu'(c_1)\), making it easier for the government to roll over its initial debt. On the other hand, agents are also risk-averse, making the period utility \(u(\ell_1-g_1)+v(\ell_1)\) concave in \(\ell_1\). Randomization in the labor tax in $s=1$ therefore reduces the agent's expected utility. When the benefit of randomization (driven by prudence) outweighs its cost (driven by risk aversion), the government can use lotteries to improve social welfare. In the lottery solution, the government randomizes the labor tax in state \(s=1\) and reduces the tax burden in states \(s=0\) and \(s=2\). This leads to higher labor supply in those states and a substantial welfare gain compared to the deterministic solution.

\subsection{An example for the full problem}
We now extend the intuition to an infinite-horizon example. As before,
we assume that $u(c_t)=\log c_t$ and $v(\ell_t)=\sqrt{1-\ell_t}$, and consider the following deterministic problem
\begin{align}  \label{eq:ramsey_obj_infty_twostate}\max_{(1-\epsilon\ge \ell_t\ge g_t+\epsilon,\,b_t)_{t=0}^{\infty}} & \mathbb{E}_0\sum_{t=0}^{\infty}\beta^t\left[\log(\ell_t-g_t)+\sqrt{1-\ell_t}\right], \\ 
\label{eq:ramsey_accumulate_infty_twostate}\mbox{ s.t. } &\frac{b_{t-1}}{\ell_t-g_t}\le \underbrace{1-\frac{\ell_t}{2\sqrt{1-\ell_t}}}_{h(\ell_t)}+\beta\mathbb{E}_t\frac{b_{t}}{\ell_{t+1}-g_{t+1}},\\
& \sup_t |b_t| < \infty.
\end{align}

It is straightforward to verify that $h$ is a decreasing concave function and admits a unique zero $g^*=2\sqrt{2}-2$. In this example, there are two states $g^1=0$, and $g^2=g^*-\sigma$. The transition probability matrix is:
\[P_{\delta}= (p(g'=g^j|g=g^i))_{i,j}=
\begin{pmatrix}
1-\delta & \delta\\
\delta & 1-\delta 
\end{pmatrix}.
\]
The parameters $\sigma$ and $\delta$ are assumed to satisfy the following assumption.
\begin{assumption}\label{twostate_example_ass1}
   We assume that there exists $k>0$, s.t. $\sigma=e^{-k}$, $\epsilon=e^{-3k}$, $\delta=e^{-5k}$. 
\end{assumption}
The assumption ensures that, when $k$ is sufficiently large, $\delta  \ll \epsilon \ll \sigma \ll \frac{1}{k}$.  

For any $b_{t-1}\ge 0$, we define
\begin{equation}\label{equ:defffunc}
    f(\ell_t;b_{t-1},g_t)=\frac{b_{t-1}}{\ell_t-g_t}-h(\ell_t)=\frac{b_{t-1}}{\ell_t-g_t}+\frac{\ell_t}{2\sqrt{1-\ell_t}}-1.
\end{equation}
and rewrite the forward-looking constraint \eqref{eq:ramsey_accumulate_infty_twostate}  as
\begin{equation}\label{eq:ramsey_accumulate_infty_fform_twostate}
f(\ell_t;b_{t-1},g_t)\le \beta\mathbb{E}_t\frac{b_t}{\ell_{t+1}-g_{t+1}}.
\end{equation}
The next lemma shows that $f(\ell_t;b_{t-1},g_t)$ is a U-shaped function in $\ell_t$.
\begin{lemma}\label{lem:infinite_Ramsey_convex_twostate_text}
    For any $g_t$ and $b_{t-1}\ge 0$, $f(\ell_t;b_{t-1},g_t)$ defined in \eqref{equ:defffunc}
    is a convex function in $\ell_t$ and has a unique minimal point $g_t+\epsilon\le \ell_t^*(b_{t-1},g_t)\le 1-\epsilon$. Furthermore, we define 
    \begin{equation}\label{eq:deffstarfunc}
    f^*(b_{t-1},g_t)=f(\ell_t^*(b_{t-1},g_t);b_{t-1},g_t),
    \end{equation}
    $f^*(b_{t-1},g_t)\ge \frac{g_t+\epsilon}{2\sqrt{1-(g_t+\epsilon)}}-1>-1$ is increasing w.r.t. $b_{t-1}$.
\end{lemma}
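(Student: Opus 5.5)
Convexity follows from writing $f$ as a sum of two convex pieces. On the domain $\ell_t>g_t$, the map $\ell\mapsto b_{t-1}/(\ell-g_t)$ is convex for $b_{t-1}\ge 0$, since its second derivative is $2b_{t-1}/(\ell-g_t)^3\ge 0$. The map $\ell\mapsto \ell/(2\sqrt{1-\ell})-1=-h(\ell)$ is convex because $h$ is concave, as already noted in the text. A short direct computation confirms this: writing $\phi(\ell)=\ell(1-\ell)^{-1/2}/2$, one gets $\phi'(\ell)=\tfrac12(1-\ell)^{-1/2}+\tfrac14\ell(1-\ell)^{-3/2}>0$ and $\phi''(\ell)=\tfrac12(1-\ell)^{-3/2}+\tfrac38\ell(1-\ell)^{-5/2}>0$ on $[0,1)$. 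Hence $f$ is convex, and in fact strictly convex because $\phi''>0$.

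For the unique minimizer on $[g_t+\epsilon,1-\epsilon]$, strict convexity of a continuous function on a compact interval gives a unique minimizer $\ell^*_t(b_{t-1},g_t)$ in that interval. This is what the statement asserts, with $\ell^*_t$ understood as the constrained argmin. If an interior critical point is wanted, I would examine $f'(\ell)=-b_{t-1}/(\ell-g_t)^2+\phi'(\ell)$. Near $\ell=1$, $\phi'$ blows up like $(1-\epsilon)^{-3/2}$-order terms, so $f'(1-\epsilon)>0$ once $\epsilon$ is small relative to $b_{t-1}$. I would not insist on interiority, however; the constrained minimizer suffices.

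For the lower bound, on the feasible interval we have $b_{t-1}/(\ell-g_t)\ge 0$ and $\phi$ is increasing. This gives $f(\ell)\ge \phi(\ell)-1\ge \phi(g_t+\epsilon)-1=\frac{g_t+\epsilon}{2\sqrt{1-(g_t+\epsilon)}}-1$. The last quantity is strictly greater than $-1$ because $g_t+\epsilon>0$. Evaluating at $\ell^*_t$ gives the bound on $f^*$.

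Monotonicity in $b_{t-1}$ follows from the fact that, for each fixed $\ell$ in the interval, $f(\ell;b,g_t)$ is increasing in $b$, since $\ell-g_t>0$. The function $f^*(b,g_t)=\min_{\ell}f(\ell;b,g_t)$ is a pointwise minimum over a fixed set $[g_t+\epsilon,1-\epsilon]$ that does not depend on $b$, so it inherits this monotonicity. More explicitly, for $b<b'$ we have $f^*(b)\le f(\ell^*(b');b)\le f(\ell^*(b');b')=f^*(b')$, with strict inequality whenever $\ell^*(b')$ is finite, because $1/(\ell-g_t)>0$.

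The only delicate point is whether the claim intends an interior critical point. I would treat $\ell^*_t$ as the constrained minimizer, so that no case analysis is needed. Everything else is routine calculus.
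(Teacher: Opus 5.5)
Your proof is correct and follows essentially the same route as the paper, which computes $f''=\frac{2b_{t-1}}{(\ell_t-g_t)^3}+\frac12\frac{1-\ell_t/4}{(1-\ell_t)^{5/2}}>0$ (your two convex pieces combined), takes the constrained minimizer on $[g_t+\epsilon,1-\epsilon]$, proves monotonicity with the same chain $f^*(b^1)\le f(\ell^*(b^2);b^1)\le f(\ell^*(b^2);b^2)=f^*(b^2)$, and gets the lower bound by dropping the nonnegative debt term. Your explicit use of the monotonicity of $\phi$ in the lower bound, and your remark that the monotonicity is in fact strict, are small clarifications that the paper leaves implicit.
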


When $b_{t-1}$ increases, $f(\ell_t;b_{t-1},g_t)$ increases. The next proposition shows that, when $g_0=g^1$ and $b_{-1}$ is greater than some threshold, the U-shaped function $f(\ell_0;b_{-1},g_0)$ is always greater than 0. Inequality \eqref{eq:ramsey_accumulate_infty_fform_twostate} then implies that the government must hold  positive debt for the next period.
\begin{proposition}\label{prop:prop_bn1_twostate}
    Given $g_0=g^1$, there exist $K_1,\,C_1>0$, s.t. when $k>K_1$, 
    \begin{enumerate}
        \item there exists $b_{-1}>0$ s.t. $f^*(b_{-1},g_0)=\frac{1}{k}$;
        \item $\ell_0^*(b_{-1},g_0)> C_1$.
    \end{enumerate}
\end{proposition}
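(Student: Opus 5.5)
The plan is to use the continuity and monotonicity of $f^*(\cdot,g^1)$ in $b_{-1}$ from Lemma \ref{lem:infinite_Ramsey_convex_twostate_text}, together with the intermediate value theorem. With $g_0=g^1=0$ we have $f(\ell;b,0)=\frac{b}{\ell}+\frac{\ell}{2\sqrt{1-\ell}}-1$ on $[\epsilon,1-\epsilon]$.

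For part 1, first look at $b=0$. Then $f(\ell;0,0)=-h(\ell)$, which is increasing in $\ell$, so the minimum sits at the left endpoint $\ell=\epsilon$. This gives
\[
f^*(0,0)=\frac{\epsilon}{2\sqrt{1-\epsilon}}-1<0<\frac1k .
\]
Next take $b$ large, say $b=2$. Since $\ell\le 1$, we get $f(\ell;2,0)\ge 2-1=1>\frac1k$ for $k>1$. Continuity of $f^*$ in $b$ holds because $f^*$ is a minimum over a compact interval of a function jointly continuous in $(\ell,b)$. Strict monotonicity follows because $f$ is strictly increasing in $b$ for $\ell>0$. The intermediate value theorem then yields a unique $b_{-1}\in(0,2)$ with $f^*(b_{-1},0)=\frac1k$.

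For part 2, write $\ell^*=\ell_0^*(b_{-1},0)$ and argue by contradiction, with a quantitative bound uniform in $k$. Suppose $\ell^*$ is small. The minimum value is at least $\frac{b_{-1}}{\ell^*}-1$, so
\[
\frac{b_{-1}}{\ell^*}\le 1+\frac1k\le 2 ,
\]
which means $b_{-1}\le 2\ell^*$ is also small. Now compare with a fixed interior point, for instance $\ell=\tfrac12$. There
\[
f(\tfrac12;b_{-1},0)=2b_{-1}+\frac{1}{2\sqrt2}-1\le 4\ell^*+\frac{1}{2\sqrt2}-1 ,
\]
and this is negative, around $-0.6$, once $\ell^*<0.05$. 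This contradicts $f(\tfrac12;b_{-1},0)\ge f^*=\frac1k>0$. Hence $\ell^*\ge C_1:=0.05$, independently of $k$.

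One boundary issue needs care. We need $\epsilon<C_1$ and $\tfrac12\in[\epsilon,1-\epsilon]$, both of which hold for $k$ large, since $\epsilon=e^{-3k}$. If $\ell^*$ happens to hit the left boundary $\epsilon$, the same estimate still applies, because it only uses $f(\ell^*)=f^*$. The only step that needs any attention is this uniform lower bound. I expect it to follow from the crude comparison above, so $K_1$ can be chosen as any value making $\frac1k<1$ and $e^{-3k}<0.05$.
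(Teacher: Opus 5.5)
Your proof is correct. Part 2 is essentially the paper's argument written in terms of $f$; Part 1 takes a different route.

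\textbf{How the paper argues.} It rewrites $f^*(b_{-1},g^1)=\frac1k$ as
\[
b_{-1}=\max_{\epsilon\le\ell\le1-\epsilon}\ell\Bigl(1+\frac1k-\frac{\ell}{2\sqrt{1-\ell}}\Bigr).
\]
Existence of $b_{-1}$ is then immediate, and $\ell_0^*$ is the maximizer of this function. The paper bounds the function by $(1+\frac1k)\ell$ on $[\epsilon,C_1]$. It then compares this with the function's value at the fixed interior point $g^*/2$.

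\textbf{Part 2.} Your two inequalities are exactly these two steps. The bound $b_{-1}\le(1+\frac1k)\ell^*$ is the paper's crude bound. The condition $f(\frac12;b_{-1},0)\ge\frac1k$ is equivalent to $b_{-1}\ge\frac12\bigl(1+\frac1k-\frac{1}{2\sqrt2}\bigr)$, the value of the same function at $\ell=\frac12$. The only difference is that you compare at $\frac12$ rather than $g^*/2$.

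\textbf{Part 1.} The paper gets existence for free from the explicit max formula. You instead use continuity and strict monotonicity of $f^*$ in $b$ together with the intermediate value theorem. This is equally elementary and also yields uniqueness of $b_{-1}$, which the paper does not state.

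\textbf{A small fix.} The statement requires $\ell_0^*>C_1$ strictly, but you conclude only $\ell^*\ge C_1$. Your own estimates already give the strict bound. From
\[
0<\frac1k\le 4\ell^*+\frac{1}{2\sqrt2}-1
\]
you get $\ell^*>\frac14\bigl(1-\frac{1}{2\sqrt2}\bigr)\approx0.16$. So $\ell^*>0.05$ strictly, with room to spare.
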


The initial states of this example are assumed to satisfy the following assumption.
\begin{assumption}\label{twostate_example_ass2} The initial conditions satisfy
    \begin{itemize}
    \item $g_0=g^1$;
     \item $b_{-1}>0$ such that $f^*(b_{-1},g_0)=\frac{1}{k}$, for $k>K_1$.
\end{itemize}
\end{assumption}

The positive probability of a long-lasting bad state $g^2$ severely limits the government's feasible debt $b_t$ for $t\ge 1$, as the following proposition shows.
\begin{proposition}\label{prop:small_debt}
    Let $0<\beta<1$. There exist $K_2>0$ and $C_2>0$, s.t. when $k>K_2$, the feasible debt must satisfy $b_t\le C_2\sigma$ for all $t$. 
\end{proposition}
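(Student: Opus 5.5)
The plan is to work with the marginal-utility-weighted debt $x_t:=b_{t-1}u'(c_t)=b_{t-1}/(\ell_t-g_t)$. In this variable the constraint \eqref{eq:ramsey_accumulate_infty_twostate} is the linear recursion
\[
x_t\le h(\ell_t)+\beta\,\mathbb{E}_t x_{t+1},
\]
and it must hold at every history $g^t$. I will first show that $x$ is $O(\sigma)$ in the bad state $g^2$. Then I will use the constraint at a bad-state successor node to turn this into $b_t=O(\sigma)$.

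\textbf{Step 1 (global bound).} Because $\sup_t|b_t|<\infty$ and $c_t\ge\epsilon$, the quantity $X:=\sup_{g^t}x_t$ is finite. Since $h$ is decreasing, $h\le h(0)=1$. Taking the supremum in the recursion gives $X\le 1+\beta X$, so $X\le 1/(1-\beta)$. The same bound holds for $X_G$, the supremum of $x_t$ over histories with $g_t=g^1$.

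\textbf{Step 2 (bad-state bound).} Let $X_B$ be the supremum of $x_t$ over histories with $g_t=g^2$. In state $g^2$ we have $\ell_t\ge g^2+\epsilon$. Concavity of $h$ together with $h(g^*)=0$ yields
\[
h(\ell)\le h'(g^*)(\ell-g^*)=a\,(g^*-\ell)\le a(\sigma-\epsilon)\le a\sigma,\qquad a:=-h'(g^*)>0.
\]
Next period the state stays $g^2$ with probability $1-\delta$ and moves to $g^1$ with probability $\delta$. Hence
\[
X_B\le a\sigma+\beta\bigl[(1-\delta)X_B+\delta X_G\bigr].
\]
Using Step 1 and solving for $X_B$ gives
\[
X_B\le \frac{a\sigma+\beta\delta/(1-\beta)}{1-\beta(1-\delta)}\le \frac{a\sigma+\delta/(1-\beta)}{1-\beta}.
\]
Assumption \ref{twostate_example_ass1} gives $\delta=e^{-5k}\ll\sigma=e^{-k}$, so $X_B\le K\sigma$ for a constant $K$ once $k$ is large. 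Consequently, from any bad-state node the continuation satisfies $\mathbb{E}x_{\cdot}\le(1-\delta)X_B+\delta X_G\le K'\sigma$.

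\textbf{Step 3 (bounding $b_t$).} Fix any history $g^t$, with either value of $g_t$. The successor $g^{t+1}=(g^t,g^2)$ has positive probability, and the constraint holds there. Writing $c=\ell_{t+1}-g^2\ge\epsilon$, it reads
\[
b_t\le c\bigl[h(g^2+c)+\beta\,\mathbb{E}_{t+1}x_{t+2}\bigr]\le c\bigl[a(\sigma-c)+K'\sigma\bigr].
\]
The right-hand side is a concave quadratic in $c$. Its maximum over $c>0$ is $\bigl(a\sigma+K'\sigma\bigr)^2/(4a)=O(\sigma^2)$, which is at most $C_2\sigma$. A cruder split also suffices: the right-hand side is negative when $c>(1+K'/a)\sigma$, and otherwise it is at most $(1+K'/a)\sigma(a\sigma+K'\sigma)$. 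Either way $b_t\le C_2\sigma$ for all $k$ beyond some $K_2$.

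\textbf{Main obstacle.} The delicate point is Step 2. The bound there must not depend on $c$, because in state $g^2$ consumption, and hence $c\cdot x$, could a priori be large. What saves the argument is that the relevant supremum is taken in the $x$-variable, where $h\le a\sigma$ holds uniformly in $\ell$ in the bad state. The dependence on $c$ is handled only at the last step, by the concave quadratic in Step 3. I also need to confirm that the suprema $X$, $X_G$, $X_B$ are finite, so the fixed-point manipulations are legitimate. This follows from $\sup_t|b_t|<\infty$ and $u'(c)\le 1/\epsilon$.
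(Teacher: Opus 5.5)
Your proof is correct and follows essentially the paper's route. The paper also bounds $b_{t-1}/(\ell_t-g^2)$ in the bad state: it iterates the constraint forward using the transversality condition and splits the discounted sum at the first exit from $g^2$. This gives the bound $\frac{h(g^2+\epsilon)}{1-\beta(1-\delta)}+\frac{\beta\delta\,h(\epsilon)}{(1-\beta)(1-\beta(1-\delta))}$, which is your $\frac{a\sigma+\beta\delta/(1-\beta)}{1-\beta(1-\delta)}$ up to those two constants, and the paper then reads off the bound on $b_{t-1}$ from the bad-state successor using $\ell_t-g^2\le 1$. Your supremum/fixed-point argument is a clean substitute for the explicit forward iteration, and your concave-quadratic step in Step 3 yields the sharper $b_t=O(\sigma^2)$, which matches the order of the $b_0$ used in the paper's lottery construction.
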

The proposition implies that the implicit debt constraint can be replaced with an explicit constraint as constructed.

Given the assumptions, the government must hold positive debt in the second period to roll over the initial debt, while the positive probability of the bad state limits the amount it can hold. The key question is at what cost the government can raise the bond price in the first period. We consider a lottery solution in which the government randomizes taxes in the second period at the low-expenditure state. This raises the first-period bond price due to the precautionary saving motive of agents. In the deterministic problem, by contrast, rolling over the debt requires the government to choose a high tax rate (and hence low consumption) deterministically in the second period. The next two propositions compare the welfare of the lottery described above with that of the deterministic solution, showing that this lottery yields higher.
\begin{proposition}[Lower bound of the optimal utility with lotteries]\label{prop_lb_twostate}
    There exist $C_{\text{lot}}\in\mathbb{R}$ and $K_{\text{lot}}>0$, s.t. the optimal utility with lottery is not less than $C_{\text{lot}}$ when $k>K_{\text{lot}}$.
\end{proposition}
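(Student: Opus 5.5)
Since the lottery value is a supremum, the plan is to construct one explicit feasible lottery policy and bound its welfare below by a constant independent of $k$. The construction follows the intuition of Section~\ref{sec:2pim}. At $t=0$ (state $g^1$), set $\ell_0=\ell_0^*(b_{-1},g_0)$. By Assumption~\ref{twostate_example_ass2}, constraint \eqref{eq:ramsey_accumulate_infty_fform_twostate} then requires
\[
\beta\,\mathbb{E}_0\frac{b_0}{\ell_1-g_1}\ge \frac{1}{k}.
\]
Proposition~\ref{prop:prop_bn1_twostate} gives $\ell_0^*>C_1$, so period-0 utility $\log\ell_0+\sqrt{1-\ell_0}$ is bounded below uniformly in $k$.

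The choices for $t\ge1$ are as follows.
\begin{itemize}
\item Set $b_0=\sigma$, which is small and consistent with Proposition~\ref{prop:small_debt}.
\item In state $g^1$ at $t=1$, randomize: with probability $q$ set $\ell_1=\epsilon$ (consumption $\epsilon$), and with probability $1-q$ set $\ell_1=\bar\ell$, a fixed interior level with $h(\bar\ell)>0$.
\item Choose $q$ so that $\beta(1-\delta)\,\sigma\left[q/\epsilon+(1-q)/\bar\ell\right]\ge 1/k$. Since $\sigma/\epsilon=e^{2k}$, we can take $q\approx e^{-2k}/(\beta k)\to0$.
\item In state $g^2$ at $t=1$ (probability $\delta$), set $\ell_1=1-\epsilon$.
\item From $t=1$ on, repay the debt $b_0=\sigma$ and then run zero debt forever: $b_t=0$ for $t\ge1$.
\end{itemize}

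Feasibility from $t=1$ requires $\sigma/(\ell_1-g_1)\le h(\ell_1)$ in every realized branch, after which we need $b_{t-1}=0$ and $h(\ell_t)\ge0$ for all later periods.
\begin{itemize}
\item In the high-$\ell$ branch of $g^1$, this holds since $h(\bar\ell)>0$ is fixed and $\sigma\to0$.
\item In the low branch, $\sigma/\epsilon=e^{2k}$ is huge, so repaying there fails. Instead, that branch must also roll the debt into $b_1$ (bounded by $C_2\sigma$), and we pick later $\ell$'s to pay it off gradually. This is the main obstacle, since $q$ must stay small while later constraints in that branch remain satisfiable.
\item As a fix, in the low branch set $b_1=\sigma$ again and from $t=2$ onward mimic the high branch deterministically. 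The constraint there reads $\sigma/\epsilon\le h(\epsilon)+\beta\mathbb{E}[\sigma/c_2]$, which fails. So instead I would place the low-consumption branch where no debt is owed: choose the lottery so that in the low branch the government has $b_0$-repayment satisfied with consumption $\epsilon$ only if $b_0\le\epsilon h(\epsilon)$. That is too small.
\end{itemize}
Hence I would first try the alternative of putting the lottery on $\ell_1$ itself but with $c=\epsilon$ and then transferring: the relevant inequality becomes $b_0u'(c)\le$ revenue term, and I would check whether $\ell$ low (high tax) raises $h$ so the branch is feasible, using the fact that $h(\ell)\to1$ as $\ell\to0$. Then $b_0/\epsilon\le 1$ requires $b_0\le\epsilon$. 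With $b_0=\epsilon$, the required $q$ solves $\beta(1-\delta)q\approx 1/k$, so $q\approx1/(\beta k)\to0$; this works.

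In state $g^2$ with $\ell_1=1-\epsilon$, we have $h(1-\epsilon)$ very negative, on the order of $-\epsilon^{-1/2}$. Since $\delta=e^{-5k}\ll\epsilon$, I instead take $\ell_1$ there near $g^*$, giving $h\approx\sigma$ times a constant, which is positive enough to cover $\epsilon/(\ell-g^2)$. After that, with zero debt, set $\ell_t=\ell^{\text{FB}}$-type levels with $h\ge0$; this is possible in $g^1$, and in $g^2$ at $\ell\le g^*$, where consumption is $\ge\sigma$. Utility in $g^2$ is $\approx\log\sigma=-k$, but it is weighted by its probability, and the fraction of time spent in $g^2$ starting from $g^1$ is $O(\delta/(1-\beta))$. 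The contribution is therefore $O(k e^{-5k})$. Likewise the low branch contributes $q\log\epsilon\approx-3k/(\beta k)=-3/\beta$, which is bounded. Summing gives a welfare bounded below by a $k$-independent $C_{\text{lot}}$ once $k>K_{\text{lot}}$, which verifies the $O(1)$ bounds that are the crux.
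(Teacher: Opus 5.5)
Your final construction is essentially the paper's. You take $\ell_0=\ell_0^*(b_{-1},g_0)$ and a tiny $b_0$. In state $g^1$ at $t=1$ you run a lottery whose low branch has consumption comparable to $b_0$, so $b_0/c_1^L$ is of order one and the required probability is $\Theta(1/k)$. The $g^2$ branch has $\ell_1$ just below $g^*$, with consumption of order $\sigma$, and from then on debt is zero and $\ell_t\le g^*$. The low branch then costs $\Theta(1/k)\cdot\log(\text{tiny})=O(1)$. The paper's parameters are $b_0=\frac{\sigma}{2}h(g^*-\frac{\sigma}{2})\sim\sigma^2$, which is exactly what the $g^2$ branch can repay with $c=\sigma/2$, together with $\ell_1^L=2b_0$, $\ell_1^H=g^*/2$, $p^L=2/(\beta(1-\delta)k)$ and $\ell_t\equiv g^*$ for $t\ge2$. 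Your scaling $b_0\sim\epsilon$ would work equally well, since in the $g^2$ branch $b_0/c_1\sim\epsilon/\sigma\ll\sigma\sim h(\ell_1)$.

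However, the parameters you actually settle on fail. With $b_0=\epsilon$, $\ell_1^L=\epsilon$ and $b_1=0$, the period-1 constraint in the low branch reads $b_0/\epsilon=1\le h(\epsilon)=1-\frac{\epsilon}{2\sqrt{1-\epsilon}}$, which is false because $h(\epsilon)<1$. You approximated $h(\epsilon)$ by $1$ and concluded that $b_0\le\epsilon$ suffices; the correct requirement is $b_0\le\epsilon h(\epsilon)$. Contrary to your earlier remark, that bound is not ``too small'', because only the ratio $b_0/c_1^L$ enters the $t=0$ constraint. The fix is the paper's: keep the ratio strictly below $h$, e.g.\ $\ell_1^L=2b_0$ or $b_0=\epsilon/2$. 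This raises $q$ to $2/(\beta(1-\delta)k)$, and the welfare cost stays bounded at about $-6$ instead of $-3$.

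Three further corrections are needed for a clean write-up. First, drop your headline plan ($b_0=\sigma$, $q\approx e^{-2k}/(\beta k)$, $\ell_1(g^2)=1-\epsilon$). With zero continuation debt, $b_0=\sigma$ cannot be repaid in $g^2$: $h(\ell_1)>0$ forces $c_1<\sigma$, so $b_0/c_1>1$, while $h\le h(g^2+\epsilon)=O(\sigma)$ there. Second, in $g^2$ with $\ell\le g^*$, consumption is at most $\sigma$, not at least; it equals $\sigma$ at $\ell=g^*$. Third, $\ell_1(g^2)$ must lie below $g^*$ by order $\sigma$, e.g.\ $g^*-\sigma/2$, so that $h(\ell_1)>0$. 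With these corrections your estimates go through as in the paper.
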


\begin{proposition}[Upper bound of the optimal utility without lotteries]\label{prop_ub_twostate}
    There exist $C_{\det}\in\mathbb{R}$ and $K_{\det}>0$, s.t. the optimal utility without lottery is not greater than $\beta(1-\delta)\log \sigma+\beta(1-\delta)\log  k+C_{\det}$ when $k>K_{\det}$.
\end{proposition}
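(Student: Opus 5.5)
Fix any feasible deterministic plan $(\ell_t,b_t)$ of \eqref{eq:ramsey_obj_infty_twostate}--\eqref{eq:ramsey_accumulate_infty_twostate}. The goal is to show that along the path where $g_1=g^1$, period-one consumption must be of order $\sigma/k$. Since that event has probability $1-\delta$, this costs $\beta(1-\delta)\log(\sigma/k)$ in utility. Every other term in the objective is bounded above uniformly in $k$, because $\log(\ell-g)\le 0$ and $\sqrt{1-\ell}\le 1$.

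\emph{Step 1: first-period constraint.} At $t=0$ with $g_0=g^1$, I combine \eqref{eq:ramsey_accumulate_infty_fform_twostate} with Lemma \ref{lem:infinite_Ramsey_convex_twostate_text} and Assumption \ref{twostate_example_ass2}. This gives
\[
\frac1k=f^*(b_{-1},g_0)\le f(\ell_0;b_{-1},g_0)\le \beta\,\mathbb{E}_0\frac{b_0}{\ell_1-g_1}.
\]
By Proposition \ref{prop:small_debt}, $0<b_0\le C_2\sigma$. Hence
\[
(1-\delta)\frac{1}{c_1(g^1)}+\delta\frac{1}{c_1(g^2)}\ge \frac{1}{\beta C_2\sigma k}.
\]

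\emph{Step 2: rule out the bad state doing the work.} The term in $c_1(g^2)$ is at most $\delta/\epsilon=e^{-2k}$, since $c_1\ge\epsilon$. By Assumption \ref{twostate_example_ass1} this is negligible compared with $1/(\sigma k)=e^{k}/k$. Therefore, for $k$ large, $c_1(g^1)\le 2\beta C_2\sigma k$.

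This bound has the wrong direction: it gives $\log c_1\le\log\sigma+\log k+$const, with $+\log k$ where the statement has $+\log k$ too. So the bound is exactly what the proposition claims: utility at most $\beta(1-\delta)[\log\sigma+\log k]+C_{\det}$.

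\emph{Step 3: sum up.} I bound period $0$ utility and all periods $t\ge2$ by constants, for instance $\sum_t\beta^t\cdot 1$. I bound the $g^2$ branch at $t=1$ above by $1$. This yields the stated upper bound with $C_{\det}$ absorbing $\beta(1-\delta)\log(2\beta C_2)$ and the remaining constants. The first-period labor bound $\ell_0^*>C_1$ from Proposition \ref{prop:prop_bn1_twostate} is not even needed here.

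\emph{Main obstacle.} The delicate point is Step 2: showing the inequality must be carried by the $g^1$ branch. This requires the scaling $\delta\ll\epsilon\ll\sigma$, and it requires Proposition \ref{prop:small_debt} to hold uniformly in $t$ with $b_0>0$. The second requirement I take from the fact that the right side must be at least $1/k>0$. If $b_0$ could be negative in some branch this would fail, but $b_0$ is a single bond, so its sign is forced.
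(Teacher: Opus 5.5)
Your proof is correct and follows the paper's argument: the period-$0$ constraint with $f^*(b_{-1},g_0)=1/k$, together with $0<b_0\le C_2\sigma$, forces $c_1(g^1)\le 2\beta C_2\sigma k$, and every other utility term is bounded above by a constant independent of $k$. The only inessential difference is in the bad-state term: you control it through the floor $c_1(g^2)\ge\epsilon$ and $\delta/\epsilon=e^{-2k}$, whereas the paper uses the stronger bound $b_0/(\ell_1(g^2)-g^2)\le C_2\sigma$ from the proof of Proposition~\ref{prop:small_debt}; separately, your remark that the bound ``has the wrong direction'' is self-contradictory but harmless, and should simply be deleted.
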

Since $\log\sigma=-k$, the upper bound in Proposition~\ref{prop_ub_twostate} tends to $-\infty$ as $k\to\infty$, while the lower bound in Proposition~\ref{prop_lb_twostate} does not depend on $k$. Hence, for $k$ sufficiently large, the lottery solution strictly dominates every deterministic solution.

\subsection{Unbounded relative risk aversion}
The argument showing that, with unbounded relative risk aversion, a lottery solution can approximate the first best extends directly to the incomplete markets case. For any $g\in\mathcal{G}$, the first-best labor is
$$
\ell^{\text{FB}}(g)=\arg\max_{\ell}u(\ell-g)+v(\ell).
$$
We define $V^{\text{FB}}(g)=\mathbb{E}_{g_0=g}\sum_{t=0}^{\infty}\beta^t\left[u(\ell^{\text{FB}}(g_t)-g_t)+v(\ell^{\text{FB}}(g_t))\right]$. We consider the functional equation \eqref{eq:fe} and let $ W_{\underline{c}}(b,\mu,g)  $ denote its solution as a function of the lower bound on consumption, $ \underline{c} $.
\begin{proposition} \label{prop:IM_lott_fb}
  Under Assumption \ref{ass:regu}, Assumption \ref{ass:v}(\ref{enum_assv_1}) and Assumption \ref{ass:lot_improve}, the lottery problem achieves the first best when $\underline{c}\rightarrow \zeta+$, i.e.
$$\lim_{\underline{c}\rightarrow \zeta} W_{\underline{c}}(b,0,g) = V^{\text{FB}}(g) $$ for each $g\in\mathcal{G}$.
    \end{proposition}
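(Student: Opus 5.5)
The plan is to sandwich $W_{\underline{c}}(b,0,g)$ between two quantities that both converge to $V^{\text{FB}}(g)$. Here $W_{\underline{c}}(b,0,g)$ is read as the value of the lottery problem \eqref{eq:Ramsey-l-mu} with $\mu=0$ and initial debt $b_{-1}=b$.

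\textbf{Upper bound.} Dropping the implementability constraints can only raise the value. Choosing $\ell_t=\ell^{\text{FB}}(g_t)$ pointwise maximizes every period payoff, so for every $\underline{c}$ we get
\[
W_{\underline{c}}(b,0,g)\le V^{\text{FB}}(g).
\]

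\textbf{Lower bound.} I will construct an explicit feasible lottery, modelled on the complete-markets debt-constraint example and the proof of Proposition~\ref{prop:lot_improve}, whose welfare tends to $V^{\text{FB}}(g)$.
\begin{enumerate}
\item Set $\ell_0$ and $\ell_t=\ell^{\text{FB}}(g_t)$ along all histories, with one exception. At date $1$, in each state $g_1$, randomize between $\ell^{\text{FB}}(g_1)$ (probability $1-p$) and $\underline{c}+g_1$ (probability $p$).
\item Choose the bond $b_0$ so that the date-$0$ constraint
\[
b\,u'(c_0)\le u'(c_0)c_0+v'(\ell_0)\ell_0+\beta b_0\,\mathbb{E}_0 u'(c_1)
\]
holds. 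Since $\mathbb{E}_0u'(c_1)\ge p\min_{g'}\pi(g'|g)\,u'(\underline{c})\to\infty$ as $\underline{c}\to\zeta$, a small bounded $b_0$ suffices. I will aim for $p\,u'(\underline{c})\to\infty$ with $b_0$ fixed, or $b_0$ shrinking, so that the date-$0$ debt is rolled over essentially for free.
\item From date $1$ on, after the first-best branch the government faces debt $b_0$ with first-best allocations. I will handle this by keeping $b_0\le 0$ where possible. Alternatively, I will repeat the same low-probability lottery each period so the debt stays rolled over. A cleaner route is to set $b_0$ so that $b_0\,\mathbb{E}u'(c_1)$ covers the initial obligation, and then let the continuation after date $1$ follow the first-best stationary debt path $b^{\text{FB}}$. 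That path solves the constraints with equality and is bounded, because $G$ at the first best equals $-g\,u'\le 0$ and the relevant series converge geometrically.
\item After the low branch ($c_1=\underline{c}$), the constraint at date $1$ is $b_0u'(\underline{c})\le u'(\underline{c})\underline{c}+v'(\ell)\ell+\beta b_1\mathbb{E}u'(c_2)$. If $b_0$ is small (of order $\underline{c}$), this holds with $b_1$ bounded, exactly as in the debt-constraint example, where $b^L\le\underline{c}$.
\end{enumerate}

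\textbf{Welfare cost.} The loss relative to first best is at most
\[
\beta p\max_{g'}\bigl[F^{\text{FB}}(g')-F(\underline{c},\underline{c}+g')\bigr].
\]
I will invoke the key fact from the proof of Proposition~\ref{prop:lot_improve}, that under Assumption~\ref{ass:lot_improve} we have $F(\underline{c},\cdot)/G(\underline{c},\cdot)\to0$, equivalently $u(\underline{c})/\bigl(u'(\underline{c})\underline{c}\bigr)\to0$ (this uses $R\to\infty$). It allows choosing $p$ with $p\,u'(\underline{c})\underline{c}$ bounded below by the required revenue while $p\,|u(\underline{c})|\to0$. The welfare of the construction therefore tends to $V^{\text{FB}}(g)$. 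Assumption~\ref{ass:v}(\ref{enum_assv_1}) and Assumption~\ref{ass:regu} are used only to guarantee that $\ell^{\text{FB}}$ is interior and that the functional equation value coincides with the lottery value, via \cite{shen2025recursive}.

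\textbf{Main obstacle.} The delicate step is the incomplete-markets bookkeeping. A single non-contingent $b_0$ must satisfy the date-$1$ constraints in every state and on both lottery branches, and the continuation debts must remain bounded ($\sup_t|b_t|<\infty$). I expect this to require using $u'(\underline{c})\,\underline{c}$ rather than $u'(\underline{c})$ as the relevant growth rate (so the prudence/risk-aversion ratio matters). It also requires verifying that with $b_0=O(p^{-1}/u'(\underline{c}))$ the low-branch date-$1$ constraint still holds. If it fails, I would spread the lottery over every period with vanishing probability so that the debt never has to be repaid deterministically.
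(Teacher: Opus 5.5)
Your upper bound is fine and matches the paper's Step 1 in substance. The lower-bound construction, however, fails at exactly the point you flag as the main obstacle, and your ``cleaner route'' does not repair it.

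\textbf{Why the date-$1$ lottery fails.} On the high branch, which has probability $1-p\to1$, you continue with $\ell_t=\ell^{\text{FB}}(g_t)$ forever. Iterate the implementability constraint along such a continuation with bounded debt, so the tail term vanishes:
\[
b_0\,u'\bigl(c^{\text{FB}}(g_1)\bigr)\le \mathbb{E}_1\sum_{s\ge 0}\beta^s G^{\text{FB}}(g_{1+s})=-\mathbb{E}_1\sum_{s\ge 0}\beta^s g_{1+s}\,u'\bigl(c^{\text{FB}}(g_{1+s})\bigr)\le 0 .
\]
The bond is non-contingent, so the same $b_0$ enters every date-$1$ state and branch. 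Hence $b_0\le 0$.

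\begin{itemize}
\item Then $\beta b_0\,\mathbb{E}_0u'(c_1)\le 0$, so the lottery makes the date-$0$ constraint harder, not easier. With $b>0$, that constraint fails for $\ell_0$ near $\ell^{\text{FB}}(g_0)$, where $G\approx -g_0u'(c_0^{\text{FB}})\le 0$. Whenever your construction is feasible at all, it loses a fixed amount of welfare at date $0$.
\item The same computation shows that any debt path supporting the first best from date $1$ on is nonpositive. So following ``the first-best debt path'' cannot absorb a positive $b_0$.
\item With $b_0>0$ and a deterministic first-best continuation, debt measured in marginal utility grows in expectation at least like $\beta^{-t}$. This violates $\sup_t|b_t|<\infty$.
\end{itemize}
The sign $G^{\text{FB}}=-g\,u'\le 0$ that you cite is precisely what rules the plan out.

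\textbf{The repair.} What is needed is your fallback, actually carried out: the lottery must recur every period while debt stays at a small positive level. Set $b_t\equiv d:=\underline{c}/2$ on every history and branch. At each date $t\ge 1$ and in every state, move with probability $p$ into an absorbing floor branch ($c\equiv\underline{c}$, $b\equiv d$); otherwise play $\ell^{\text{FB}}(g_t)$. Write $\bar b:=\max\{b,1\}$.
\begin{itemize}
\item The floor branch meets its own constraint, $(1-\beta)d\,u'(\underline{c})\le \underline{c}\,u'(\underline{c})+v'(\ell)\ell$, once $\underline{c}\,u'(\underline{c})$ is large.
\item The date-$0$ constraint and every first-best-branch constraint hold once $\beta d\,p\,u'(\underline{c})\ge \max_g(\bar b+g)\,u'(c^{\text{FB}}(g))$. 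This allows $p\asymp 1/(\underline{c}\,u'(\underline{c}))\to 0$.
\item The welfare loss accumulates over all periods. It is bounded by $\frac{\beta p}{(1-\beta)^2}\max_g\bigl[F^{\text{FB}}(g)-F(\underline{c},\underline{c}+g)\bigr]$, not $\beta p\max[\cdot]$. This still vanishes because $u(\underline{c})/(\underline{c}\,u'(\underline{c}))\to 0$.
\end{itemize}

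\textbf{Relation to the paper.} The paper does exactly this in dual form through the functional equation. From state $(\bar b,0,g)$ it plays $\ell=\ell^{\text{FB}}(g)$ and $b'=\underline{c}/2$. It bounds $W_{\underline{c}}(\underline{c}/2,\gamma,g')$ below by the line $\gamma\frac{\underline{c}}{2}u'(\underline{c})+\frac{1}{1-\beta}[u(\underline{c})+v(\bar\ell)]$ of this same floor allocation, which forces the minimizing $\gamma\to 0$. It then uses monotonicity of $W$ in $\mu$ and $b$ to replace the continuation by $W_{\underline{c}}(\bar b,0,g')$ and recurse. That recursion is what encodes the per-period lottery your construction is missing.
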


\section{Numerical examples}\label{sec:num}

To solve the functional equation \eqref{eq:fe}
numerically,
we discretize savings and labor supply, but leave the multiplier to be continuous.
Suppose that the set of possible labor choices contains $ n_{\ell} $ elements, assumed to lie strictly between zero and one, and the set of possible bond choices contains $ n_b $ elements. By feasibility, there are also $ n_{\ell} $ possible consumption choices, all of which are assumed to be strictly positive. We describe the algorithm in  Appendix \ref{app:algo}. A key insight is that with discrete savings and labor choices, the function $W(\cdot)$ is piecewise linear in $ \mu $ and therefore can be stored exactly.

\subsection{Multiplicity}\label{sec:reach}
Since we discretize savings and labor supply, it is often optimal to randomize between adjacent values in the grid. These ``trivial lotteries'' are not very interesting.
We therefore focus on lotteries that randomize between tax rates and labor supplies that ``differ substantially.''
In our calibration, the neighboring points of the labor grid are
$0.0064$ apart or less. We require a non-trivial lottery to randomize over points in the grid that are significantly further apart and take the threshold to be 0.1.

To illustrate the mechanism, we consider the following numerical example in detail.
\begin{table}[H]
\centering
\begin{tabular}{llll}
\toprule
Preferences & $u(c)=\log c$ & $v(\ell)=\psi(1-\ell)^{\eta}$ & $\psi=1$, $\eta=0.2$\\
Discounting & $\beta=0.9$ & & \\
Spending    & $g\in\{0,\,0.5\}$ & $\pi_{ij}=1/2$ & \\
Bounds      & $b\in[-1,\,0.5]$ & $c\ge0.02$ & $\ell<1$\\
Grids       & $n_\ell=153$ & $n_b=305$ & \\
\bottomrule
\end{tabular}
\caption{Example 1}\label{tab:tab2}
\end{table}

Figure~\ref{fig:kink} shows the set of genuine ties in Example 1, i.e., the states at which it is optimal to randomize between substantially different allocations. For a wide range of positive initial debt levels, there exists exactly one value of the co-state where randomization is optimal. Panel (b) shows that at these ties randomization is substantial.

\begin{figure}[htp]
\centering
\includegraphics[width=10cm]{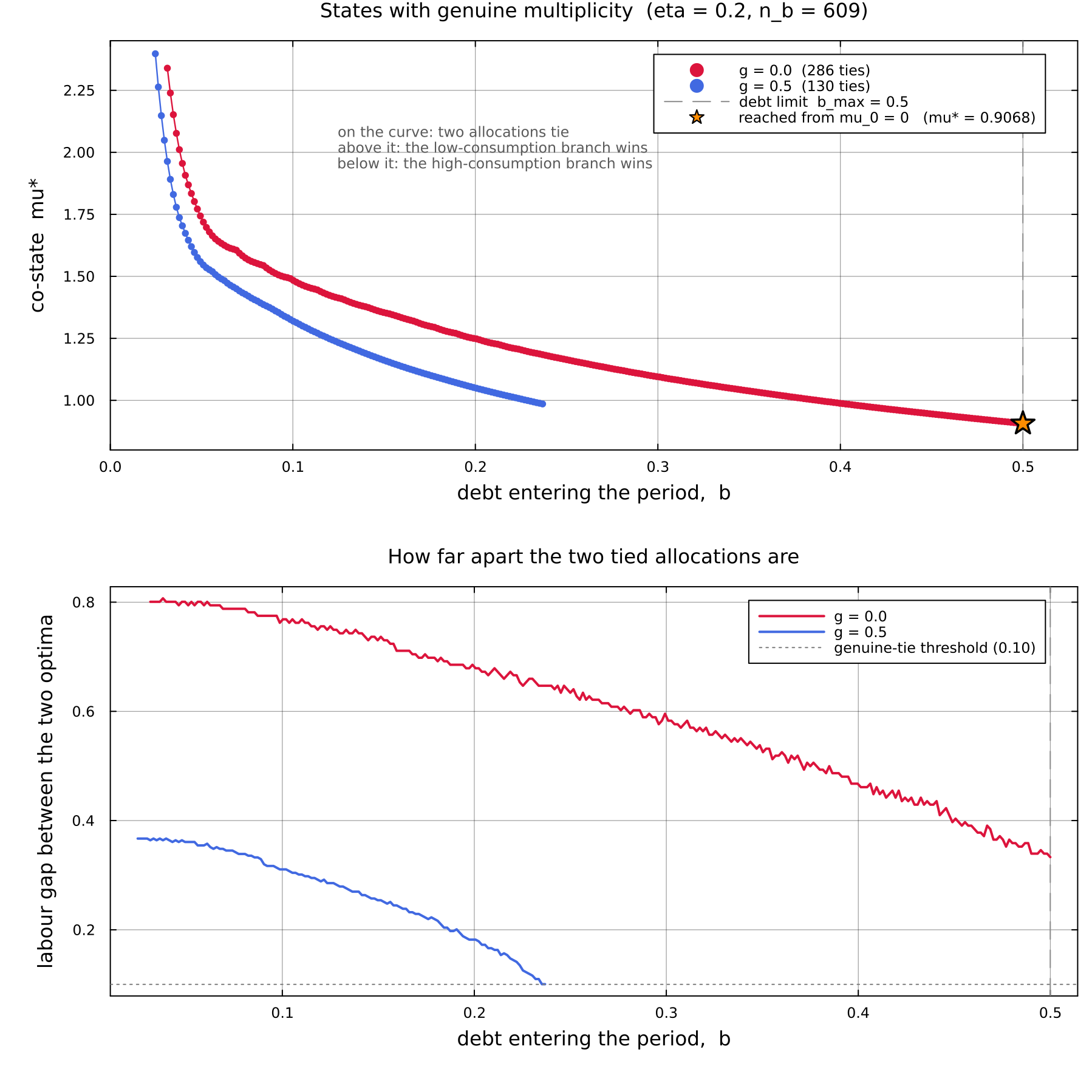}
\caption{Genuine ties:
\textbf{(a)} Set of states with multiplicity
\textbf{(b)} Difference of the two optimal labor choices}
\label{fig:kink}
\end{figure}

\subsection{Reachability}\label{sec:worked}
While there typically exist regions in the state space where lotteries are the optimal solution to the sup-inf problem, these are always ``small.''{} Whether they are reached from some initial condition (with initial $ \mu=0$) is a question we can only answer computationally.

In Example 1, lotteries occur along the optimal solution if we start with a sufficiently high debt level. As explained in the appendix and in \cite{shen2025recursive}, to describe optimal lotteries, it is useful to keep track of the (marginal) utility promises. With
$n_\ell=153$, $n_b=305$, starting at the debt limit with $ b_{-1}=0.5$, $\mu_0=0$ and zero promise in the high-spending state, consider the following sequence of shocks.
\begin{itemize}
\item $t=0$, high $g$: the promise is slack, no lottery. Consumption is $0.362$, the planner stays at
$\bp=0.5$, and hands on the promise $1.6887$ and $\mu=0.9068$.
\item $t=1$, low $g$: the state is $b=0.5$, $g=0$ with the inherited promise
$1.1042$, which lies strictly inside the sub-differential at the kink $\mu^{*}=0.9068$. The
planner's optimal policy is a lottery with the following two ``branches'':
\end{itemize}
\begin{center}
\small
\begin{tabular}{lccccc}
\toprule
branch & probability & $c$ & $\bp$ & $\gamma^{*}$ & promise handed on ($g'=0$, $g'=0.5$)\\
\midrule
high & $0.549$ & $0.6605$ & $0.0428$ & $0$ & $0.051$, \ $0.109$\\
low & $0.451$ & $0.3275$ & $0.3355$ & $0.323$ & $0.415$, \ $0.956$\\
\bottomrule
\end{tabular}
\end{center}
 The two branches represent very different policies: the high branch consumes twice as much and clears almost all the debt; the low
branch consumes $0.33$ and keeps $\bp=0.34$. As explained in Section \ref{sec:2pim}, the low-consumption branch is chosen with fairly high probability to push down the interest rate in the preceding period, making it cheaper to roll over the debt and allowing higher consumption in times where government spending needs to be high.
After the lottery,
with $b=0.3355 $, the high-spending state has to occur four times in a row to push the debt up again to the limit, and $ \mu $ to 0.9068, from where the cycle can repeat itself. In the low-spending state, there are never lotteries for $b<0.5$, and the government saves each period. When starting at zero debt, nine consecutive high-spending shocks are needed for government debt to reach the upper limit. The overall probability of reaching the upper debt limit from zero debt within the first 50 periods is approximately 0.005. Unless initial debt is high, lotteries therefore appear to have a small quantitative impact on welfare; they become important only at high debt levels.

We list many solved examples in Appendix \ref{app:examples}.
We consider the case where
$u(c)=\log c$, $v(\ell)=\psi(1-\ell)^{\eta}$ with $\eta\in(0,1)$,
$c=\ell-g\in[c_{\min},c_{\max}]$ and solve examples with various values for $ \psi, \eta$ and the shock process $(g_t)$.
All of the hundreds of numerically solved examples have genuine ties. That is to say, there are always regions in the state space where it becomes optimal to randomize labor taxes. However, for most of the solved examples, these regions are not reachable.

\subsection{Shifted log}
We now consider a utility function with a potentially unbounded coefficient of relative risk aversion,
but choose $ \underline{c} > \zeta $ to bound $-cu''(c)/u'(c)$.
The first-best cannot be attained, but lotteries still play an important role for the optimal solution.

Assume that preferences are
\[
  u(c)+v(\ell)=\log(c-\zeta)+\psi\,(1-\ell)^{\eta},\qquad
  \zeta=0.05,\quad \psi=1,\quad \eta=\tfrac12,
\]
with $\beta=0.9$, $c=\ell-g$, spending $g\in\{0,0.5\}$ i.i.d.\ and fair, and debt in
$[-1,\,0.5]$ on the usual grid. The consumption floor is $c\ge\underline{c}=0.06$.
In this example, starting with zero debt, 
 it takes eight consecutive high-spending shocks for the government to randomize taxes. Seven bad shocks carry the debt to the
limit; at the eighth bad shock, the government randomizes, even though this is a high-expenditure shock. In the table, $\varphi$ denotes the promise inherited from the previous period.
\begin{center}
\small
\begin{tabular}{cccccccc}
\toprule
$t$ & $b$ & $\varphi$ & $c$ & $\ell$ & $\bp$ & $\gamma^{*}$ & hands on ($g'=0$, $g'=0.5$)\\
\midrule
$0$ & $0.0000$ & $0.0000$ & $0.3792$ & $0.8792$ & $0.0559$ & $0.0000$ & $0.0794$, \ $0.1729$\\
$1$ & $0.0559$ & $0.1729$ & $0.3735$ & $0.8735$ & $0.1184$ & $0.2926$ & $0.1696$, \ $0.3771$\\
$2$ & $0.1184$ & $0.3771$ & $0.3620$ & $0.8620$ & $0.1809$ & $0.1855$ & $0.2630$, \ $0.5799$\\
$3$ & $0.1809$ & $0.5799$ & $0.3620$ & $0.8620$ & $0.2664$ & $0.3565$ & $0.3992$, \ $0.8889$\\
$4$ & $0.2664$ & $0.8889$ & $0.3505$ & $0.8505$ & $0.3651$ & $0.4151$ & $0.5680$, \ $1.2542$\\
$5$ & $0.3651$ & $1.2542$ & $0.3418$ & $0.8418$ & $0.4901$ & $0.4876$ & $0.8007$, \ $1.7286$\\
$6$ & $0.4901$ & $1.7286$ & $0.3332$ & $0.8332$ & $0.5000$ & $0.6059$ & $0.8882$, \ $2.6100$\\
\midrule
\multicolumn{8}{l}{$t=7$: \ $b=0.5000$, $\varphi=2.6100$ --- a  lottery at
$\mu^{*}=0.6059$, consumption spread $0.2531$:}\\
$7$ & \multicolumn{2}{c}{w.p.\ $0.9853$} & $0.3131$ & $0.8131$ & $0.5000$ & $0.6059$
    & $0.8882$, \ $2.7804$\\
$7$ & \multicolumn{2}{c}{w.p.\ $0.0147$} & $0.0600$ & $0.5600$ & $0.5000$ & $0.6147$
    & $48.7158$, \ $50.0000$\\
\bottomrule
\end{tabular}
\end{center}

 The date-$7$ lottery has two branches, each followed by
two possible shocks; the resulting in four date-$8$ states are as follows.
\begin{center}
\small
\begin{tabular}{llll}
\toprule
branch at $t=7$ & $g_8$ & state at $t=8$ & what the planner does\\
\midrule
$c=0.3131$ & $0$   & $\varphi=0.888192$ & \emph{no} lottery: $c=0.6129$, $\bp=0.1382$; leaves the limit\\
$c=0.3131$ & $0.5$ & $\varphi=2.7804$   & lottery, $\mu^{*}=0.6059$: $0.9817$ stays, $0.0183$ to the ceiling\\
$c=0.0600$ & $0$   & $\varphi=48.7158$  & lottery, $\mu^{*}=0.6147$: $0.9739$ floor, $0.0261$ exit\\
$c=0.0600$ & $0.5$ & $\varphi=50$       & \emph{no} lottery: $c=0.0600$, $\bp=0.5$; the ceiling fixed point\\
\bottomrule
\end{tabular}
\end{center}
The first row is the interesting one, because $b=0.5$, $g=0$ does carry a genuine tie and yet
nothing is randomized. The promise handed on, $0.888192$, is exactly the lower slope
$b\,u'(0.6129)=0.5/0.5629$ of that kink, so it sits at the corner of the sub-differential and
the weight on the floor allocation is zero. The same state with an interior promise---row
three, $\varphi=48.7158$---mixes the same two allocations, $c=0.6129$ and $c=0.0600$, with
weights $0.0261$ and $0.9739$. Thus a single kink supports a family of lotteries indexed by the promise, and at
the endpoints of that family the lottery degenerates.

\section{Conclusion}\label{sec:concl}
We examine two simple models of optimal government policy. Despite their simplicity, the models are difficult to analyze because they reduce to non-convex optimization problems. We use the techniques developed in \cite{shen2025lagrangian} and in \cite{shen2025recursive} to examine the role of public randomization for the optimal solutions.
Depending on the market structure and preferences, lotteries can drastically improve welfare. However, full commitment is crucial for this result, as high-tax lottery outcomes are always very costly in terms of welfare ex post.
 \cite{jiang2025implementing} examine conditions on utility under which the Ramsey solution of the complete markets model is time consistent.
 Their results are consistent with ours, since they focus on CRRA utility.

\onehalfspacing
 \bibliographystyle{ecta}
\bibliography{refe}

\renewcommand{\appendixpagename}{Appendix}
\begin{appendices}
\section{Proofs} \label{app:proofs}
\subsubsection*{Proof of Proposition \ref{prop:smallM}}
    When $M\le 2$, for any $g\in \mathcal{G}$ and $c=\ell-g$, we have 
    $$
    G''(c,\ell)=\underbrace{u'''(c)c+2u''(c)}_{\le 0\text{ by Assumption \ref{ass:v}}}+\underbrace{v'''(\ell)\ell+2v''(\ell)}_{< 0\text{ by Assumption \ref{ass:v}}}< 0.
    $$
    Hence $G(c,\ell)$ is concave. According to Sion's minimax theorem, we have
    $$
    \inf_{\lambda\ge 0}\max_{(\ell_t)_{t=1}^{\infty}\in L} \mathcal{L}(\ell_0,(\ell_t)_{t=1}^{\infty};\lambda)=\max_{(\ell_t)_{t=1}^{\infty}\in L}\inf_{\lambda\ge 0}\mathcal{L}(\ell_0,(\ell_t)_{t=1}^{\infty};\lambda),
    $$
    for any $\ell_0$, and there is no lottery improvement.
\hfill $\Box$
\subsubsection*{Proof of Lemma \ref{lem:M>2}}
By direct computation,
$$
\begin{aligned}
&\left[F(c_t,\ell_t)+\lambda G(c_t,\ell_t)\right]''\\=&\left[(1+2\lambda)u''(c_t)+\lambda u'''(c_t)c_t\right]+\left[(1+2\lambda)v''(\ell_t)+\lambda v'''(\ell_t)\ell_t\right]\\
=&\lambda (-u''(c_t))\left[-\frac{u'''(c_t)c_t}{u''(c_t)}-2-\frac{1}{\lambda}\right]+\underbrace{\left[(1+2\lambda)v''(\ell_t)+\lambda v'''(\ell_t)\ell_t\right]}_{< 0\text{ by Assumption \ref{ass:v}}}.
\end{aligned}
$$
When $\lambda\le \frac{1}{M-2}$, we have
$$
-\frac{u'''(c_t)c_t}{u''(c_t)}-2-\frac{1}{\lambda}\le -\frac{u'''(c_t)c_t}{u''(c_t)}-M\le 0.
$$
Hence $F(c_t,\ell_t)+\lambda G(c_t,\ell_t)$ is concave.
\hfill $\Box$
\subsubsection*{Proof of Proposition \ref{prop:largeM}}
When $\lambda>\lambda^*$, we have
$$
\begin{aligned}
&\mathcal{L}(\ell_0^*,(\ell_t)_{t=1}^{\infty};\lambda)\\
=&F(c_0^*,\ell_0^*)+\mathbb{E}_{g_0}\sum_{t=1}^{\infty}\beta^tF(c_t,\ell_t)+\lambda\left(G(c_0^*,\ell_0^*)+\mathbb{E}_{g_0}\sum_{t=1}^{\infty}\beta^tG(c_t,\ell_t)- b_0u'(c_0^*)\right).
\end{aligned}
$$
We take $(\ell_t)_{t=1}^{\infty}=(\ell_t^{\lambda^*})_{t=1}^{\infty}$, and have
$$
\begin{aligned}
&\max_{(\ell_t)_{t=1}^{\infty}\in L}\mathcal{L}(\ell_0^*,(\ell_t)_{t=1}^{\infty};\lambda)\ge \mathcal{L}(\ell_0^*,(\ell_t^{\lambda^*})_{t=1}^{\infty};\lambda)\\
=& \max_{(\ell_t)_{t=1}^{\infty}\in L}\mathcal{L}(\ell_0^*,(\ell_t)_{t=1}^{\infty};\lambda^*)+(\lambda-\lambda^*)\left(G(c_0^*,\ell_0^*)+\mathbb{E}_{g_0}\sum_{t=1}^{\infty}\beta^tG(c_t^{\lambda^*},\ell_t^{\lambda^*})- b_0u'(c_0^*)\right)\\
>&\max_{(\ell_t)_{t=1}^{\infty}\in L}\mathcal{L}(\ell_0^*,(\ell_t)_{t=1}^{\infty};\lambda^*).
\end{aligned}
$$
Therefore, we have
$$
\min_{\lambda\ge 0}\max_{(\ell_t)_{t=1}^{\infty}\in L}\mathcal{L}(\ell_0^*,(\ell_t)_{t=1}^{\infty};\lambda)=\min_{0\le \lambda\le \lambda^*}\max_{(\ell_t)_{t=1}^{\infty}\in L}\mathcal{L}(\ell_0^*,(\ell_t)_{t=1}^{\infty};\lambda).
$$
According to Lemma \ref{lem:M>2}, when $\lambda\le\lambda^*$, $\mathcal{L}$ is concave on $(\ell_t)_{t=1}^{\infty}$, and hence the maximizer is unique and there is no lottery improvement (see Corollary 2.1 in \cite{shen2025lagrangian}).
\hfill $ \Box$
\subsubsection*{Proof of Proposition \ref{prop:lot_improve}}
        Taking $\lambda=0$, we know that the LHS of \eqref{eq:goal_lot_improvement_primal} is not greater than the RHS. 
        It suffices to show that
        \begin{equation}\label{eq:goal_lot_improvement}
        \lim_{\underline{c}\rightarrow\zeta} \max_{\underline{c}+g_0\le \ell_0\le \bar{\ell}}\min_{\lambda\ge 0}\max_{(\ell_t)_{t=1}^{\infty}\in L}\mathcal{L}(\ell_0,(\ell_t)_{t=1}^{\infty};\lambda)\ge \max_{(\zeta+g_t\le \ell_t\le \bar{\ell})_{t=0}^{\infty}}\mathbb{E}_{g_0}\sum_{t=0}^{\infty}\beta^tF(c_t,\ell_t).
        \end{equation}
        The right-hand side is solved by the first-best solution, denoted by $(\ell_t^{\text{FB}})_{t=0}^{\infty}$, which satisfies
$$
u'(\ell_t^{\text{FB}}(g^t)-g_t)+v'(\ell_t^{\text{FB}}(g^t))=0.
$$
        Therefore, we have
        \begin{equation}\label{eq:goal_lot_improvement_step1}
            \begin{aligned}
            &\max_{\underline{c}+g_0\le \ell_0\le \bar{\ell}}\min_{\lambda\ge 0}\max_{(\ell_t)_{t=1}^{\infty}\in L}\mathcal{L}(\ell_0,(\ell_t)_{t=1}^{\infty};\lambda)\\\ge& \min_{\lambda\ge 0}\max_{(\ell_t)_{t=1}^{\infty}\in L}\mathcal{L}(\ell_0^{\text{FB}},(\ell_t)_{t=1}^{\infty};\lambda)\\
            =&\min_{\lambda\ge 0}\max_{(\ell_t)_{t=1}^{\infty}\in L}\lambda \left[G(c_0^{\text{FB}},\ell_0^{\text{FB}})- b_0u'(c_0^{\text{FB}})+\mathbb{E}_{g_0}\sum_{t=1}^{\infty}\beta^tG(c_t,\ell_t)\right]+\left[F(c_0^{\text{FB}},\ell_0^{\text{FB}})+\mathbb{E}_{g_0}\sum_{t=1}^{\infty}\beta^tF(c_t,\ell_t)\right].
            \end{aligned}
        \end{equation}
        For any $\lambda\ge 0$, we have
        $$
        \begin{aligned}
            D(\lambda;\underline{c}):=&\max_{(\ell_t)_{t=1}^{\infty}\in L}\lambda \left[G(c_0^{\text{FB}},\ell_0^{\text{FB}})- b_0u'(c_0^{\text{FB}})+\mathbb{E}_{g_0}\sum_{t=1}^{\infty}\beta^tG(c_t,\ell_t)\right]+\left[F(c_0^{\text{FB}},\ell_0^{\text{FB}})+\mathbb{E}_{g_0}\sum_{t=1}^{\infty}\beta^tF(c_t,\ell_t)\right]\\
            \ge&\lambda \underbrace{\left[G(c_0^{\text{FB}},\ell_0^{\text{FB}})- b_0u'(c_0^{\text{FB}})+\mathbb{E}_{g_0}\sum_{t=1}^{\infty}\beta^tG(\underline{c},\underline{c}+g_t)\right]}_{I}+\underbrace{\left[F(c_0^{\text{FB}},\ell_0^{\text{FB}})+\mathbb{E}_{g_0}\sum_{t=1}^{\infty}\beta^tF(\underline{c},\underline{c}+g_t)\right]}_{J}\\
            =&I\left[\lambda-\frac{J}{I}\right].
        \end{aligned}
        $$

        We now introduce the following technical lemma.
\begin{lemma}\label{lem:tech_prop33}
    Under assumption \eqref{ass:lot_improve}, the utility function for consumption $u$ satisfies
    $$
    \lim_{c\rightarrow\zeta+}u'(c)c=\infty,\quad\lim_{c\rightarrow\zeta+}\frac{u(c)}{u'(c)c}=0.
    $$
\end{lemma}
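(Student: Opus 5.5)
The plan is to handle the two limits separately, working with the relative risk aversion $R(c)=-u''(c)c/u'(c)$ from Assumption~\ref{ass:lot_improve}. The key identity is
$$
\frac{d}{dc}\bigl(u'(c)c\bigr)=u''(c)c+u'(c)=u'(c)\bigl(1-R(c)\bigr).
$$
Since $R(c)\to\infty$ as $c\to\zeta+$, we can fix $c_1>\zeta$ with $R(c)\ge 2$ on $(\zeta,c_1]$. On that interval the right-hand side is strictly negative. So $c\mapsto u'(c)c$ is strictly decreasing there, and its limit at $\zeta+$ exists in $(0,\infty]$.

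\emph{First limit.} If $\zeta>0$, it is immediate from the Inada condition, since $u'(c)c\ge \zeta u'(c)\to\infty$. If $\zeta=0$, we set $\varphi(c)=\log(u'(c)c)$. Then
$$
\varphi'(c)=\frac{1-R(c)}{c}\le -\frac{1}{c}\quad\text{on }(0,c_1].
$$
Integrating from $c$ to $c_1$ gives $\varphi(c)\ge\varphi(c_1)+\log(c_1/c)$, which tends to $\infty$ as $c\to 0+$. This is just the remark following Assumption~\ref{ass:lot_improve}, made quantitative.

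\emph{Second limit.} Because $u$ is increasing, $u(\zeta+):=\lim_{c\to\zeta+}u(c)$ exists in $[-\infty,\infty)$, and we split into two cases. If $u(\zeta+)>-\infty$, then $u$ is bounded on $(\zeta,c_1]$. The denominator $u'(c)c$ tends to $\infty$ by the first part, so the ratio tends to $0$. If $u(\zeta+)=-\infty$, we apply L'H\^opital's rule in the $\cdot/\infty$ form. The denominator $u'(c)c$ tends to $\infty$ and has nonvanishing derivative $u'(c)(1-R(c))<0$ on $(\zeta,c_1)$. The ratio of derivatives is
$$
\frac{u'(c)}{u'(c)(1-R(c))}=\frac{1}{1-R(c)}\to 0,
$$
hence $u(c)/(u'(c)c)\to 0$.

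The only delicate point is justifying L'H\^opital, and specifically checking its hypotheses. We need differentiability of $u'(c)c$, which holds since $u''$ exists for $R$ to be defined. We also need the denominator's derivative to be nonzero near $\zeta$, which the choice of $c_1$ guarantees. Finally, the denominator must diverge, which is exactly the first limit, so that limit must be established first. If one prefers to avoid L'H\^opital, the same conclusion follows from a direct integral estimate. For $\zeta<c<c'\le c_1$, write $u(c')-u(c)=\int_c^{c'}u'(s)\,ds$ and bound $u'(s)s\le u'(c)c$ using monotonicity. This gives $|u(c)|\le |u(c')|+u'(c)c\,\log(c'/c)$ when $\zeta=0$ (or an analogous bound when $\zeta>0$). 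Dividing by $u'(c)c$ and choosing $c'$ close to $c$ in a suitable way yields the same limit.
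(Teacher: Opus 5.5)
Your proof is correct and essentially the paper's own: Inada handles $\zeta>0$; for $\zeta=0$ you integrate the bound $R\ge 2$ against $1/c$ (you do it for $\log(u'(c)c)$, the paper for $\log u'(c)$); and L'H\^opital with derivative ratio $1/(1-R(c))\to 0$ gives the second limit, where your check that $u'(c)c\to\infty$ with nonvanishing derivative supplies hypotheses the paper leaves implicit (the case split on $u(\zeta+)$ is harmless but unnecessary, since the $\cdot/\infty$ form of L'H\^opital needs nothing of the numerator). Only your optional integral-estimate alternative is underspecified: monotonicity of $u'(s)s$ alone does not close it, but the decay $u'(s)s\le u'(c)c\,(s/c)^{1-K}$ on an interval $(\zeta,c_K]$ where $R\ge K$ gives $\int_c^{c_K}u'(s)\,ds\le u'(c)c/(K-1)$, and letting $K\to\infty$ finishes.
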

\begin{proof}
We first show that $\lim_{c\rightarrow\zeta+} u'(c)c=\infty$. When $\zeta>0$, this follows directly from the Inada condition $\lim_{c\rightarrow\zeta+}u'(c)=\infty$. When $\zeta=0$, we choose \(\bar c>0\) such that \(R(c)>2\) for all \(c<\bar c\). Then for any \(0<\underline c<\bar c\),
$$
\log u'(\underline c)-\log u'(\bar c)
=\int_{\underline c}^{\bar c}\frac{R(c)}{c}\,dc
>2\int_{\underline c}^{\bar c}\frac{1}{c}\,dc
=2(\log\bar c-\log\underline c).
$$
Hence
$$
\log\bigl(u'(\underline c)\underline c\bigr)
=\log u'(\underline c)+\log\underline c
>\log u'(\bar c)+2\log\bar c-\log\underline c
\to\infty
\quad(\underline c\to0^+),
$$
implying that \(u'(\underline c)\underline c\to\infty(\underline{c}\rightarrow0+)\).

We then show that $\lim_{c\rightarrow\zeta+}\frac{u(c)}{u'(c)c}=0$. Indeed, by L'Hôpital's rule, we have
$$
\lim_{c\to\zeta^+}\frac{u(c)}{u'(c)c}
=\lim_{c\to\zeta^+}\frac{u'(c)}{u''(c)c+u'(c)}
=\lim_{c\to\zeta^+}\frac{1}{1-R(c)}
=0,
$$
and we finish the proof for Lemma \ref{lem:tech_prop33}.
\end{proof}
According to Lemma \ref{lem:tech_prop33}, we have
        $$
        I\rightarrow\infty,\quad \frac{J}{I}\rightarrow0,\quad \text{as }\underline{c}\rightarrow
\zeta.$$
        Therefore, we have $\lambda(\underline{c})\in \arg\min_{\lambda\ge 0}D(\lambda;\underline{c})\rightarrow 0$ as $\underline{c}\rightarrow
\zeta$.
Hence
\begin{equation}\label{eq:goal_lot_improvement_step2}
    \begin{aligned}
        &\lim_{\underline{c}\rightarrow
\zeta} \min_{\lambda\ge 0}\max_{(\ell_t)_{t=1}^{\infty}\in L}\lambda \left[G(c_0^{\text{FB}},\ell_0^{\text{FB}})- b_0u'(c_0^{\text{FB}})+\mathbb{E}_{g_0}\sum_{t=1}^{\infty}\beta^tG(c_t,\ell_t)\right]+\left[F(c_0^{\text{FB}},\ell_0^{\text{FB}})+\mathbb{E}_{g_0}\sum_{t=1}^{\infty}\beta^tF(c_t,\ell_t)\right]\\
\ge &\lim_{\underline{c}\rightarrow
\zeta}\lambda(\underline{c}) \left[\mathbb{E}_{g_0}\sum_{t=0}^{\infty}\beta^tG(c^{\text{FB}}_t,\ell^{\text{FB}}_t)- b_0u'(c_0^{\text{FB}})\right]+\left[\mathbb{E}_{g_0}\sum_{t=0}^{\infty}\beta^tF(c^{\text{FB}}_t,\ell^{\text{FB}}_t)\right]=\mathbb{E}_{g_0}\sum_{t=0}^{\infty}\beta^tF(c^{\text{FB}}_t,\ell^{\text{FB}}_t).
    \end{aligned}
\end{equation}
We combine \eqref{eq:goal_lot_improvement_step1} and \eqref{eq:goal_lot_improvement_step2} to conclude that \eqref{eq:goal_lot_improvement} holds. \hfill $ \Box $
 
\subsubsection*{Proof of Lemma \ref{lem:finite_support_strong}}

    According to Assumption \ref{ass:utility_regu}, $H_{\lambda}(\ell-g,\ell)$ is not a constant on $[\underline{c}+g,\bar{\ell}]$. Note that $\ell$ is a maximum point of $H_{\lambda}(\ell-g,\ell)$ only if $H_{\lambda}'(\ell-g,\ell)=0$ or $\ell\in\{\underline{c}+g,\bar{\ell}\}$. It suffices to show that there exists $N_g\ge 0$, s.t. for any $\lambda\ge 0$, $H'_{\lambda}(\ell-g,\ell)$ has at most $N_g$ zeros on $(\underline{c}+g,\bar{\ell})$.
    \begin{itemize}
        \item If $G'(\ell-g,\ell)\equiv 0$, it follows that $F'(\ell-g,\ell)\not\equiv 0$. Due to its analyticity, $F'(\ell-g,\ell)$ has a finite number of zeros, denoted as $N^F_g$. And we can take $N_g=N^F_g$.
        \item If $G'(\ell-g,\ell)\not\equiv 0$, then $G'(\ell-g,\ell)$ has finitely many zeros $\underline{c}+g\le \ell_1\le \cdots\le  \ell_{N_g^G} \le \bar{\ell}$. For convenience, we define $\ell_0=\underline{c}+g,\ell_{N_g^G+1}=\bar{\ell}$. In each interval $(\ell_k,\ell_{k+1})$, 
        $$
        \left(\frac{F'(\ell-g,\ell)}{G'(\ell-g,\ell)}\right)'=\frac{F''(\ell-g,\ell)G'(\ell-g,\ell)-G''(\ell-g,\ell)F'(\ell-g,\ell)}{[G'(\ell-g,\ell)]^2}.
        $$
        We define 
        $$
        I_g(\ell)=F''(\ell-g,\ell)G'(\ell-g,\ell)-G''(\ell-g,\ell)F'(\ell-g,\ell),
        $$
        and $I_g(\ell)$ has a finite number of zeros\footnote{$I_g$ is not identically zero because $F'/G'$ is nonconstant by Assumption \ref{ass:utility_regu}.} in $[\underline{c}+g,\bar{\ell}]$, denoted as $N_g^I$. We assume that $I_g$ has $n_k$ zeros in $(\ell_k,\ell_{k+1})$, then $\sum_{k=0}^{N_g^G} n_k\le N_g^I$. By Rolle's theorem, for any $\lambda\ge 0$, there are at most $n_{k}+1$ zeros of 
        $$
        \frac{F'(\ell-g,\ell)}{G'(\ell-g,\ell)}=\lambda
        $$
        in $(\ell_k,\ell_{k+1})$. Therefore, there are at most
        $$
        (N^G_g+2)+\sum_{k=0}^{N_g^G}(n_k+1)\le 2N_g^G+3+N_g^I:=N_g
        $$
        zeros of $H_{\lambda}'(\ell-g,\ell)$. \hfill $ \Box $
    \end{itemize}

 \subsubsection*{Proof of Lemma \ref{lem:purify}}
        \begin{itemize}
            \item \textbf{Step 1.} Since the positive sequence
\[
(\frac{\beta^{t}\pi(g^{t}\mid g_0)}
{\sum_{s=1}^{\infty}\sum_{h^s\in\mathcal{G}^s:\,h_s=g}\beta^s\pi(h^s\mid g_0)})_{t\ge 1,\;g^t\in\mathcal{G}^t\text{ s.t. }g_t=g}
\]
is summable, it can be rearranged in non-increasing order as
\[
q^1\ge q^2\ge \cdots ,
\]
where each $q^i\ge 0$ and $\sum_{i=1}^{\infty}q^i=1$.
\item \textbf{Step 2.} We consider the following procedure. Initialize
\[
\tilde{p}_k^0=0,\qquad k=1,\dots,M_g.
\]
At step \(i\ge 1\) we search for the smallest \(k\in\{1,\dots,M_g\}\) such that
\[
q^i\le p_k-\tilde{p}_k^{i-1}.
\]
If such a \(k\) exists, say \(k_i\), we update
\[
\tilde{p}_{k_i}^{i}=\tilde{p}_{k_i}^{i-1}+q^i,
\]
and set $\tilde{p}_j^{i}=\tilde{p}_j^{i-1}$ for all \(j\ne k_i\).
We claim that this procedure can continue indefinitely, i.e., at every step \(i\ge 1\) such a \(k_i\) exists. We prove this by contradiction. Assume that the procedure proceeds up to some step \(I\ge 1\) and then fails, so that
\[
q^I>p_k-\tilde{p}_k^{I-1}
\]
for every \(k\in\{1,\dots,M_g\}\). On one hand, we have
\begin{equation}\label{eq:estimate_Greater}
M_gq^I>\sum_{k=1}^{M_g}p_k-\sum_{k=1}^{M_g}\tilde{p}_k^{I-1}=1-\sum_{k=1}^{M_g}\tilde{p}_k^{I-1}=1-\sum_{i=1}^{I-1}q^i=\sum_{i=I}^{\infty}q^i.
\end{equation}
On the other hand, suppose that
\[
q^I=\frac{\beta^{t}\pi(g^{t}\mid g_0)}
{\sum_{s=1}^{\infty}\sum_{h^s\in\mathcal{G}^s:\,h_s=g}\beta^s\pi(h^s\mid g_0)}
\]
for some \(t\ge 1\) and \(g^t\in\mathcal{G}^t\) with \(g_t=g\). We write
\[
g^t\prec g^{t+\tau}
\]
to mean that \(g^{t+\tau}\) is a successor of \(g^t\). Then, for every successor \(g^{t+\tau}\) of \(g^t\) where $g_{t+\tau}=g$,  we have
$$
\beta^{t+\tau}\pi(g^{t+\tau}\mid g_0)< \beta^t\pi(g^t\mid g_0),
$$
and hence
\begin{equation}\label{eq:estimate_Smaller}
\begin{aligned}
    \sum_{i=I}^{\infty}q^i\ge&q^I+ \frac{1}
{\sum_{s=1}^{\infty}\sum_{h^s\in\mathcal{G}^s:\,h_s=g}\beta^s\pi(h^s\mid g_0)}\sum_{\tau=1}^{\infty}\sum_{g^{t+\tau}\in\mathcal{G}^{t+\tau}:\,g^t\prec g^{t+\tau},\,g_{t+\tau}=g}\beta^{t+\tau}\pi(g^{t+\tau}|g_0)\\
=&q^I\left[1+\sum_{\tau=1}^{\infty}\sum_{g^{t+\tau}\in\mathcal{G}^{t+\tau}:\,g^t\prec g^{t+\tau},\,g_{t+\tau}=g}\beta^\tau\frac{\pi(g^{t+\tau}|g_0)}{\pi(g^t|g_0)}\right]\\
=&q^I\left[1+\sum_{\tau=1}^{\infty}\sum_{g^{t+\tau}\in\mathcal{G}^{t+\tau}:\,g^t\prec g^{t+\tau},\,g_{t+\tau}=g}\beta^\tau\pi(g^{t+\tau}|g^t)\right]\\
\ge&q^I\left[1+\sum_{\tau=1}^{\infty}\beta^\tau \pi_{\min}\right]=q^I\left[1+\frac{\beta}{1-\beta}\pi_{\min}\right].
\end{aligned}
\end{equation}
We combine \eqref{eq:estimate_Greater} and \eqref{eq:estimate_Smaller} and obtain
$$
M_g>1+\frac{\beta}{1-\beta}\pi_{\min},
$$
or equivalently,
$$
\beta<\frac{M_g-1}{M_g-1+\pi_{\min}},
$$
which contradicts the assumption on $\beta$.

\textbf{Step 3.} We know that $\tilde{p}_k^I\le p_k$ for each $I\ge 0$, and $\sum_{k=1}^{M_g}\tilde{p}_{k}^I=\sum_{i=1}^{I}q^i\rightarrow
1=\sum_{k=1}^{M_g}p_k(I\rightarrow\infty)$. Therefore, $\tilde{p}_k^I\rightarrow p_k$ as $I\rightarrow\infty$ for each $k\in\{1,\cdots, M_g\}$.

\textbf{Step 4.} For any \(t\ge 1\) and \(g^t\in\mathcal{G}^t\) with \(g_t=g\), suppose that the term
\[
\frac{\beta^{t}\pi(g^{t}\mid g_0)}
{\sum_{s=1}^{\infty}\sum_{h^s\in\mathcal{G}^s:\,h_s=g}\beta^s\pi(h^s\mid g_0)}
\]
corresponds to \(q^i\) in the non-increasing rearrangement in \textbf{Step 1}. Then we set
$$
k_t(g^t)=k_i,
$$
where \(k_i\) is defined in \textbf{Step 2}. Therefore, we have
$$
\frac{1}
{\sum_{s=1}^{\infty}\sum_{h^s\in\mathcal{G}^s:\,h_s=g}\beta^s\pi(h^s\mid g_0)} \sum_{t=1}^{\infty}\sum_{g^t\in\mathcal{G}^t:\,g_t=g}\left[\beta^t\pi(g^t|g_0)\right]\vec{e}_{k_t(g^t)}=\lim_{I\rightarrow\infty}\vec{\tilde{p}}^I=\vec{p},
$$
and we finish the proof. \hfill $ \Box $
        \end{itemize}

\subsubsection*{Proof of Lemma \ref{lem:infinite_Ramsey_convex_twostate_text}}
    By straightforward computation, we have
    $$
    f'(\ell_t;b_{t-1},g_t)=-\frac{b_{t-1}}{(\ell_t-g_t)^2}+\frac{1}{2}\frac{1-\frac{\ell_t}{2}}{(1-\ell_t)^{\frac{3}{2}}},
    $$
    and
    $$
    f''(\ell_t;b_{t-1},g_t)=\frac{2b_{t-1}}{(\ell_t-g_t)^3}+\frac{1}{2}\frac{1-\frac{\ell_t}{4}}{(1-\ell_t)^{\frac{5}{2}}}.
    $$
    When $b_{t-1}\ge 0$, we know that $f''(\ell_t;b_{t-1},g_t)>0$, and hence $f$ is convex. Since $f$ is continuous, we conclude that $f$ has a unique minimal point $g_t+\epsilon\le \ell_t^*(b_{t-1},g_t)\le 1-\epsilon$. \newline 
    Furthermore, for any $0\le b_{t-1}^{1}<b_{t-1}^2$, we know that
    $$
    \begin{aligned}
        f^*(b_{t-1}^1,g_t)=&f(\ell_t^*(b_{t-1}^1,g_t);b_{t-1}^1,g_t)\\\le& f(\ell_t^*(b_{t-1}^2,g_t);b_{t-1}^1,g_t)\\\le& f(\ell_t^*(b_{t-1}^2,g_t);b_{t-1}^2,g_t)=f^*(b_{t-1}^2,g_t),
    \end{aligned}
    $$
    and
    $$
    f^*(b_{t-1}^1,g_t)=\frac{b_{t-1}^1}{\ell_t-g_t}+\frac{\ell_t}{2\sqrt{1-\ell_t}}-1\ge 0+\frac{g_t+\epsilon}{2\sqrt{1-(g_t+\epsilon)}}-1>-1.
    $$
    Therefore,  $f^*(b_{t-1},g_t)\ge \frac{g_t+\epsilon}{2\sqrt{1-(g_t+\epsilon)}}-1>-1$ is increasing w.r.t. $b_{t-1}$. \hfill $\Box$
\subsubsection*{Proof of Proposition \ref{prop:prop_bn1_twostate}}
    According to the definition, 
    $$
    f(\ell_0;b_{-1},g_0)=\frac{b_{-1}}{\ell_0}+\frac{\ell_0}{2\sqrt{1-\ell_0}}-1.
    $$
    Therefore, $f^*(b_{-1},g_0)=\frac{1}{k}$ if and only if
    $$
    \frac{b_{-1}}{\ell_0}+\frac{\ell_0}{2\sqrt{1-\ell_0}}-1\ge \frac{1}{k},\quad \forall \epsilon\le \ell_0\le 1-\epsilon,
    $$
    and
    $$
    \frac{b_{-1}}{\ell_0^*(b_{-1},g_0)}+\frac{\ell_0^*(b_{-1},g_0)}{2\sqrt{1-\ell_0^*(b_{-1},g_0)}}-1= \frac{1}{k},
    $$
    for some $\epsilon\le \ell_0^*(b_{-1},g_0)\le 1-\epsilon$; or equivalently,
    \begin{equation}\label{eq:lem_prop_bn1_property}
    \max_{\epsilon\le \ell_0\le 1-\epsilon}\ell_0\left( 1+\frac{1}{k}-\frac{\ell_0}{2\sqrt{1-\ell_0}}\right)=b_{-1}.
    \end{equation}
    The existence is straightforward from \eqref{eq:lem_prop_bn1_property} for any $K_1>0$. We take $$K_1=100,\,C_1=\frac{g^*}{4}\left(1-\frac{g^*}{4\sqrt{1-\frac{g^*}{2}}}\right).$$
    When $k>K_1$, $\epsilon=e^{-3k}<\frac{g^*}{2}<1-e^{-3k}=1-\epsilon$. On one hand, taking $\ell_0=\frac{g^*}{2}$, we have
    \begin{equation}\label{eq:lem_prop_bn1_mid1}
    \max_{\epsilon\le \ell_0\le 1-\epsilon}\ell_0\left( 1+\frac{1}{k}-\frac{\ell_0}{2\sqrt{1-\ell_0}}\right)\ge \frac{g^*}{2} \left(1-\frac{g^*}{4\sqrt{1-\frac{g^*}{2}}}\right)=2C_1.
    \end{equation}
    On the other hand, if we restrict $\ell_0$ to the interval $[\epsilon,C_1]$, we have
    \begin{equation}\label{eq:lem_prop_bn1_mid2}
    \max_{\epsilon\le \ell_0\le C_1}\ell_0\left( 1+\frac{1}{k}-\frac{\ell_0}{2\sqrt{1-\ell_0}}\right)\le (1+\frac{1}{100})C_1<2C_1. 
    \end{equation}
    The inequalities \eqref{eq:lem_prop_bn1_mid1} and \eqref{eq:lem_prop_bn1_mid2} imply that $\ell_0^*(b_{-1},g_0)>C_1$. \hfill $\Box$

\subsubsection*{Proof of Proposition \ref{prop:small_debt}}
 When $g_t=g^2$, according to \eqref{eq:ramsey_accumulate_infty_twostate} and the transversality condition, we have
$$
\begin{aligned}
\frac{b_{t-1}}{\ell_t-g^2}&\le \mathbb{E}_0\sum_{t=0}^{\infty}\beta^th(\ell_t)\le \frac{1}{1-\beta(1-\delta)}h(g^2+\epsilon)+\frac{\beta\delta}{(1-\beta)(1-\beta(1-\delta))}h(g^1+\epsilon)
\\&=\frac{1}{1-\beta(1-\delta)}\int_{g^*-\sigma+\epsilon}^{g^*}\left(-h'(\ell)\right) d \ell+\frac{\beta\delta}{(1-\beta)(1-\beta(1-\delta))}h(\epsilon)\\&=\frac{-h'(g^*)}{1-\beta}\sigma+o(\sigma),
\end{aligned}
$$
which implies the conclusion.\footnote{We in fact prove a stronger result that $\frac{b_t}{\ell_t-g^2}\le C_2\sigma$.}
\hfill $\Box$

\subsubsection*{Proof of Proposition \ref{prop_lb_twostate}}
We consider $k>\max\{K_1,K_2\}$ and proceed in two steps.
\begin{itemize}
\item \textbf{Step 1. Construct a feasible lottery solution.} 

We choose $\ell_0=\ell_0^*(b_{-1},g_0),\, b_0=\frac{\sigma}{2}h(g^*-\frac{\sigma}{2}):=C_3(\sigma)\sigma^2$, where $C_3(\sigma)\rightarrow -\frac{1}{4}h'(g^*)$ as $\sigma\rightarrow0$; and $b_t=0$, $\forall t\ge 1$. 

At $t=1$, when $g_t=g^2$, we take $\ell_1(g^2)=g^*-\frac{\sigma}{2}$.\footnote{This requires $\epsilon<\frac{\sigma}{2}$, which is always true when $k>1$.}  Then 
$$f(\ell_1(g^2);b_0,g^2)=\frac{\frac{\sigma}{2}h(g^*-\frac{\sigma}{2})}{(g^*-\frac{\sigma}{2})-(g^*-\sigma)}-h(g^*-\frac{\sigma}{2})=0,$$
and the forward looking constraint \eqref{eq:ramsey_accumulate_infty_twostate} with $t=1,\,g_t=g^2$ is satisfied.

At $t=1$, when $g_t=g^1$, the forward looking constraint \eqref{eq:ramsey_accumulate_infty_twostate} with $t=1,\,g_t=g^1$ is written
\begin{equation}\label{eq:g2g3cons_proplb}
f(\ell_1(g^1);b_{0},g^1)=\frac{C_3(\sigma)\sigma^2}{\ell_1(g^1)}-1+\frac{\ell_1(g^1)}{2\sqrt{1-\ell_1(g^1)}}\le 0.
\end{equation}
It is straightforward to verify that there exists $K_3>0$, s.t. when $k>K_3$, there exists $\ell^L_1(g^1):=2C_3(\sigma)\sigma^2<\ell^H_1(g^1):=\frac{g^*}{2}$ s.t. the inequality \eqref{eq:g2g3cons_proplb} holds for $\ell^L_1(g^1)$ and $\ell^H_1(g^1)$. We allow $\ell_1(g^1)$ to be a lottery that takes the value $\ell_1^L(g^1)$ with probability $p^L$ and $\ell_1^H(g^1)$ with probability $p^H$, then the constraint \eqref{eq:ramsey_accumulate_infty_twostate} for $t=0$ can be relaxed to
\begin{equation}\label{eq:cons0_lot}
\frac{1}{k}=f(\ell_0^*(b_{-1},g_0);b_{-1},g_0)\le \beta (1-\delta)\left[p^L\frac{b_{0}}{\ell^L_{1}(g_1)}+p^H\frac{b_{0}}{\ell^H_{1}(g_1)}\right]+\beta \delta \frac{b_0}{\left(g^*-\frac{\sigma}{2}\right)-\left(g^*-\sigma\right)}.
\end{equation}
When $k>\frac{2}{\beta(1-\delta)}$, we take $p^L=2/(\beta(1-\delta)k)$ and $p^H=1-2/(\beta(1-\delta)k)$. It is straightforward to verify that \eqref{eq:cons0_lot} is satisfied.

For $t\ge 2$, we choose $\ell_t\equiv g^*$. The constraint \eqref{eq:ramsey_accumulate_infty_twostate} for $t\ge 2$ is straightforward to verify.
\item \textbf{Step 2. Estimate the utility of this solution.} 

When $k>\max\{1,K_3,\frac{2}{\beta(1-\delta)}\}$,  the utility of this feasible lottery solution is
$$
\small
\begin{aligned}
&\underbrace{\log \ell_0+\sqrt{1-\ell_0}}_{t=0}\\&+\underbrace{\beta (1-\delta)\left[\frac{2}{\beta(1-\delta)k}(\log \ell_1^L(g^1)+\sqrt{1-\ell_1^L(g^1)})+(1-\frac{2}{\beta(1-\delta)k})(\log \ell_1^H(g^1)+\sqrt{1-\ell_1^H(g^1)})\right]}_{t=1,\,g_t=g^1}\\
&+\underbrace{\beta \delta \left[\log (\ell_1(g^2)-g^2)+\sqrt{1-\ell_1(g^2)}\right]}_{t=1,\, g_t=g^2}+\underbrace{R}_{t\ge 2}\\
\ge &\underbrace{\log C_1}_{t=0}+\underbrace{\frac{2}{k}\log 2C_3(\sigma)\sigma^2+\beta(1-\delta )(1-\frac{2}{\beta(1-\delta)k})\log\frac{g^*}{2}}_{t=1,g_t=g^1}+\underbrace{\beta\delta \log(\frac{\sigma}{2})}_{t=1,g_t=g^2}+\underbrace{R}_{t\ge 2}\\
\ge &\underbrace{\log C_1}_{t=0}+\underbrace{(\frac{2}{k}\log 2C_3(\sigma)-4+\log \frac{g^*}{2})}_{t=1,g_t=g^1}-\underbrace{\beta e^{-5k}(k+\log 2)}_{t=1,g_t=g^2}+\underbrace{R}_{t\ge 2}.
\end{aligned}
$$
There exist $C_4,K_4>0$, s.t. when $k>K_4$, we have
$$
\underbrace{\log C_1}_{t=0}+\underbrace{(\frac{2}{k}\log 2C_3(\sigma)-4+\log \frac{g^*}{2})}_{t=1,g_t=g^1}-\underbrace{\beta e^{-5k}(k+\log 2)}_{t=1,g_t=g^2}\ge C_4.
$$
It suffices to control $R$. Indeed, we have
$$
\begin{aligned}
R\ge& \sum_{t=2}^{\infty} \beta^tp(g_t=g^1)\log g^*+\sum_{t=2}^{\infty}\beta^t p(g_t\ne g^1)\log \epsilon\\
\ge &\frac{\beta^2}{1-\beta}\log g^* -3k\sum_{t=2}^{\infty}\beta^t\left(1-p(g_t=g^1)\right)\\
\ge &\frac{\beta^2}{1-\beta}\log g^*-3k\sum_{t=2}^{\infty}\beta^t\left(1-p(\{g_n=g^1,\,n=1,\cdots,t\})\right)\\
=&\frac{\beta^2}{1-\beta}\log g^*-3k\sum_{t=2}^{\infty}\beta^t\left[1-(1-\delta)^{t}\right]\\
\ge &\frac{\beta^2}{1-\beta}\log g^*-3ke^{-5k}\sum_{t=2}^{\infty}t\beta^t,
\end{aligned}
$$
where the second to last inequality holds by Bernoulli's inequality. There exist $C_5,K_5>0$, s.t. when $k>K_5$, we have
$$
\frac{\beta^2}{1-\beta}\log g^*-3ke^{-5k}\sum_{t=2}^{\infty}(t-1)\beta^t\ge C_5.
$$
Therefore, when $k>K_{\text{lot}}:=\max\{1,K_1,K_2,K_3,\frac{2}{\beta(1-\delta)},K_4,K_5\}$, the utility of this solution is not less than $C_{\text{lot}}:=C_4+C_5$. \hfill $\Box$
\end{itemize}

\subsubsection*{Proof of Proposition \ref{prop_ub_twostate}}
According to Proposition \ref{prop:small_debt}, when $k>K_2$, $b_0\le C_2\sigma$ and $\frac{b_0}{\ell_1(g^2)-g^2}\le  C_2\sigma$. 

Therefore, the constraint \eqref{eq:ramsey_accumulate_infty_twostate} for period 0 yields that
$$
\begin{aligned}
\frac{1}{k}=f^*(\ell_0(b_{-1},g_0);b_{-1},g_0)\le f(\ell_0;b_{-1},g_0)\le& \beta\mathbb{E}_0\frac{b_0}{\ell_1-g_1}\\=& \beta(1-\delta)\frac{b_0}{\ell_1(g^1)}+\beta \delta \frac{b_0}{\ell_1(g^2)-g^2}\\\le& \beta\frac{C_2\sigma}{\ell_1(g^1)}+\beta \delta C_2\sigma.
\end{aligned}
$$
Hence there exists $K_6>0$ such that, when $k>K_6$, 
$$
\frac{1}{2k}\le \beta\frac{C_2\sigma}{\ell_1(g^1)},
$$
implying that
$$
\ell_1(g^1)\le 2k\beta C_2\sigma.
$$
The utility at $t=1$ when $g_t=g^1$ is not greater than
 $$
\beta (1-\delta )\left[\log k+\log \sigma+\log2\beta C_2+\sqrt{1-0}\right]= \beta(1-\delta)\log \sigma+ \beta(1-\delta)\log k+ C(\beta,C_2).
$$
The total utility across periods is then bounded above by $\beta(1-\delta)\log \sigma+ \beta(1-\delta)\log k+\tilde{C}$ for some $\tilde{C}$ that is independent of $k$, due to the upper bound of $\log(\cdot-g_t)$ and $\sqrt{1-\cdot}$. \hfill $\Box$

\subsubsection*{Proof of Proposition \ref{prop:IM_lott_fb}}

We prove a stronger version in which we impose a debt restriction $b\in[\underline{b},\bar{b}]$, where $\underline{b}<0$ and $\bar{b}=\max\{b_0,1\}$. Clearly, $W_{\underline{c}}(b,0,g)$ is decreasing w.r.t. $b$. The proof can be divided into two steps.

\begin{itemize}
    \item \textbf{Step 1. Show that $W_{\underline{c}}(\underline{b},0,g)\le V^{\text{FB}}(g)\quad \forall g\in\mathcal{G}$.} According to the functional equation \eqref{eq:fe}, we have
    $$
    \begin{aligned}
    W_{\underline{c}}(\underline{b},0,g)\le& \sup_{\ell,b'}u(\ell-g)+v(\ell)+\beta\mathbb{E}_{g}W_{\underline{c}}(b',0,g')\\
    = &u(\ell^{\text{FB}}(g)-g)+v(\ell^{\text{FB}}(g))+\beta\mathbb{E}_gW_{\underline{c}}(\underline{b},0,g'),\quad \forall g\in\mathcal{G},
    \end{aligned}
    $$
    implying that $W_{\underline{c}}(\underline{b},0,g)\le V^{\text{FB}}(g),\quad \forall g\in\mathcal{G}.$
    \item \textbf{Step 2. Show that $\liminf_{\underline{c}\rightarrow\zeta}W_{\underline{c}}(\bar{b},0,g)\ge V^{\text{FB}}(g)\quad \forall g\in\mathcal{G}$.} 
    According to the functional equation \eqref{eq:fe}, taking $\ell_0=\ell^{\text{FB}}(g),\, b_0=\frac{\underline{c}}{2}$, we have
    \begin{equation}\label{eq:estimatemain_ICM}
    \begin{aligned}
    W_{\underline{c}}(\bar{b},0,g)\ge& \inf_{\gamma\ge 0}D_g(\gamma;\underline{c}):=u(\ell^{\text{FB}}(g)-g)+v(\ell^{\text{FB}}(g))\\&+\gamma\left[u'(\ell^{\text{FB}}(g)-g)(\ell^{\text{FB}}(g)-g)+v'(\ell^{\text{FB}}(g))\ell^{\text{FB}}(g)-\bar{b}u'(\ell^{\text{FB}}(g)-g)\right]\\
    &+\beta\mathbb{E}_{g}W_{\underline{c}}(\frac{\underline{c}}{2},\gamma,g').
    \end{aligned}
    \end{equation}
    When $\underline{c}-\zeta \rightarrow 0$, for the problem with initial debt $\frac{\underline{c}}{2}$ and $g'\in\mathcal{G}$, one can consider a feasible allocation $c_t\equiv \underline{c},\,b_t\equiv \frac{\underline{c}}{2}$, and have
    \begin{equation}\label{eq:estimate1_ICM}
    W_{\underline{c}}(\frac{\underline{c}}{2},\gamma,g')\ge \gamma \frac{\underline{c}}{2}u'(\underline{c})+\frac{1}{1-\beta}\left[u(\underline{c})+v(\bar{\ell})\right].
    \end{equation}
    We combine \eqref{eq:estimatemain_ICM} and \eqref{eq:estimate1_ICM}, and obtain
    $$
    D_g(\gamma;\underline{c})\ge \frac{1}{2}\underline{c}u'(\underline{c})\left[\gamma(\beta+I_g(\underline{c}))+J_g(\underline{c})\right],
    $$
    and
    $$
    \inf_{\gamma\ge 0}D_g(\gamma;\underline{c})\le W_{\underline{c}}(\bar{b},0,g)\le V^{\text{FB}}(g),
    $$
    where
    $$
    \begin{aligned}
    I_g(\underline{c})&:=\frac{2\left[u'(\ell^{\text{FB}}(g)-g)(\ell^{\text{FB}}(g)-g)+v'(\ell^{\text{FB}}(g))\ell^{\text{FB}}(g)-\bar{b}u'(\ell^{\text{FB}}(g)-g)\right]}{u'(\underline{c})\underline{c}}\rightarrow0,\\J_g(\underline{c})&:=2\frac{u(\ell^{\text{FB}}(g)-g)+v(\ell^{\text{FB}}(g))}{u'(\underline{c})\underline{c}}+\frac{2\beta}{1-\beta}\left[\frac{u(\underline{c})}{u'(\underline{c})\underline{c}}+\frac{v(\bar{\ell})}{u'(\underline{c})\underline{c}}\right]\rightarrow0,\,\underline{c}u'(\underline{c})\rightarrow\infty,\quad \text{ as }\underline{c}\rightarrow\zeta.
    \end{aligned}
    $$
    Therefore, for any $\gamma>0$, $D_g(\gamma;\underline{c})\rightarrow\infty(\underline{c}\rightarrow\zeta)$. Hence there exists $\bar{\gamma}_g(\underline{c})>0$ satisfying $\bar{\gamma}_g(\underline{c})\rightarrow0$, s.t. 
    $$
    W_{\underline{c}}(\bar{b},0,g)\ge \inf_{\gamma\ge 0}D_g(\gamma;\underline{c})=\inf_{0\le \gamma \le \bar{\gamma}_g(\underline{c})}D_g(\gamma;\underline{c}).
    $$
    Since $W_{\underline{c}}(\frac{\underline{c}}{2},\gamma,g')\ge W_{\underline{c}}(\frac{\underline{c}}{2},0,g')\ge W_{\underline{c}}(\bar{b},0,g')$, according to \eqref{eq:estimatemain_ICM} we have
    $$
    \begin{aligned}
    W_{\underline{c}}(\bar{b},0,g)\ge& \inf_{0\le \gamma \le \bar{\gamma}_g(\underline{c})}D_g(\gamma;\underline{c})\\
    \ge& u(\ell^{\text{FB}}(g)-g)+v(\ell^{\text{FB}}(g))+\beta\mathbb{E}_{g}W_{\underline{c}}(\bar{b},0,g')\\&\underbrace{-\bar{\gamma}_g(\underline{c})\left|u'(\ell^{\text{FB}}(g)-g)(\ell^{\text{FB}}(g)-g)+v'(\ell^{\text{FB}}(g))\ell^{\text{FB}}(g)-\bar{b}u'(\ell^{\text{FB}}(g)-g)\right|}_{I},
    \end{aligned}
    $$
    where $I\rightarrow0$ as $\underline{c}\rightarrow \zeta$, implying that
    $$\liminf_{\underline{c}\rightarrow\zeta}W_{\underline{c}}(\bar{b},0,g)\ge V^{\text{FB}}(g),\quad \forall g\in\mathcal{G}.
    $$
 \end{itemize}

\section{The computational method} \label{app:algo}
As explained in \cite{shen2025recursive}, a computational method that iterates on
\begin{equation}
\label{eq:iter}
  W^{n+1}(b,\mu,g)=\sup_{\ell,\,\bp}\ \inf_{\gamma\ge0}\
  \Big\{u(c)+v(\ell)+\mu b u'(c)+\gamma\big[u'(c)(c-b)+\ell v'(\ell)\big]
        +\beta\,\mathbb{E}_g\,W^n(\bp,\gamma,g')\Big\}.
\end{equation}
converges to some fixed point of the functional equation (FE) \eqref{eq:fe}, and for a sufficiently large\footnote{In practice, the starting point turns out not to matter for the computation. In theory, one should take a sufficiently large function that is unbounded as $ \mu \rightarrow \infty$.} starting point $ W^0 $, this fixed point gives the optimal value of the planner's problem with ex-post lotteries \eqref{eq:Ramsey-l}.
We discretize savings and labor supply, but leave the multiplier to be continuous.
Suppose the set of possible labor choices $\hat{\mathcal{L}}=\{\ell_i\}_{i=1}^{n_{\ell}}$ contains $ n_{\ell} $ elements, assumed to lie strictly between zero and one, and the set of possible bond choices $\hat{\mathcal{B}}=\{b_j\}_{j=1}^{n_b}$ contains $ n_b $ elements. By feasibility, there is also a set of possible consumption choices $\hat{\mathcal{C}}=\{c_i\}_{i=1}^{n_{\ell}}$ containing $ n_{\ell} $ elements.
For discretized labor and bond choices, the solution to the FE can be computed exactly (up to machine precision).

\subsection{Theoretical foundations of the method}
To explain the algorithm we first state the  following two lemmas.
\begin{lemma}\label{lem:env}
Fix $(b,g)\in\hat{\mathcal
{B}}\times \mathcal{G}$, and suppose that for every $\bp\in\hat{\mathcal{B}}$ and every $g'\in\mathcal{G}$ the
continuation $W^n(\bp,\cdot,g')$ is continuous, and $ W^{n+1} $ solves (\ref{eq:iter}). Then either $W^{n+1}(b,\cdot,g)\equiv-\infty$, or
$W^{n+1}(b,\cdot,g)$ is
finite, convex, piecewise linear and continuous in $\mathbb{R}_+$, with
at most $n_{\ell}$ pieces.
\end{lemma}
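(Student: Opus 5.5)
We fix $(b,g)$ and, for each discrete pair $(\ell,\bp)\in\hat{\mathcal L}\times\hat{\mathcal B}$, study the inner problem in \eqref{eq:iter} as a function of $\mu$. Write $c=\ell-g$ and define
\[
\Phi_{\ell,\bp}(\mu):=\inf_{\gamma\ge0}\Big\{u(c)+v(\ell)+\gamma\big[u'(c)(c-b)+\ell v'(\ell)\big]+\beta\,\mathbb{E}_g W^n(\bp,\gamma,g')\Big\}+\mu\, b\, u'(c).
\]
The key structural observation is that $\mu$ enters only through the additive term $\mu b u'(c)$, and this term does not involve $\gamma$. It can therefore be pulled out of the infimum.

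The first step is to record that the remaining infimum is a number $A_{\ell,\bp}\in[-\infty,\infty)$ independent of $\mu$. Continuity of $W^n(\bp,\cdot,g')$ is used only to make this infimum well defined as a value, possibly $-\infty$. It is never $+\infty$, since the expression at $\gamma=0$ is finite whenever $W^n(\bp,0,g')$ is finite. If $W^n(\bp,0,g')=-\infty$, the pair contributes $-\infty$. It follows that $\Phi_{\ell,\bp}(\mu)=A_{\ell,\bp}+\mu\,b\,u'(\ell-g)$ is either identically $-\infty$ or an affine function of $\mu$.

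The second step takes the supremum over the finite set of pairs:
\[
W^{n+1}(b,\mu,g)=\max_{\ell,\bp}\Phi_{\ell,\bp}(\mu).
\]
If every $A_{\ell,\bp}=-\infty$, then $W^{n+1}(b,\cdot,g)\equiv-\infty$. Otherwise $W^{n+1}(b,\cdot,g)$ is a maximum of finitely many finite affine functions on $\mathbb{R}_+$. Such a function is finite, convex, continuous and piecewise linear.

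The third step bounds the number of pieces. The slope of $\Phi_{\ell,\bp}$ is $b\,u'(\ell-g)$, which depends only on $\ell$ and not on $\bp$. For fixed $\ell$, the maximum over $\bp$ is therefore again affine, with intercept $\max_{\bp}A_{\ell,\bp}$. Hence $W^{n+1}$ is the upper envelope of at most $n_\ell$ affine functions. The upper envelope of $m$ affine functions on an interval has at most $m$ linear pieces: each function attains the maximum on a (possibly empty) interval by convexity, and each piece is attributable to at least one line. This gives at most $n_\ell$ pieces.

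The main subtlety is the bookkeeping of $-\infty$ values and making sure $+\infty$ never occurs. I would handle this by noting that $\gamma=0$ is admissible, so each infimum is bounded above by a finite quantity whenever $W^n(\bp,0,g')>-\infty$. The pairs with $A_{\ell,\bp}=-\infty$ are then simply discarded from the maximum.
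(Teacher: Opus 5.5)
Your proposal is correct and follows essentially the same route as the paper: you pull the $\gamma$-free term $\mu\, b\, u'(c)$ out of the infimum, collapse the maximum over $b'$ for each labor grid point into a single constant intercept (the paper's $m_i$), and write $W^{n+1}(b,\cdot,g)$ either as identically $-\infty$ or as the upper envelope of at most $n_\ell$ affine functions of $\mu$. Your added observation that each line attains the maximum on a single interval, which yields the bound of $n_\ell$ pieces, supplies a step the paper leaves implicit.
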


\begin{proof}
For a given $ (b,g)\in\hat{\mathcal{B}}\times\mathcal{G} $ and fixed labor choice $\ell_i\in\hat{\mathcal{L}}$ and bond choice $\bp_j\in\hat{\mathcal{B}}$, we write the expression inside
\eqref{eq:iter} as
\[
  \underbrace{u(c_i)+v(l_i)}_{\text{free of }\mu,\gamma}
  \;+\;\mu\,b\,u'(c_i)\;+\;\underbrace{\gamma\underbrace{\big[u'(c_i)(c_i-b)+\ell_i v'(\ell_i)\big]}_{s_i(b,g)}
        +\;\beta\,\underbrace{\mathbb{E}_g\,W^n(\bp_j,\gamma,g')}_{\bar{W}^n(b_j',\gamma,g)}}_{\psi_{i,j}(\gamma)}.
\]
Note that
$$ \max_{j} \inf_{\gamma \ge 0}  \left\{ u(c_i)+v(\ell_i) + \mu\,b\,u'(c_i) +\psi_{i,j}(\gamma) \right\} = 
u(c_i)+v(\ell_i) + \mu\,b\,u'(c_i) + \max_{j} \inf_{\gamma \ge 0} \psi_{i,j}(\gamma),
$$
and therefore, for each $i$, the expression above is an affine function in $ \mu $ if it is finite.
Let $m_i=\max_j \inf_{\gamma\ge0}\psi_{i,j}(\gamma)$, and note that
\[
W^{n+1}(b,\mu,g)=\max_i\left\{u(c_i)+v(\ell_i)+m_i+\mu b\,u'(c_i)\right\}
\]
is the pointwise maximum of $n_\ell$ affine functions (or functions identically equal to $-\infty$ when $m_i=-\infty$). Therefore, it is either identically $-\infty$ or a convex, piecewise linear, and continuous function.
\end{proof}

Together with the monotone convergence result from \cite{shen2025recursive}, the lemma implies that there exists a convex, continuous, and piecewise linear solution to the FE \eqref{eq:fe} whose value is the optimal value of \eqref{eq:Ramsey-l}. 

The lemma also directly suggests a computational method for solving the sup-inf problem \eqref{eq:iter} given $W^n$. Starting from a function $W^0$ that is convex, piecewise linear, and continuous on $\mathbb{R}_+$, the iteration preserves these properties for all $W^n$. Given $W^n$, it follows that $\psi_{i,j}(\cdot)$ is also convex and piecewise linear by definition. The following lemma then shows that each subproblem $\inf_{\gamma}\psi_{i,j}(\gamma)$ can be solved efficiently; the sup-inf problem can then be solved by combining these subproblems with an exhaustive search over $(i,j)$.
\begin{lemma}\label{lem:inner}
Let $\psi_{i,j}(\cdot)$ be convex piecewise linear with at most $K$ pieces identified by points $X_1,\cdots, X_K$.  For each $ k=1,\ldots,K $, denote its slope on
$[X_k,X_{k+1})$ by $ S_k$. Then
\begin{enumerate}
\item $\inf_{\gamma\ge0}\psi_{i,j}=-\infty$ if and only if $S_{K}<0$;
\item otherwise the infimum is attained at $\gamma^{*}=X_{k^{*}}$, where $k^{*}$ is
the first index with $ S_{k^{*}}\ge0$.
\end{enumerate}
\end{lemma}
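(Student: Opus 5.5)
The lemma concerns minimizing a convex piecewise linear function of one variable over the half-line $\gamma\ge 0$. I would argue directly from the slope structure, taking $X_1=0$ as the left endpoint of the domain, $X_{K+1}=+\infty$, and $S_K$ as the slope of the final, unbounded piece $[X_K,\infty)$. Convexity means the slopes are non-decreasing: $S_1\le S_2\le\cdots\le S_K$. I would record this first. It follows because a convex function has a non-decreasing right derivative, and on $[X_k,X_{k+1})$ that right derivative equals $S_k$.

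\textbf{Part 1.} Suppose $S_K<0$. Then for $\gamma\ge X_K$ we have $\psi_{i,j}(\gamma)=\psi_{i,j}(X_K)+S_K(\gamma-X_K)$, which tends to $-\infty$. So the infimum is $-\infty$.

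Conversely, suppose $S_K\ge 0$. Monotonicity of the slopes gives a well-defined first index $k^*$ with $S_{k^*}\ge 0$. Part 2 below then shows the infimum is attained and finite, since $\psi_{i,j}$ takes finite values on its pieces. Together these give the ``if and only if''.

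\textbf{Part 2.} I would show $\psi_{i,j}(\gamma)\ge\psi_{i,j}(X_{k^*})$ for all $\gamma\ge0$, splitting into two cases.

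For $\gamma\ge X_{k^*}$, write $\psi_{i,j}(\gamma)-\psi_{i,j}(X_{k^*})$ as the integral of the right derivative over $[X_{k^*},\gamma]$. Every slope met on this interval is some $S_k$ with $k\ge k^*$, hence $S_k\ge S_{k^*}\ge 0$ by monotonicity, so the difference is non-negative.

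For $0\le\gamma<X_{k^*}$, the interval $[\gamma,X_{k^*}]$ lies in pieces with index $k<k^*$. These have $S_k<0$ by the minimality of $k^*$. Hence $\psi_{i,j}$ is strictly decreasing there, and $\psi_{i,j}(\gamma)>\psi_{i,j}(X_{k^*})$.

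Continuity of $\psi_{i,j}$ justifies gluing the pieces in this argument. It holds because $\psi_{i,j}$ is convex and finite on $[0,\infty)$; continuity at the boundary point $0$ is part of the standing setup, since $W^n$ is continuous by Lemma \ref{lem:env}. So the minimum is attained at $\gamma^*=X_{k^*}$.

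\textbf{Main obstacle.} There is no real obstacle; this is an elementary lemma. The only care needed is bookkeeping at the edge cases. If $k^*=1$, then $\gamma^*=X_1=0$ is the boundary point. If $S_{k^*}=0$, any point of $[X_{k^*},X_{k^*+1}]$ is a minimizer and the statement selects the left endpoint. I would also state explicitly that the pieces are closed intervals joined continuously, so that piecewise slope integration is valid.
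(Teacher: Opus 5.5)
Your proof is correct and follows the same route as the paper's proof: convexity makes the slopes non-decreasing, the infimum is $-\infty$ exactly when the slope $S_K$ of the last, unbounded piece is negative, and otherwise the minimum is at the left endpoint of the first piece with non-negative slope. Your version adds details the paper's short argument leaves implicit, namely integrating the slopes piece by piece and using continuity to glue the pieces together.
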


\begin{proof}
$\psi_{i,j}$ is convex and piecewise linear with right-derivative $S_k$ on the
$k$-th segment. On the last, unbounded segment $\psi_{i,j}$ decreases without bound
exactly when that slope is negative. Otherwise a convex function on $[0,\infty)$ is
minimized where its right-derivative first becomes non-negative, which is the left
end of segment $k^{*}$.
\end{proof}
\subsection{Computing the solution of the FE}
Starting from a large function $W^0$, each iteration records the intercepts and slopes 
of $ u(c_i)+v(\ell _i) + m_i + \mu b u'(c_i) $ given $(b,g)\in\hat{\mathcal{B}}\times\mathcal{G}$. The slopes $a_i(b,g)=b\,u'(c_i)$ are computed once\footnote{ In the computations below we take $ g $ to take two values, and we use the
resolution $n_{\ell}=153$, $n_b=305$, resulting in 93330 potential slopes}. 
In each iteration of  (\ref{eq:iter}) it suffices to keep track of the array of intercepts, $F_i^n(b,g)=u(c_i)+v(\ell_i)+m_i$,  which consists of $n_{\ell}\times n_b\times2$ numbers,
with $-\infty$ allowed.  
Each iteration then maps $F$ to a new array $F^{+}$ in four steps:

\emph{Step 1: the envelope of each continuation on $[0,\infty)$.} For every
$(\bp_j,g')\in\hat{\mathcal{B}}\times\mathcal{G}$, sort the finite lines by slope $a_i(b'_j,g')$ (keeping the larger intercept $F_i(b'_j,g')$ at equal
slopes) and build the upper envelope with the standard monotone stack: a new line
removes the line on top of the stack whenever it overtakes it no later than that
line took over. Clip the result to $\gamma\ge0$. The output is a list of
breakpoints $0=x_1<x_2<\dots<x_K$ and the active line on each segment. The cost is
$O(n_{\ell}\log n_{\ell})$.

\emph{Step 2: the expected continuation.} $ \bar W(\bp_j,\cdot,g)$ is a
$\pi_{gg'}$-weighted sum of the two envelopes. Merging the two sorted breakpoint lists gives
\[
  \bar W(\bp_j,\gamma,g)=C_k+\gamma S_k\qquad\text{for }\gamma\in[X_k,X_{k+1}),
\]
with $X_1=0$, the breakpoints $X_k$ the union of the two envelopes' breakpoints,
and $S_k$, $C_k$ the probability-weighted slope and intercept of the active lines.
Each merged breakpoint records which envelope and which pair of labor lines it
came from. If $\pi_{gg'}>0$ and $W(\bp_j,\cdot,g')\equiv-\infty$, then
$\bar W(\bp_j,\cdot,g)\equiv-\infty$.

\emph{Step 3: the inner infimum, exactly.} $\psi_{i,j}(\gamma)$ can be computed by $\psi_{i,j}(\gamma)=\gamma s_i(b,g)+\beta\bar{W}(b_j',\gamma,g)$. 
We apply Lemma \ref{lem:inner}.
Because the slopes $S_k$ are sorted, $k^{*}$ is found by binary search in
$O(\log K)$ steps, with $K\le2n_{\ell}$. 

\emph{Step 4: update.} For each $(b,g)\in\mathcal{\hat{B}}\times\mathcal{G}$ and feasible labor point $\ell_i\in\hat{\mathcal{L}}$,
\[
\begin{aligned}
  F^{+}_i(b,g)=&u(c_i)+v(\ell_i)+m_i\\
  =&u(c_i)+v(\ell_i)+\max_{j}\ \inf_{\gamma\ge0}\big\{\gamma s_i(b,g)+\beta\bar W(\bp_j,\gamma,g)\big\},
\end{aligned}
\]
which is $-\infty$ if the infimum is unbounded for every $\bp_j\in\hat{\mathcal{B}}$. The maximizer $b^{'*}_j$,
the multiplier $\gamma^{*}$ and the index $k^{*}$ are stored as the policy of line
$i$.

\subsection{Optimal Policies}
As explained in detail in \cite{shen2025recursive},  to recover policies, one needs to carry, besides debt, the multiplier and  the
spending state, a \emph{promise} (the continuation of the Euler equations) $\phi$ along the path. In the Ramsey setting, this is simply the value of $b\,u'(c)$ that last period's constraint requires, that is, the promised marginal utility of consumption times debt.
By Lemma~\ref{lem:env} the value function is the upper envelope of the lines
$F_i(b,g)+\mu\,b\,u'(c_i)$, so the slope of a line is exactly the promise its allocation
delivers, and $\partial W(b,\cdot,g)$ is the set of promises the planner can deliver. Three
facts make the recovery immediate:
\begin{itemize}
\item a promise strictly between the slopes of two lines that are tied at a kink $\mu^{*}$
can only be delivered by randomizing over them, with weights
$w=(\phi-b\,u'(c_1))/(b\,u'(c_2)-b\,u'(c_1))$;
\item each branch keeps its own continuation $(\bp,\gamma^{*})$ and, hence, satisfies its own
implementability constraint. 
\item Every interior multiplier lands on a kink of next
period's value function, so the promise handed forward is generally set-valued. Which
element is passed on is pinned by complementary slackness: a branch with $\gamma^{*}>0$ must
have its own constraint hold with equality.
\end{itemize}
The value of the resulting lottery is $\sum_i w_iF_i(b,g)$, which equals
$\min_\mu\{W(b,\mu,g)-\mu\phi\}$; in the computed examples the two agree to $10^{-6}$ or better, and
the promise is delivered to machine precision.

\section{Detailed derivations} \label{app:sec2}
\subsection{The complete markets problem} \label{app:cm_derivation}
Consider the problem
\begin{align}  \label{eq:ramsey_obj1}&\max_{(c_t,\ell_t,\tau_t,T_t\ge 0)_{t=0}^{\infty},(q_t,b_t)_{t=1}^{\infty}} \mathbb{E}_{g_0}\sum_{t=0}^{\infty}\beta^t\left[u(c_t)+v(\ell_t)\right] , \\ 
\label{eq:ramsey_ponzi}\mbox{ s.t. } &\sup_{t\ge 0,\, g^t\in\mathcal{G}^t}|b_t(g^t)|<\infty;\\&\forall t\ge 0,\,g^t\in\mathcal{G}^t:\notag\\
&\label{eq:ramsey_agentrc0}c_t(g^t)=b_t(g^t)+(1-\tau_t(g^t))\ell_t(g^t)-\sum_{g_{t+1}\in\mathcal{G}}q_{t+1}(g^{t+1})b_{t+1}(g^{t+1})+T_t(g^t);\\ 
& (1-\tau_t(g^t)) u'(c_t(g^t))=-v'(\ell_t(g^t)); \\ \label{eq:ramsey_agenteuler}
& -q_{t+1}(g^{t+1}) u'(c_t(g^t))+\beta \pi(g_{t+1}|g_{t})u'(c_{t+1}(g^{t+1})) = 0,\quad \forall g_{t+1}\in\mathcal{G}; \\ 
& \sum_{g_{t+1}\in\mathcal{G}}q_{t+1}(g^{t+1}) b_{t+1}(g^{t+1}) = b_{t}(g^t) - \tau_t(g^t) \ell_t(g^t) + g_t+T_t(g^t).
\label{eq:ramsay_gbc}
\end{align}

The following lemma gives a reduced formula for the Ramsey problem.
\begin{lemma}\label{lem:ramsey_conssimplified}
    Given $\{(\ell_t)_{t=0}^{\infty},b_{0}\}$, the following are equivalent:
    \begin{enumerate}
        \item  There exists $(c_t,\tau_t,T_t\ge 0)_{t=0}^{\infty},(q_t,b_t)_{t=1}^{\infty}$, s.t. $\{(c_t,\ell_t,\tau_t,T_t\ge 0)_{t=0}^{\infty},(q_t,b_t)_{t=1}^{\infty}\}$ satisfies \eqref{eq:ramsey_ponzi} to \eqref{eq:ramsay_gbc};
        \item The inequality
\begin{equation}\label{eq:ramsey_accumulte0}
    b_0u'(c_0)\le \mathbb{E}_{g_0}\sum_{t=0}^{\infty}\beta^t\left[u'(c_t)c_t+v'(\ell_t)\ell_t\right],
\end{equation}
where $c_t(g^t)=\ell_t(g^t)-g_t,\,\forall t\ge 0,\,g^t\in\mathcal{G}^t$
is satisfied.
    \end{enumerate}
\end{lemma}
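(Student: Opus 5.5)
The plan is to prove the two directions separately. The engine in both is the substitution of the agent's first-order conditions into the government budget constraint, followed by iterating forward and using the bound $\sup_t|b_t|<\infty$ to kill the tail.

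For (1)$\Rightarrow$(2), start from the household budget constraint \eqref{eq:ramsey_agentrc0} together with \eqref{eq:ramsay_gbc}. Combining them gives the resource constraint $c_t=\ell_t-g_t$. Next, multiply \eqref{eq:ramsey_agentrc0} by $u'(c_t(g^t))$ and make three substitutions:
\begin{enumerate}
\item the intratemporal condition, $(1-\tau_t)u'(c_t)=-v'(\ell_t)$;
\item the Euler equation \eqref{eq:ramsey_agenteuler}, $q_{t+1}(g^{t+1})u'(c_t)=\beta\pi(g_{t+1}|g_t)u'(c_{t+1})$;
\item the sign condition $T_t\ge 0$.
\end{enumerate}
This yields the recursive inequality
$$b_t(g^t)u'(c_t(g^t))\le u'(c_t)c_t+v'(\ell_t)\ell_t+\beta\,\mathbb{E}_{g_t}\big[b_{t+1}u'(c_{t+1})\big].$$
Iterating this from $t=0$ to $T$ and taking conditional expectations gives
$$b_0u'(c_0)\le \mathbb{E}_{g_0}\sum_{t=0}^{T}\beta^t\big[u'(c_t)c_t+v'(\ell_t)\ell_t\big]+\beta^{T+1}\mathbb{E}_{g_0}\big[b_{T+1}u'(c_{T+1})\big].$$
The remainder vanishes as $T\to\infty$. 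To see this, note that $b_t$ is uniformly bounded and $u'(c_t)\le u'(\underline{c})$ because $c_t\ge\underline{c}>\zeta$. The summands are likewise bounded on the compact labor set, so the series converges absolutely, and \eqref{eq:ramsey_accumulte0} follows.

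For (2)$\Rightarrow$(1), we construct the supporting objects from $(\ell_t)$. First set $c_t=\ell_t-g_t$. Define $\tau_t$ from the intratemporal condition, $1-\tau_t=-v'(\ell_t)/u'(c_t)$, and $q_{t+1}$ from the Euler equation. Debt is then defined as the present value of future primary surpluses:
$$b_{t}(g^t):=\frac{1}{u'(c_t)}\,\mathbb{E}_{g^t}\sum_{s\ge t}\beta^{s-t}\big[u'(c_s)c_s+v'(\ell_s)\ell_s\big]\quad\text{for } t\ge1 .$$
With this choice the recursion holds with equality, so $T_t=0$ for $t\ge1$. At $t=0$ the given $b_0$ may be strictly below the right-hand side of \eqref{eq:ramsey_accumulte0}. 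The gap, multiplied by $1/u'(c_0)$, is assigned to $T_0\ge0$, which is exactly where the inequality in (2) is used. With these definitions, \eqref{eq:ramsey_agentrc0} and \eqref{eq:ramsay_gbc} are verified by the same algebra as before, read backwards.

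The main obstacle is ensuring that the constructed $b_t$ satisfies \eqref{eq:ramsey_ponzi}, i.e.\ $\sup|b_t|<\infty$. This needs a uniform bound on the numerator, which holds because $c_s\in[\underline{c},\bar\ell]$ and $\ell_s\le\bar\ell<1$ make $|u'(c)c+v'(\ell)\ell|$ bounded on this compact set. It also needs a uniform lower bound on $u'(c_t)$, which comes from $u'(c_t)\ge u'(\bar\ell)>0$ by monotonicity of $u'$. Together these give $|b_t|\le K/\big((1-\beta)u'(\bar\ell)\big)$.

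The remaining checks are routine bookkeeping: that the tail term vanishes in the first direction, and that the definitions of $\tau_t$ and $q_{t+1}$ are consistent (note that $\tau_t<1$ because $v'<0$).
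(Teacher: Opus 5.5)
Your proposal is correct and follows essentially the same route as the paper. For (1)$\Rightarrow$(2) you derive the one-period inequality from the resource constraint, the intratemporal and Euler conditions and $T_t\ge 0$. For the converse you take $b_t$ to be the discounted value of future surpluses divided by $u'(c_t)$, set $T_t=0$ for $t\ge 1$, and let $T_0$ absorb the slack. The only difference is that you make explicit two steps the paper leaves implicit (``then give'', ``easy to see''): the vanishing of the tail term $\beta^{T+1}\mathbb{E}_{g_0}[b_{T+1}u'(c_{T+1})]$, and the uniform bound on the constructed debt, both obtained from $c_t\in[\underline{c},\bar{\ell}]$.
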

\begin{proof}
We first show that $1\Rightarrow 2$. Since
$c_t(g^t)=\ell_t(g^t)-g_t,\,\forall t\ge 0,\,g^t\in\mathcal{G}^t$,
we have
    \begin{equation}\label{eq:lem_reduce_budget}
    \begin{aligned}
  b_t(g^t)=&c_t(g^t)-(1-\tau_t(g^t))\ell_t(g^t)+\sum_{g_{t+1}\in\mathcal{G}}q_{t+1}(g^{t+1})b_{t+1}(g^{t+1})-T_t(g^t)\\\le& c_t(g^t)-(1-\tau_t(g^t))\ell_t(g^t)+\sum_{g_{t+1}\in\mathcal{G}}q_{t+1}(g^{t+1})b_{t+1}(g^{t+1})\\=&c_t(g^t)+\frac{v'(\ell_t(g^t))}{u'(c_t(g^t))}\ell_t(g^t)+\beta\frac{\sum_{g_{t+1}\in\mathcal{G}}\pi(g_{t+1}|g_t)u'(c_{t+1}(g^{t+1}))b_{t+1}(g^{t+1})}{u'(c_t(g^t))},
\end{aligned}
\end{equation}
We multiply both sides of \eqref{eq:lem_reduce_budget} by $u'(c_t(g^t))$ and obtain
\begin{equation}\label{eq:lem_reduce_recursive_budget}
b_t(g^t)u'(c_t(g^t))\le  \left[u'(c_t(g^t))c_t(g^t)+v'(\ell_t(g^t))\ell_t(g^t)\right]+\beta\sum_{g_{t+1}\in\mathcal{G}}\pi(g_{t+1}|g_t)u'(c_{t+1}(g^{t+1}))b_{t+1}(g^{t+1}).
\end{equation}
The inequalities \eqref{eq:lem_reduce_recursive_budget} then give \eqref{eq:ramsey_accumulte0}.

Now we show that $2\Rightarrow 1$. If 
\eqref{eq:ramsey_accumulte0} is satisfied, for any $t\ge 0,\,g^t\in\mathcal{G}^t$ we construct
$$
\small
\begin{aligned}
c_t(g^t)&=\ell_t(g^t)-g_t,\, \tau_t(g^t)=1+\frac{v'(\ell_t(g^t))}{u'(c_t(g^t))}< 1,\,\\T_0&=\frac{1}{u'(c_0)}\mathbb{E}_{g_0}\sum_{t=0}^{\infty}\beta^t\left[u'(c_t)c_t+v'(\ell_t)\ell_t\right]-b_0,\\\forall g_{t+1}\in\mathcal{G}:\\
q_{t+1}(g^{t+1})&=\frac{\beta \pi(g_{t+1}|g_t) u'(c_{t+1}(g^{t+1}))}{u'(c_t(g^t))}>0,\\b_{t+1}(g^{t+1})&=\frac{1}{u'(c_{t+1}(g^{t+1}))}\left\{\mathbb{E}_{g^{t+1}}\sum_{n=0}^{\infty}\beta^n\left[u'(c_{t+1+n}(g^{t+1+n})c_{t+1+n}(g^{t+1+n})+v'(\ell_{t+1+n}(g^{t+1+n})\ell_{t+1+n}(g^{t+1+n})\right]\right\},\\T_{t+1}(g^{t+1})&=0.
\end{aligned}
$$
and it is easy to see that $\{(c_t,\ell_t,\tau_t,T_t)_{t=0}^{\infty},(q_t,b_t)_{t=1}^{\infty}\}$ satisfies \eqref{eq:ramsey_ponzi} to \eqref{eq:ramsay_gbc}.\end{proof}

\subsection{Regularity Assumptions}\label{app:regu}
In this subsection, we introduce a technical assumption on the problem that allows us to replace the sup-inf-sup formula by a max-min-max formula, thereby guaranteeing the attainability of \eqref{eq:expost_CM}. 

We equip the space $L$ with the product topology. Then $L$ is a compact set by Tychonoff's theorem. It is straightforward to verify that the maps
$$
\bar{F}:L\rightarrow \mathbb{R},\quad (\ell_t)_{t=1}^{\infty}\mapsto \mathbb{E}_{g_0}\sum_{t=1}^{\infty}\beta^tF(c_t,\ell_t)
$$
and
$$
\bar{G}:L\rightarrow \mathbb{R},\quad (\ell_t)_{t=1}^{\infty}\mapsto \mathbb{E}_{g_0}\sum_{t=1}^{\infty}\beta^tG(c_t,\ell_t)
$$
are continuous. Therefore, the third operator $\sup_{(\ell_t)_{t\ge 1}}$ can be replaced by $\max_{(\ell_t)_{t\ge 1}}$ without any additional assumptions. For the first two operators, we introduce the following assumption on the problem.
\begin{assumption}\label{ass:regu}
    The initial $b_0$ satisfies the condition that there exists a compact set $L_0\subseteq [\underline{c}+g_0,\bar{\ell}]$, s.t. \begin{itemize}
        \item the supremum of $\ell_0$ lies on $L_0$, i.e. $$\sup_{\ell_0\in L_0}\inf_{\lambda\ge 0}\sup_{(\ell_t)_{t=1}^\infty} \mathcal{L}((\ell_t)_{t=0}^{\infty};\lambda)=\sup_{\ell_0\in[\underline{c}+g_0,\bar{\ell}]}\inf_{\lambda\ge 0}\sup_{(\ell_t)_{t=1}^\infty} \mathcal{L}((\ell_t)_{t=0}^{\infty};\lambda)$$
        \item The uniform Slater's condition holds for $\ell_0\in L_0$, i.e. there exists $\epsilon>0$, s.t. 
        $$
        \max_{(\ell_t)_{t=1}^{\infty}\in L}\mathbb{E}_{g_0}\sum_{t=0}^{\infty}\beta^tG(c_t,\ell_t)-b_0u'(c_0)\ge \epsilon,\quad\forall \ell_0\in L_0.
        $$
    \end{itemize}
\end{assumption}
The uniform Slater's condition ensures that the second operator $\inf_{\lambda\ge 0}$ is attained at some $\lambda\in[0,\bar{\lambda}]$ uniformly for all $\ell_0\in L_0$, and hence it can be replaced by $\min_{0\le \lambda \le \bar{\lambda}}$. The compactness of $L_0$ and the continuity of
\[
\min_{0\le \lambda\le \bar{\lambda}}\max_{(\ell_t)_{t\ge 1}}\mathcal{L}(\ell_0,(\ell_t)_{t\ge 1};\lambda)
\]
with respect to $\ell_0$ ensure that the first operator $\sup_{\ell_0}$ can be replaced by $\max_{\ell_0\in L_0}$.

Assumption \ref{ass:regu} is not restrictive in many problems of interest. It only rules out some extreme cases in which, at the optimal labor supply in the first period, the only feasible subsequent labor supply is the one that satisfies the constraint $$
  b_0 u'(\ell_0-g_0) \le E_0 \sum_{t=0}^\infty \beta ^t (u'(\ell_t-g_t) (\ell_t-g_t) + v'(\ell_t) \ell_t)
$$ with equality. If one considers the discretized-$\ell_0$ problem and Slater's condition holds for some $\ell_0$, then the assumption is satisfied almost surely.

\subsection{Overview of numerical examples} \label{app:examples}
In order to establish \lq\lq often\rq\rq \ lotteries arise in calibrated examples, we solve 85 examples numerically. The parameter specifications are listed in Table \ref{tab:sec6-specs}.
\begin{table}[htbp]
\centering
\small
\caption{List of calibrations. Throughout, $u(c)=\log c$, $\beta=0.9$, $n_l=153$.
Spending is a two-state Markov chain with $\pi_{gg}=p$, $\pi_{gg'}=1-p$ ($p=0.5$ is i.i.d.).}
\label{tab:sec6-specs}
\begin{tabular}{@{}clcccll@{}}
\toprule
\# & $v(l)$ & $\eta$ & $\psi$ & $g$ & $p$ & $b_{\max}$ \\
\midrule
1  & $\psi(1-l)^{\eta}$ & $0.2,\ 0.5$        & 1   & $\{0,\,0.5\}$   & 0.5 & 0.5 \\
   & $\psi\log(1-l)$    & --                  & 1   & $\{0,\,0.5\}$   & 0.5 & 0.5 \\
\addlinespace
2  & $\psi\log(1-l)$    & --                  & 0.7 & $\{0.1,\,0.3\}$ & 0.5 & $1.5,\ 0.5,\ 0.25,\ 0.2,\ 0.15,\ 0.1$ \\
\addlinespace
3  & $\psi(1-l)$        & 1                   & --  & $\{0.1,\,0.3\}$ & 0.5 & $2.5,\ 4$ \\
\addlinespace
4  & $\psi(1-l)^{\eta}$ & $0.1,\dots,0.9$     & 1   & $\{0,\,0.5\}$   & 0.5 & $0.5,\ 1$ \\
\addlinespace
5  & $\psi(1-l)^{\eta}$ & $0.1,\dots,0.9$     & 1   & $\{0.1,\,0.3\}$ & 0.5 & $0.5,\ 1$ \\
\addlinespace
6  & $\psi(1-l)^{\eta}$ & 0.5                 & 1   & $\{0,\,0.5\}$   & $0.5,0.6,0.7,0.8,0.9,$ & 0.5 \\
   &                    &                     &     &                 & $0.92,0.94,0.95,0.97,0.99$ & \\
\addlinespace
7  & $\psi(1-l)^{\eta}$ & $0.1,\dots,0.5$     & 1   & $\{0.1,\,0.3\}$ & 0.97 & 0.5 \\
   & $\psi\log(1-l)$    & --                  & 1   & $\{0.1,\,0.3\}$ & 0.97 & 0.5 \\
   & $\psi\log(1-l)$    & --                  & 1   & $\{0,\,0.5\}$   & 0.97 & 0.5 \\
\addlinespace
8  & $\psi(1-l)^{\eta}$ & $0.1,\dots,0.5$     & 1   & $\{0.1,\,0.3\}$ & 0.99 & 0.5 \\
   & $\psi\log(1-l)$    & --                  & 1   & $\{0.1,\,0.3\}$ & 0.99 & 0.5 \\
\bottomrule
\end{tabular}
 
\end{table}

Of the $85$ calibrations solved, all exhibit genuine ties at every grid. At
$n_b=305$ and in the window $\mu\le2.45$ the count runs from $12$ (log leisure,
$g\in\{0,0.5\}$, $p=0.97$) to $299$. Where a calibration was solved on several grids, the count
grows roughly in proportion to $n_b$, as it must for a curve of ties in $(b,\mu)$.
Multiplicity of the value function is therefore the rule. For a typical calibration, there are always regions in the state space where lotteries become optimal.

Whether the multiplication is actually reached along a simulated path (and therefore whether lotteries are part of the optimal solution) is a much more subtle issue.
In Table \ref{table:hits} we list some simple cases where multiplicity turns out to be reachable. We list the number of \lq\lq hits\rq\rq, that is the number of ties that are reached with positive probability for some initial condition with $ \mu=0$.

\begin{table}[H]
\centering
\small
\begin{tabular}{lrrrr}
\toprule
& \multicolumn{3}{c}{hits, $n_l=153$} & hits\\
\cmidrule(lr){2-4}
calibration & $n_b\!=\!153$ & $305$ & $609$ & $n_l\!=\!305$\\
\midrule
A\ \ $\eta=0.2$, i.i.d., $g\in\{0,0.5\}$ & $17$ & $26$ & $36$ & $36$\\
B1\ $\eta=0.5$, $p=0.94$, $g\in\{0,0.5\}$ &  $3$ & $2$ & $12$ & $3$\\
B2\ $\eta=0.5$, $p=0.95$ & $4$ & $6$ & $16$ & $6$\\
B3\ $\eta=0.5$, $p=0.97$ & $5$ & $7$ & $8$ & $4$\\
B4\ $\eta=0.5$, $p=0.99$ &  $2$ & $3$ & $2$ & $6$\\
C1\ log leisure, $p=0.97$, $g\in\{0.1,0.3\}$ &  $0$ & $12$ & $36$ & $10$\\
C2\ log leisure, $p=0.99$, $g\in\{0.1,0.3\}$ & $10$ & $8$ & $14$ & $8$\\
D\ \ $\eta=0.5$, i.i.d., $g\in\{0,0.8\}$ & $11$ & $22$ & $26$ & $26$\\
\bottomrule
\end{tabular}
\caption{Lotteries along simulated paths, in the rightmost column $ n_b=305 $.}
\label{table:hits}
\end{table}

Hit counts are grid-sensitive because
refinement resolves more reachable co-states that land on the same ties. The
locations of those ties, and the multipliers that reach them, are not.

While in the listed cases, lotteries are part of the optimal solution,
many cases listed in Table \ref{tab:sec6-specs} exhibit zero hits. In these cases, lotteries are not part of the optimal solution. Table~\ref{table:nohits} lists them.

\begin{table}[H]
\centering
\footnotesize
\setlength{\tabcolsep}{4pt}
\begin{tabular}{llrl}
\toprule
family & cells & ties at $n_b\!=\!305$ & reachable multiplicity\\
\midrule
$g\in\{0,0.5\}$ i.i.d., $\eta=0.1$--$0.9$, $b_{\max}\in\{0.5,1\}$ & $18$ & $74$--$242$ & only case A\\
$g\in\{0.1,0.3\}$ i.i.d., $\eta=0.1$--$0.9$, $b_{\max}\in\{0.5,1\}$ & $18$ & $107$--$288$ & none\\
$g\in\{0,0.5\}$, $\eta=0.5$, $p=0.6$--$0.92$ & $5$ & $142$--$184$ & none\\
$g\in\{0,0.5\}$, $p=0.97$, $\eta=0.1$--$0.4$ & $4$ & $203$--$288$ & none\\
$g\in\{0.1,0.3\}$, $p\in\{0.97,0.99\}$, $\eta=0.1$--$0.5$ & $10$ & $249$--$299$ & none\\
$g\in\{0.1,0.3\}$, log, $\psi=0.7$, $b_{\max}=0.1$--$1.5$ & $6$ & $41$--$213$ & none\\
$g\in\{0.1,0.3\}$ i.i.d., $b_{\max}=1.5$, $\eta=0.1$--$1$ & $8$ & $62$--$216$ & none\\
$g\in\{0,g_2\}$ i.i.d., $\eta=0.5$, $g_2\in\{0.1,\ldots,0.4,0.6,0.7\}$ & $6$ & $50$--$285$ & none\\
$g\in\{0,0.5\}$, log leisure, i.i.d.\ and $p=0.97$ & $2$ & $0$ and $12$ & none\\
$g\in\{0.1,0.3\}$ i.i.d., $\eta=1$, $b_{\max}=4$ & $1$ & $25$ & none\\
\bottomrule
\end{tabular}
\caption{Calibrations without reachable lotteries}\label{table:nohits}
\end{table}

\end{appendices}

\end{document}